\newcommand\ens[1]{\left\{ #1 \right\}}
\newcommand\enstq[2]{\left\{ #1~|~#2 \right\}}
\newcommand\inter[2]{\llbracket #1,#2 \rrbracket}
\newcommand\oper[4]{#1 \limits_{\substack{#2}}^{#3}{#4}}
\newcommand{\N}{\mathds N}
\newcommand{\To}{\mathds T}
\newcommand{\Sp}{\mathbb S}
\newcommand{\bfdex}[1]{\textbf{#1}\index{#1}}
\newcommand{\tand}{\text{ and }}
\newcommand{\T}{\mathcal{T}}
\renewcommand{\P}{\mathcal{P}}
\newcommand{\control}[2]{ ($#2+#1$) .. #1}
\newcommand{\comprep}{\text{c-rep}}
\newcommand{\cT}{\text{c-}\mathcal{T}}
\newcommand{\B}{\mathcal{B}}
\newcommand{\obullet}[1]{{\bullet}({#1})}%\overset{\hbox{\tiny $\bullet$}}{#1}
\newtheorem{theorem}{Theorem}[section]
\newtheorem*{thm*}{Theorem}
\newtheorem{defn}[theorem]{Definition}
\newtheorem{proposition}[theorem]{Proposition}
\newtheorem{corollary}[theorem]{Corollary}
\newtheorem{remark}[theorem]{Remark}
\newtheorem{lem}[theorem]{Lemma}
\title{A Structural Approach to Tree Decompositions of Knots and Spatial Graphs\thanks{This research was partially supported by the ANR project SoS (ANR-17-CE40-0033).}} 
\author{Corentin Lunel\thanks{LIGM, CNRS, Univ. Gustave Eiffel, ESIEE Paris, F-77454 Marne-la-Vall\'ee, France, corentin.lunel2@univ-eiffel.fr} \and Arnaud de Mesmay\thanks{LIGM, CNRS, Univ. Gustave Eiffel, ESIEE Paris, F-77454 Marne-la-Vall\'ee, France, arnaud.de-mesmay@univ-eiffel.fr}}
\date{}
\definecolor{dred}{rgb}{0.6,0,0}
\definecolor{dblue}{rgb}{0.3,0.3,0.8}
\def\BState{\State\hskip-\ALG@thistlm}
\begin{document}

\maketitle

\begin{abstract}
  Knots are commonly represented and manipulated via diagrams, which are decorated planar graphs. When such a knot diagram has low treewidth, parameterized graph algorithms can be leveraged to ensure the fast computation of many invariants and properties of the knot. It was recently proved that there exist knots which do not admit any diagram of low treewidth, and the proof relied on intricate low-dimensional topology techniques. In this work, we initiate a thorough investigation of tree decompositions of knot diagrams (or more generally, diagrams of spatial graphs) using ideas from structural graph theory. We define an obstruction on spatial embeddings that forbids low tree width diagrams, and we prove that it is optimal with respect to a related width invariant. We then show the existence of this obstruction for knots of high representativity, which include for example torus knots, providing a new and self-contained proof that those do not admit diagrams of low treewidth. This last step is inspired by a result of Pardon on knot distortion.
\end{abstract}

\section{Introduction}

A (tame) \textbf{knot} is a polygonal embedding of the circle $S^1$ into $\mathbb{R}^3$, and two knots are considered equivalent if they are \textbf{isotopic}, i.e., if they can be continuously deformed one into the other without introducing self-intersections. The \textbf{trivial knot}, or \textbf{unknot}, is, up to equivalence, the embedding of $S^1$ as a triangle. The investigation of knots and their mathematical properties dates back to at least the nineteenth century~\cite{knotbook} and has developed over the years into a very rich and mature mathematical theory. From a computational perspective, a fundamental question is to figure out the best algorithm testing whether a given knot is the unknot. Note that it is neither obvious from the definitions that a non-trivial knot exists, nor that the problem is decidable. This was famously posed as an open problem by Turing~\cite{turing}. The current state of the art on this problem is that it lies in \textbf{NP}~\cite{Hass_trivial_NP} and co-\textbf{NP}~\cite{Lackenby_co-NP}, a quasipolynomial time algorithm has been announced~\cite{Lackenby_talk} but no polynomial-time algorithm is known. More generally, algorithmic questions surrounding knots typically display a wide gap between the best known algorithms (which are almost never polynomial-time, and sometimes the complexity is a tower of exponentials) and the best known complexity lower bounds. We refer to the survey of Lackenby for a panorama of algorithms in knot theory~\cite{Lackenby_Algorithms}.

In recent years, many attempts have been made to attack such seemingly hard problems via the route of parameterized algorithms. In particular, the \textbf{treewidth} of a graph is a parameter quantifying how close a graph is to a tree, and thus algorithmic problems on graphs of low treewidth can often be solved very efficiently using dynamic programming techniques on the underlying tree structure of instance. The concept of \textbf{branchwidth}, which we also use below, is somewhat equivalent and always within a constant factor of treewidth~\cite{Graph_Minors_X}. One approach is to study a knot via one of its \textbf{diagrams} (see Figure~\ref{pic_knots_spatial}), that is, a decorated graph obtained by a planar projection where it is indicated on each vertex which strand goes over or under. Then, if such a diagram has low treewidth, one can apply these standard dynamic programming techniques to solve seemingly hard problems very efficiently. While this approach has not yet been successful for unknot recognition beyond treewidth $2$~\cite{Bodlaender_Treewidth_2}, it has proved effective for the computation of many knot invariants, including: Jones and Kauffman polynomials~\cite{Malowsky_Jones} (which are known to be $\#P$-hard to compute in general~\cite{Jaeger}), HOMFLY-PT polynomials~\cite{Burton_Homfly}, and quantum invariants~\cite{maria,Burton_2018}. Since any knot admits infinitely many diagrams, these algorithms naturally lead to the following question raised by Burton~\cite[p.2694]{burton}, and Makowsky and Maris\~{n}o~\cite[p.755]{Malowsky_Jones}: do all knots admit diagrams of constant treewidth, or conversely does there exist a family of knots for which \textit{all} the diagrams have treewidth going to infinity. This question was answered recently by de Mesmay, Purcell, Schleimer and Sedgwick~\cite{Mesmay_treewidth} who proved that, among other examples, torus knots $T_{p,q}$ are such a family. The proof relies at its core on an intricate result of Hayashi and Shimokawa~\cite{Hayashi_spliting} on \emph{thin position} of \emph{multiple Heegaard splittings}.

\begin{figure}[h]
\begin{center}
%\tikzsetnextfilename{example_knots}
\begin{tikzpicture}[scale = 0.75]
\def\ec{1.5}
\clip (-2.2,-2) rectangle (16,3);
\draw [thick] (0,0.5) circle (2cm);

%Trivial knot
\begin{scope}[xshift = 3cm]
\coordinate (a) at (1,2);
\coordinate	(b) at (0,1);
\coordinate	(c) at (1,0);
\coordinate	(d) at (2,1);
\coordinate	(e) at (0.5,-1);
\coordinate	(f) at (1.5,-1);
\node [fill,white,circle, inner sep = \ec pt] (na) at (a) {};
\node [fill,white,circle, inner sep = \ec pt] (nb) at (b) {};
\node [fill,white,circle, inner sep = \ec pt] (nc) at (c) {};
\node [fill,white,circle, inner sep = \ec pt] (nd) at (d) {};
\node [fill,white,circle, inner sep = \ec pt] (ne) at (e) {};
\node [fill,white,circle, inner sep = \ec pt] (nf) at (f) {};

\begin{scope}[thick]
%Crossings
\draw (na.south east) -- (na.north west);
\draw (nb.south east) -- (nb.north west);
\draw (nc.south west) -- (nc.north east);
\draw (nd.south west) -- (nd.north east);
\draw (ne.south) -- (ne.north);
\draw (nf.east) -- (nf.west);

%strands
\draw (na.south west) -- (nb.north east);
\draw (nb.south east) -- (nc.north west);
\draw (nc.north east) -- (nd.south west);
\draw (nd.north west) -- (na.south east);
\draw (ne.east) -- (nf.west);
\draw (nc.south west) .. controls +(-0.4,-0.4) and \control{(ne.north)}{(0,0.3)};
\draw (nc.south east) .. controls +(0.4,-0.4) and \control{(nf.north)}{(0,0.3)};
\draw (na.north east) .. controls +(1.4,1.4) and \control{(nd.north east)}{(1.4,1.4)};
\draw (na.north west) .. controls +(-1.4,1.4) and \control{(nb.north west)}{(-1.4,1.4)};
\draw (ne.south) .. controls +(0,-1) and \control{(nf.south)}{(0,-1)};
\draw (nb.south west) .. controls +(-1,-1) and \control{(ne.west)}{(-1,0)};
\draw (nd.south east) .. controls +(1,-1) and \control{(nf.east)}{(1,0)};
\end{scope}
\end{scope}

%Figure eight knot
\begin{scope}[xshift = 7cm]
\coordinate (a) at (1,2);
\coordinate	(b) at (0,1);
\coordinate	(c) at (1,0);
\coordinate	(d) at (2,1);
\coordinate	(e) at (0.5,-1);
\coordinate	(f) at (1.5,-1);
\node [fill,white,circle, inner sep = \ec pt] (na) at (a) {};
\node [fill,white,circle, inner sep = \ec pt] (nb) at (b) {};
\node [fill,white,circle, inner sep = \ec pt] (nc) at (c) {};
\node [fill,white,circle, inner sep = \ec pt] (nd) at (d) {};
\node [fill,white,circle, inner sep = \ec pt] (ne) at (e) {};
\node [fill,white,circle, inner sep = \ec pt] (nf) at (f) {};

\begin{scope}[thick]
%Crossings
\draw (na.south east) -- (na.north west);
\draw (nb.south west) -- (nb.north east);
\draw (nc.south east) -- (nc.north west);
\draw (nd.south west) -- (nd.north east);
\draw (ne.south) -- (ne.north);
\draw (nf.east) -- (nf.west);

%Strands
\draw (na.south west) -- (nb.north east);
\draw (nb.south east) -- (nc.north west);
\draw (nc.north east) -- (nd.south west);
\draw (nd.north west) -- (na.south east);
\draw (ne.east) -- (nf.west);
\draw (nc.south west) .. controls +(-0.4,-0.4) and \control{(ne.north)}{(0,0.3)};
\draw (nc.south east) .. controls +(0.4,-0.4) and \control{(nf.north)}{(0,0.3)};
\draw (na.north east) .. controls +(1.4,1.4) and \control{(nd.north east)}{(1.4,1.4)};
\draw (na.north west) .. controls +(-1.4,1.4) and \control{(nb.north west)}{(-1.4,1.4)};
\draw (ne.south) .. controls +(0,-1) and \control{(nf.south)}{(0,-1)};
\draw (nb.south west) .. controls +(-1,-1) and \control{(ne.west)}{(-1,0)};
\draw (nd.south east) .. controls +(1,-1) and \control{(nf.east)}{(1,0)};
\end{scope}
\end{scope}

%Spatial qraph
\begin{scope}[xshift = 13cm, yshift = 0.5cm]
\clip (-3,-2.5) rectangle (3,2.5);
%\draw[help lines] (-2.5,-2.5) grid (2.5,2.5);
\coordinate (1) at (-1,2);
\coordinate	(2) at (1,2);
\coordinate	(3) at (-2,1);
\coordinate	(4) at (-1,1);
\coordinate	(5) at (1,1);
\coordinate	(6) at (2,1);
\coordinate	(7) at (0,0.5);
\coordinate	(8) at (0,-1);
\coordinate	(9) at (0,-2);
\foreach \i in {1,...,9}
{
	\node (n\i) [fill, circle, white, inner sep =\ec pt] at (\i) {};
}

\begin{scope}[thick]
%Crossings
\node [fill, circle, inner sep =\ec pt] at (1) {};
\node [fill, circle, inner sep =\ec pt] at (2) {};
\draw (n3.south) -- (n3.north);
\draw (n4.south) -- (n4.north);
\draw (n5.west) -- (n5.east);
\draw (n6.west) -- (n6.east);
\draw (n7.north west) -- (n7.south east);
\draw (n8.north east) -- (n8.south west);
\draw (n9.north east) -- (n9.south west);

%Strands
\draw (n1.east) -- (n2.west);
\draw (n1.west) .. controls +(-0.5,0) and \control{(n3.north)}{(0,0.5)};
\draw (n1.south) -- (n4.north);
\draw (n2.east) .. controls +(0.5,0) and \control{(n6.north)}{(0,0.5)};
\draw (n2.south) -- (n5.north);
\draw (n3.east) -- (n4.west);
\draw (n3.south) .. controls +(0,-1.5) and \control{(n8.north west)}{(-0.5,0.5)};
\draw (n3.west) .. controls +(-2,0) and \control{(n9.south west)}{(-1,-1)};
\draw (n4.south) .. controls +(0,-0.5) and \control{(n7.south west)}{(-0.3,-0.3)};
\draw (n4.east) .. controls +(0.5,0) and \control{(n7.north west)}{(-0.3,0.3)};
\draw (n5.west) .. controls +(-0.5,0) and \control{(n7.north east)}{(0.3,0.3)};
\draw (n5.south) .. controls +(0,-0.5) and \control{(n7.south east)}{(0.3,-0.3)};
\draw (n5.east) -- (n6.west);
\draw (n6.south) .. controls +(0,-1.5) and \control{(n8.north east)}{(0.5,0.5)};
\draw (n6.east) .. controls +(2,0) and \control{(n9.south east)}{(1,-1)};
\draw (n8.south east) .. controls +(0.3,-0.3) and \control{(n9.north east)}{(0.3,0.3)};
\draw (n8.south west) .. controls +(-0.3,-0.3) and \control{(n9.north west)}{(-0.3,0.3)};
\end{scope}
\end{scope}
\end{tikzpicture}
\caption{Diagrams of two trivial knots on the left, a bowline knot and a knotted spatial graph.}
\label{pic_knots_spatial}
\end{center}
\end{figure}
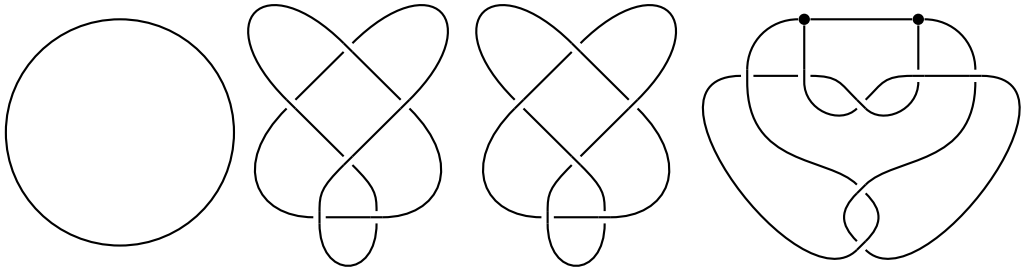

\subparagraph*{Our results.} The main purpose of this work is to provide new techniques to characterize which knots, or more generally which \textbf{spatial graphs} (polygonal embeddings of graphs into $\mathbb{R}^3$, considered up to isotopy, see for example Figure~\ref{pic_knots_spatial}), do not admit diagrams of low treewidth. Our starting point is similar to the one in~\cite{Mesmay_treewidth}: we first observe that if a knot or a spatial graph admits a diagram of low treewidth, then there is a way to sweep $\mathbb{R}^3$ using spheres arranged in a tree-like fashion which intersect the knot a small number of times (Proposition~\ref{P:swtw}). This corresponds roughly to a map $f: \mathbb{R}^3 \rightarrow T$ where $T$ is a trivalent tree, where the preimage of each point interior to an edge is a sphere with a small number of intersections with the knot (we refer to Section~\ref{S:prelim} for the precise technical definitions of all the concepts discussed in this introduction). We call this a \textbf{sphere decomposition}\footnote{Our sphere decompositions are different from the ones in~\cite{Mesmay_treewidth} but functionally equivalent for knots.}, and the resulting measure (maximal number of intersections) the \textbf{spherewidth} of the knot.

Thus, in order to lower bound the treewidth of all the diagrams of a knot, it suffices to lower bound its spherewidth. We provide a systematic technique to do so using a perspective taken from structural graph theory. In the proof of the celebrated Graph Minor Theorem of Robertson and Seymour~\cite{Graph_Minors_XX}, handling families of graphs with bounded treewidth turns out not to be too hard~\cite{Graph_Minors_V}, and in contrast, a large part is devoted to analyzing the structure shared by graphs of large treewidth. There, a fundamental contribution is the concept of \textbf{tangle}\footnote{It turns out that the word \emph{tangle} holds a completely different meaning in knot theory, and, to avoid confusion, in this article we will always use it with the graph theory meaning.}. We refer to Diestel~\cite{monalisa} or Grohe~\cite{grohetangles} for nice introductions to tangles and their applications. Informally, a tangle of order $k$ in a graph $G$ is a choice, for each separation of size at most $k$, of a ``big side'' of that separation, where the highly-connected part of the graph lies. In addition, there are some compatibility properties, in particular no three ``small sides'' should cover the whole graph. Such a tangle turns out to be exactly the structure dual to branchwidth, in the sense that, as is proved in~\cite{Graph_Minors_X}, for any graph $G$, the maximal possible order of a tangle is exactly equal to its branchwidth. We develop a similar concept dual to sphere decompositions which we call a \textbf{bubble tangle}. Informally, a bubble tangle of order $k$ for a knot or spatial graph $K$ is a map that, for each sphere intersecting $K$ at most $k$ times, chooses a ``big side'' indicating where the complicated part of $K$ lies. There are again some compatibility conditions which add topological information to the collection of ``small sides''. Then our first result is the following. 

\begin{theorem}\label{T:duality}
For any knot or spatial graph $K$, the maximum order of a bubble tangle for $K$ is equal to the spherewidth of $K$.
\end{theorem}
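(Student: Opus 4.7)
The plan is to prove both inequalities separately, mirroring the Robertson--Seymour duality between branchwidth and tangles in graphs but adapted to the setting of spheres in $\mathbb{R}^3$ intersecting a spatial graph.

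For the inequality that the maximum bubble tangle order is at most the spherewidth, I would argue by contradiction. Suppose $K$ admits a sphere decomposition $(T,f)$ of width $w$, and assume some bubble tangle $\tau$ of order $k > w$ exists. At each internal trivalent node $v$ of $T$, the three incident edges yield three spheres $S_1, S_2, S_3$, each intersecting $K$ in at most $w < k$ points. The tangle $\tau$ must orient each of them toward a big side, and the three complementary small sides correspond exactly to the three components of $\mathbb{R}^3$ obtained after cutting along a small neighbourhood of $f^{-1}(v)$. These three small sides together cover $K$, contradicting the axiom that no three small sides can cover $K$.

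For the converse inequality, I would proceed by contrapositive: assuming no bubble tangle of order $k+1$ exists, one builds a sphere decomposition of width at most $k$. The natural strategy is an incremental construction. Start with a single large sphere enclosing $K$, giving a trivial one-leaf decomposition, and refine it by repeatedly splitting a leaf ball with a new sphere intersecting $K$ in at most $k$ points. The key claim is that such a splitting sphere can always be found: otherwise, the family of big sides read off from the existing partial decomposition, together with the obvious orientations coming from topological containment, would satisfy the bubble tangle axioms up to order $k+1$, contradicting the hypothesis. Termination would come from a suitable complexity measure on leaf balls, for instance the number of arcs of $K$ they contain.

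The main obstacle I expect lies in the topological step of this inductive construction: in the purely combinatorial graph setting, a tangle obstruction directly produces a separation of the graph that can be inserted into a partial branch decomposition, whereas here one must exhibit a genuine embedded sphere in general position with $K$, sitting inside the chosen leaf ball and strictly reducing its complexity. Arguing that the absence of such a sphere really assembles the choices made so far into a bubble tangle---in particular, checking compatibility with the axioms across spheres that were never placed in the decomposition---is the delicate heart of the proof. This is precisely the point at which the definition of bubble tangle has to be rich enough to make the converse work, yet not so rigid that the direct direction becomes vacuous.
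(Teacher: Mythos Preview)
Your first inequality contains a genuine gap. You assert that at a trivalent node $v$ the three small sides chosen by $\tau$ are exactly the three balls of the double bubble $f^{-1}(v)$, so that they cover $\Sp^3$ and violate (T3). But the tangle is under no obligation to orient all three spheres inward at any given node: for some sphere $S_i$ of the double bubble, the small side might well be the ball $B_j\cup B_k$ containing the other two pieces. What you need is a \emph{specific} trivalent node where all three orientations point inward. The paper obtains this by first proving that the orientation induced by $\tau$ is consistent along each edge of $T$ (this is nontrivial: the preimages vary continuously, spheres may become tangent to $G$, and one must show via a braid-equivalence lemma that the small side cannot flip across such events), then subdividing each edge and directing it. The resulting directed tree must have a sink; leaves are excluded by (T4), degree-two vertices by construction, so the sink is a trivalent node and (T3) is violated there. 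Without the consistency argument you cannot even speak of ``the'' orientation of an edge, let alone find a sink.

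For the converse, your incremental refinement strategy is not the route the paper takes, and the obstacle you flag is real: a partial decomposition only fixes choices for finitely many spheres, which is nowhere near enough to assemble a bubble tangle (the tangle must choose a side for \emph{every} low-weight sphere). The paper instead works on the tangle side. It introduces partial sphere decompositions that \emph{conform} to a collection $\mathcal{A}$ of balls (non-trivial leaves must sit inside balls of $\mathcal{A}$ up to braid equivalence), and proves via Zorn's lemma that any $\mathcal{A}$ satisfying (T1) and (T4) either extends to a bubble tangle of order $k$ or admits a conforming partial sphere decomposition of width $<k$. The maximal extension automatically satisfies (T3) (a violating double bubble yields a tiny conforming decomposition) and (T2) (for a sphere with neither side in $\mathcal{T}$, maximality gives two conforming decompositions, one for each side, which are then glued along that sphere). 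Starting from the $G$-trivial balls and assuming no width-$<k$ decomposition exists gives the tangle. So the transfinite step replaces your termination argument entirely, and the gluing of two partial decompositions is what substitutes for your ``splitting a leaf ball'' move.
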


This provides a convenient and systematic pathway to prove lower bounds on the spherewidth, and thus on the treewidth of all possible diagrams: it suffices to prove the existence of a bubble tangle of high order. However, making choices for the uncountable family of spheres with a small number of intersections with $K$, and then verifying the needed compatibility conditions is very unwieldy. Our second contribution is to provide a way to define such a bubble tangle in the case of knots (or spatial graphs) which are embedded on some surface $\Sigma$ in $\mathbb{R}^3$. Given a surface $\Sigma$ in $\mathbb{R}^3$, a \textbf{compression disk} is a disk properly embedded in $\mathbb{R}^3 \smallsetminus \Sigma$  whose boundary is a non-contractible curve on $\Sigma$. The \textbf{compression-representativity of an embedding} of a knot or spatial graph $K$ on a surface $\Sigma$ in $\mathbb{R}^3$ is the smallest number of intersections between $K$ and a cycle on $\Sigma$ that bounds a compression disk. The \textbf{compression-representativity} of a knot or spatial graph is the supremum of that quantity over all embeddings on surfaces (this was originally defined by Ozawa \cite{ozawa}). Our second theorem is the following.

\begin{theorem}\label{T:representativity}
For any knot or spatial graph $K$ embedded on a surface $\Sigma$ in $\mathbb{R}^3$, there exists a bubble tangle of order $2/3$ of the compression-representativity of the embedding. Therefore, for any knot or spatial graph $K$, there exists a bubble tangle of order $2/3$ of the compression-representativity of $K$.
\end{theorem}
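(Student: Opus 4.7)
The plan is to construct the bubble tangle explicitly by using the embedded surface $\Sigma$ to designate, for every sphere $S$ with $|S \cap K| < \frac{2}{3}r$ (where $r$ denotes the compression-representativity), a canonical ``big side'' dictated by where the non-trivial topology of $\Sigma$ sits. If $\Sigma$ is a sphere there are no compression disks and $r$ is vacuously infinite, so I may assume that $\Sigma$ has positive genus. After a generic perturbation, $S \cap \Sigma$ is a disjoint union of simple closed curves on $\Sigma$. The first technical substep is to show that \emph{every} such curve is contractible on $\Sigma$: by an innermost disk analysis on $S$, an innermost curve of $S \cap \Sigma$ bounds a disk of $S$ properly embedded in $\mathbb{R}^3 \setminus \Sigma$; if that curve were non-contractible on $\Sigma$, this disk would be a compression disk for $\Sigma$ and would therefore meet $K$ in at least $r$ points, exceeding the budget $|S \cap K| < \frac{2}{3}r$. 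Peeling innermost disks inductively propagates contractibility outward to all curves of $S \cap \Sigma$.

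Since every curve of $S \cap \Sigma$ is contractible on $\Sigma$ and $\Sigma$ has positive genus, exactly one component $\Sigma^*_S$ of $\Sigma \setminus (S \cap \Sigma)$ fails to be a disk; all other components are nested disks bounded by the contractible curves of $S \cap \Sigma$. Because $\Sigma^*_S$ is connected with boundary on $S$, it lies entirely in one of the two open balls cobounded by $S$ in $\mathbb{R}^3$, and I declare that ball to be the big side of $S$. Standard isotopy and compactness arguments show that this choice does not depend on the perturbation and behaves coherently under the sphere operations appearing in the axiom list of a bubble tangle, so the single-sphere axioms are immediate.

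The core of the proof then consists in verifying the triple compatibility axiom of a bubble tangle. For three qualifying spheres $S_1, S_2, S_3$, the axiom reduces to showing that their big sides have non-empty common intersection, equivalently $\bigcap_i \Sigma^*_{S_i} \neq \emptyset$. Arguing by contradiction, assume $\Sigma$ is covered by the three families of disks $\Sigma \setminus \Sigma^*_{S_i}$. The union of the boundary curves of these disk families sits on $\Sigma$ and has total $K$-intersection at most $\sum_i |S_i \cap K| < 2r$. The goal is then to extract from this combinatorial data a single non-contractible simple closed curve $\alpha \subset \Sigma$ which bounds a properly embedded disk in $\mathbb{R}^3 \setminus \Sigma$ and satisfies $|\alpha \cap K| < r$, contradicting the defining property of the compression-representativity $r$.

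This extraction is the main obstacle, and it is where the factor $2/3$ is calibrated. The heuristic is that a total $K$-budget of $2r$ distributed across three disk families must admit, by a pigeonhole- or Euler-characteristic-style argument, a subcollection of boundary arcs of $K$-weight strictly less than $r$ whose union still carries a non-contractible loop that cobounds a disk in $\mathbb{R}^3 \setminus \Sigma$. Following the philosophy of Pardon's argument on knot distortion, one surgers and recombines pieces of $S_1, S_2, S_3$ along the contractible curves of $S_i \cap \Sigma$, using innermost/outermost disk reasoning together with the positive genus of $\Sigma$ to produce the desired compression disk. Getting the combinatorics of this tripartite cut-and-paste right, and in particular verifying that the resulting curve is genuinely non-contractible while its $K$-weight falls on the correct side of the $2r \to r$ halving, is the crux of the argument and the source of the constant $2/3$ in the statement.
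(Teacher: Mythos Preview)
Your construction of the candidate bubble tangle and your innermost-disk argument showing that all curves of $S\cap\Sigma$ are contractible both match the paper (this is exactly Lemma~\ref{lem_sphere_compressible} and the definition of the compression bubble tangle). Two minor inaccuracies: first, it is not true that exactly one component of $\Sigma\setminus S$ fails to be a disk (there can be planar annuli among the pieces); the correct invariant, which the paper uses, is that exactly one side of $S$ contains a component of $\Sigma$ that is \emph{not} $\pi_1$-trivial in $\Sigma$. Second, axiom~\ref{def_T3} is not about three arbitrary qualifying spheres but about the three balls induced by a \emph{double bubble}: three disks sharing a common boundary circle. This extra structure is essential to the argument.

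The real gap is that you do not prove~\ref{def_T3}; you only describe a plausible-sounding target (``extract a compressible curve of weight $<r$ from a total budget $<2r$'') and defer to Pardon-style surgery heuristics. The paper's actual mechanism is quite specific and does not appear in your sketch. After arranging (Lemma~\ref{lem_Sigma_disks}) that each $B_i\cap\Sigma$ is a disjoint union of disks, the intersection of the double bubble with $\Sigma$ is a trivalent cellularly embedded graph $\Gamma$ whose total edge weight is strictly less than $r$ (Lemma~\ref{lem_contrad_compress}; here the double-bubble structure is what makes each edge lie on exactly two of the three spheres, giving the factor $2$). The contradiction then comes from Proposition~\ref{prop_1cur_compress}, proved by a \emph{merging process}: one looks at the membrane disk $D=B_1\cap B_2$, forms its dual \emph{membrane tree} $MT(D)$, and grows a subtree that gradually merges $B_1$ into $B_2$. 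At the first moment the merged region on $\Sigma$ stops being a union of disks, it is an annulus whose two boundary curves $b,b'$ are non-contractible, hence compressible. These curves are simple in the duplicated graph but may reuse an edge of $\Gamma$ at most twice, so the set $X$ of edges they use satisfies $\sum_{e\in X}|C(e\cap G)|\ge\tfrac12\mathrm{wt}(b)+\tfrac12\mathrm{wt}(b')\ge r$, contradicting the bound on $\Gamma$. Your ``pigeonhole/Euler-characteristic'' intuition is not how the argument actually goes; the membrane tree and the appearance of an \emph{annulus} (giving two compressible curves rather than one) are the missing ideas.
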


Combining together Theorems~\ref{T:duality} and~\ref{T:representativity} with Proposition~\ref{P:swtw} provides a large class of knots of high spherewidth, and our tools are versatile enough to apply to spatial graphs, while previous ones did not. In particular, observing that torus knots $T_{p,q}$ have high compression-representativity, we obtain the following corollary, which improves the lower bound obtained by~\cite{Mesmay_treewidth}, without relying on deep knot-theoretical tools.

\begin{corollary}\label{C:torusknots}
A torus knot $T(p,q)$ has spherewidth at least $2/3\min(p,q)$, and thus any diagram of $T(p,q)$ has treewidth at least $1/3 \min(p,q)$.
\end{corollary}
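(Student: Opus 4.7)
The plan is to chain together Proposition~\ref{P:swtw}, Theorem~\ref{T:duality}, and Theorem~\ref{T:representativity}, reducing the problem to a computation of the compression-representativity of the standard embedding of $T(p,q)$ on a torus. Given that the main machinery has already been developed in the three cited results, the proof is essentially an application of them plus one geometric lemma.

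First I would realize $T(p,q)$ as a $(p,q)$-curve on the standard unknotted torus $\Sigma \subset \mathbb{R}^3$, whose homology class on $\Sigma$ is $p\mu + q\lambda$, where $\mu$ bounds a disk in the bounded complementary solid torus and $\lambda$ bounds a disk in the unbounded complementary solid torus (after passing to $S^3$). I then need to classify the compression disks for $\Sigma$. Since $\Sigma$ bounds a solid torus on each side, any properly embedded disk in $\mathbb{R}^3 \smallsetminus \Sigma$ with non-contractible boundary in $\Sigma$ is, up to isotopy, a meridian disk of one of these two solid tori; and each solid torus carries a unique isotopy class of meridian curves on its boundary. Thus any compression disk has boundary isotopic on $\Sigma$ to either $\mu$ or $\lambda$. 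The geometric intersection numbers of $T(p,q)$ with $\mu$ and $\lambda$ are $q$ and $p$ respectively, so the compression-representativity of this embedding is exactly $\min(p,q)$.

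Feeding this into Theorem~\ref{T:representativity} produces a bubble tangle for $T(p,q)$ of order $(2/3)\min(p,q)$, and Theorem~\ref{T:duality} then yields the spherewidth bound $(2/3)\min(p,q)$. For the treewidth bound on diagrams, I would invoke Proposition~\ref{P:swtw}, which relates the treewidth of a diagram to the spherewidth via a factor of two (each edge crossing a separator in a tree decomposition contributes two intersections to the corresponding sphere), to obtain the claimed lower bound of $(1/3)\min(p,q)$.

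I expect no serious obstacle in this proof: the topological content reduces to the classical observation that the compressible curves of the standard torus are precisely its meridian and its longitude. The only point that requires care is ensuring that the constants line up correctly between Theorem~\ref{T:representativity} and Proposition~\ref{P:swtw}, and verifying that intersection numbers of $\mu$ and $\lambda$ with $T(p,q)$ are realized by geometric intersection (which is straightforward since these are the unique curves in their homotopy classes up to isotopy, and $T(p,q)$ may be taken transverse to them with algebraic and geometric intersection numbers agreeing).
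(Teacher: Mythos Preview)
Your proposal is correct and follows essentially the same route as the paper's proof: chain Theorem~\ref{T:representativity} and Theorem~\ref{T:duality} (for the spherewidth bound) with Proposition~\ref{P:swtw} (for the treewidth bound), reducing everything to the computation $\comprep(T(p,q),\To)=\min(p,q)$, which is obtained by identifying the compressible curves of the unknotted torus as precisely the two meridians of the complementary solid tori. The paper phrases the classification of compressible curves via the kernel of the inclusion $i_*:\pi_1(\To)\to\pi_1(S)$ rather than via uniqueness of meridian disks, but this is the same argument.
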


\subparagraph*{Related work and proof techniques.} The results in this article and many of their proof techniques stem from two very distinct lineages in quite distant communities, the first one being knot theory or more generally low-dimensional topology, and the second one being structural graph theory. While there have been some recent works aiming at building bridges between combinatorial width parameters and topological quantities (for example the aforementioned~\cite{Mesmay_treewidth}, but also~\cite{huszar2,huszar1,mariapurcell} for related problems in $3$-manifold theory), the main contribution in this article is that we dive deeper in the structural graph theory perspective via the concept of a tangle. The latter has now proved to be a fundamental tool in graph theory and beyond (see for example Diestel~\cite[Preface to the 5th edition]{Diestel_Graph_Theory}).

The duality theorem of Robertson and Seymour between branchwidth and tangles in~\cite{Graph_Minors_X} has been generalized many times since its inception, for example in order to encompass other notions of decompositions and their obstructions~\cite{amini,lyaudet}, to apply more generally to matroids~\cite{Geelen_Matroid} and to the wide-ranging setting of abstract separations systems~\cite{Diestel_Tangle,diestelACS}. The key difference in our work, and why it does not fit into these generalizations, is that our notions of sphere decomposition and bubble tangles inherently feature the \emph{topological} constraint of working with $2$-spheres. This is a crucial constraint, as it would be easy to sweep any knot with width at most $2$ if one were allowed to use arbitrary surfaces during the sweeping process. It should also be noted that in planar graphs, it was shown by Seymour and Thomas~\cite{Seymour_Ratcatcher} that the separations involved in an optimal branch decomposition can always be assumed to take the form of $1$-spheres, i.e., Jordan curves. This property led to the celebrated ratcatcher algorithm to compute the branchwidth of planar graphs in polynomial time~\cite{Seymour_Ratcatcher} as well as to sphere-cut decompositions and their algorithmic applications (see for example~\cite{dorn2005efficient}). Our sphere decompositions are the generalization one dimension higher of these sphere-cut decompositions, and Theorem~\ref{T:duality} identifies bubble tangles as a correct notion of dual obstruction for those. We believe that these notions could be of further interest beyond knots, in the study of graphs embedded in $\mathbb{R}^3$ with some topological constraints, e.g., linkless graphs~\cite{sachs}.

The \textbf{representativity} (also called \textbf{facewidth}) of a graph embedded on a surface $S$ is the smallest number of intersections of a non-contractible curve with that graph. Theorem~\ref{T:representativity} will not come as a surprise for readers accustomed to graph minor theory, as Robertson and Seymour proved a very similar-looking theorem in Graph Minors XI~\cite[Theorem~4.1]{Graph_minors_XI}, showing that that the branchwidth of a graph embedded on a surface is lower bounded by its representativity, which they prove by exhibiting a tangle. The key difference is that our notion of compression-representativity only takes into account the length of cycles bounding compression disks, instead of all the non-contractible cycles. Here again, this topological distinction is crucial to give a meaningful concept for knots, as for example the graph-theoretical representativity of a torus knot is zero. Due to this difference, the proof technique of Robertson and Seymour does not readily apply to prove Theorem~\ref{T:representativity}; instead we have to rely on more topological arguments.

From the knot theory side, there is a long history in the study of the ``best'' way to sweep a knot while trying to minimize the number of intersections in this sweepout. One of the oldest knot invariants, the bridge number, can be seen through this lens (see for example~\cite{schultens}). A key concept in modern knot theory, introduced by Gabai in his proof of the Property R conjecture~\cite{gabai}, is the notion of \textbf{thin position} which more precisely quantifies the best way to place a knot to  minimize its width. It is at the core of many advances in modern knot theory (see for example Scharlemann~\cite{scharlemann}). Recent developments in thin position have highlighted that in order to obtain the best topological properties, it can be helpful to sweep the knot in a tree-like fashion compared to the classical linear one. This approach leads to definitions bearing close similarities to our sphere decompositions (this is one of the ideas behind generalized Heegaard splittings~\cite{scharlemannthompson,genHeegaard}, see also~\cite{Hayashi_spliting,widthtrees,tayloradditive}). The concept of compression-representativity of a knot or a spatial graph finds its roots in the works of Ozawa~\cite{ozawa}, and Blair and Ozawa~\cite{blairozawa} who defined it under the simple name of representativity, taking inspiration from graph theory. They proved that it provides a lower bound on the bridge number and on more general linear width quantities. Our Theorem~\ref{T:representativity} strengthens their results by showing that it also lower bounds the width of tree-like decompositions. Furthermore, while specific tools have been developed to lower bound various notions of width of knots or $3$-manifolds, we are not aware of duality theorems like our Theorem~\ref{T:duality}. It shows that our bubble tangles constitute an obstruction that is, in a precise sense, the optimal tool for the purpose of lower bounding spherewidth.

Finally, an important inspiration for our proof of Theorem~\ref{T:representativity} comes from a seemingly unrelated breakthrough of Pardon~\cite{Pardon_distortion}, who solved a famous open problem of Gromov~\cite{gromov} by proving the existence of knots with arbitrarily high \textbf{distortion}. The distortion for two points on an embedded curve in $\mathbb{R}^3$ is the ratio between the intrinsic and Euclidean distance between the points. The distortion of the entire curve is the supremum over all pairs of points. The distortion of a knot is the minimal distortion over all embeddings of the knot. While this metric quantity seems to have nothing to do with tree decompositions, it turns out that the technique developed by Pardon can be reinterpreted in our framework. With our terminology, his proofs amounts to first lower bounding the distortion by a constant factor of the spherewidth, and then defining a bubble tangle for knots of high representativity. The lower bound is nicely explained by Gromov and Guth~\cite[Lemma~4.2]{gromovguth}, where the simplicial map is similar to our sphere decompositions, up to a constant factor. Then our proof of Theorem~\ref{T:representativity} is inspired by the second part of Pardon's argument, with a quantitative strengthening to obtain the $2/3$ factor, whereas his argument would only yield $1/2$.

\subparagraph*{Organization of this paper.} After providing background and defining our key concepts in Section~\ref{S:prelim}, we prove Theorem~\ref{T:duality} in Section~\ref{S:duality}, and Theorem~\ref{T:representativity} in Section~\ref{S:representativity}. We provide examples in Section~\ref{S:examples}. %Due to the line limitations, all the proofs not included in the main document are deferred to appendices.

\section{Preliminaries}\label{S:prelim}

We include the most relevant definitions, but some familiarity with low-dimensional topology will help, see for example in the textbook of Schultens~\cite{schultensbook}. We refer to Diestel~\cite{Diestel_Graph_Theory} for a nice introduction to graph theory and in particular its structural aspects. We denote by $V(G)$, $E(G)$, and $L(G)$ the vertices, edges and leaves (degree one vertices) of a graph $G$.

\subparagraph*{Low-dimensional topology.} Following standard practice, instead of considering knots and spatial graphs within $\mathbb{R}^3$, we compactify it and work within $\mathbb{S}^3$.  We denote by $C(A)$ the connected components of a subset $A$ of $\Sp^3$, and thus by $|C(A)|$ its number of connected components. As is standard in low-dimensional topology, we work in the Piecewise-Linear (PL) category, which means that \textit{all the objects that we use in this article are assumed to be piecewise-linear}, i.e., made of a finite number of linear pieces with respect to a fixed triangulation of $\mathbb{S}^3$. This allows us to avoid pathologies such as wild knots or the Alexander horned sphere. An \bfdex{embedding} of a compact topological space $X$ into another one $Y$ is a continuous injective map, and it is \bfdex{proper} if it maps the boundary $\partial X$ within the boundary $\partial Y$. A $3$-dimensional version of the Schoenflies theorem guarantees that for any PL $2$-sphere $S$ embedded in $\Sp^3$, both components of $\Sp^3 \smallsetminus S$ are balls (see for example~\cite[Theorem~XIV.1]{bing}). A \bfdex{knot} is a PL embedding of $S^1$ into $\Sp^3$, a \bfdex{link} is a disjoint union of knots, and a \bfdex{spatial graph} is a PL embedding of a graph $G$ into $\Sp^3$. All these objects are considered equivalent when they are ambient isotopic, i.e., when there exists a continuous deformation preserving the embeddedness. Knots and links are a special instance of spatial graphs, and henceforth we will mostly focus on spatial graphs, generally denoted by the letter $G$. For technical reasons, it is convenient to thicken our embedded graphs as follows. A \bfdex{thickened embedding} $\varphi$ of a graph $G$ is an embedding of $G$ in $\mathbb{S}^3$ where each vertex is thickened to a small ball, two balls are connected by a polygonal edge if and only if they are adjacent in the graph $G$, and pairs of edges are disjoint. We emphasize that we do not thicken edges, which might be considered nonstandard.  We will also work with graphs embedded on surfaces which are themselves embedded in $\mathbb{S}^3$: such embeddings will also always be thickened, that is, vertices on the surface are thickened into small disks. 
\textit{From now on, all the graph embeddings will be thickened, and thus to ease notations we will omit the word thickened}.

As mentioned in the introduction, for $\Sigma$ a surface embedded in $\Sp^3$, a \bfdex{compression disk} is a properly embedded disk $D$ in $\mathbb{S}^3 \smallsetminus \Sigma$ such that the boundary $\partial D$ is a non-contractible curve on $\Sigma$. A \bfdex{compressible curve} $\gamma$ of $\Sigma$ is the boundary of a compressing disk of $\Sigma$. 
 For a spatial graph $G$ embedded on a surface $\Sigma$ in $\mathbb{S}^3$, the \bfdex{compression representativity}  of $G$ on $\Sigma$, written $\comprep(G,\Sigma)$ is $\min \enstq{|C(\alpha \cap G|)}{\alpha \text{ compressible curve of } \Sigma}$ (we count connected components to correctly handle thickened vertices). The compression representativity $\comprep(G)$ of $G$ is the supremum of $\comprep(G,\Sigma)$ over all nested embeddings $G \hookrightarrow \Sigma \hookrightarrow \mathbb{S}^3$.

In order to define spherewidth and bubble tangles, we require a precise control of the event when two spheres merge together to yield a third one, which is mainly encapsulated in the concept of double bubble. A \bfdex{double bubble} is a triplet of closed disks $(D_1,D_2,D_3)$ in $\mathbb{S}^3$, disjoint except on their boundaries, that they share: $D_1 \cap D_2 = D_1 \cap D_3 = D_2 \cap D_3 = D_1 \cap D_2 \cap D_3 = \partial D_1 = \partial D_2 = \partial D_3$, see Figure~\ref{pic_double_bubble}. Such a double bubble defines three spheres, which, by the PL Schoenflies theorem, bound three balls.

Two surfaces (resp. a knot and a surface) embedded in $\mathbb{S}^3$ are \bfdex{transverse} if they intersect in a finite number of connected components, where the intersection is locally homeomorphic to the intersection of two orthogonal planes (resp. to the intersection of a plane and an orthogonal line). Likewise, we say that a surface is transverse to a ball if it is transverse to its boundary. A surface is transverse to a graph if it is transverse to all the thickened vertices and edges it intersects. A double bubble is transverse to a graph or a surface if each of its three spheres is and if the vertices of the graph do not intersect the spheres on their shared circle $\partial D_i$. Intersections are \bfdex{tangent} when they are not transverse, and a sphere $S$ is said \bfdex{finitely tangent} to a graph $G$ embedded in $\Sp^3$ if they do not intersect transversely but the number of intersections $|E(G) \cap S|$ is still finite.

\begin{figure}[h]
\begin{center}
%\tikzsetnextfilename{double_bubble}
\begin{tikzpicture}
\coordinate (clsp) at (2,1.4);
\coordinate (ulsp) at ($(clsp)+(60:1)$);
\coordinate (dlsp) at ($(clsp)+(300:1)$);

\filldraw [purple, fill opacity = 0.1] (ulsp) arc (60:300:1);
\filldraw [purple, fill opacity = 0.1] (ulsp) arc (120:-120:1);

\draw [purple] (ulsp) .. controls +(-40:0.3) and \control{(dlsp)}{(40:0.3)} node (mr) [pos = 0.8, inner sep = 0] {};
\draw [purple, dotted] (ulsp) .. controls +(-140:0.3) and \control{(dlsp)}{(140:0.3)}node (ml) [pos = 0.55, inner sep = 0] {};
\fill [purple, opacity = 0.1] (ulsp) .. controls +(-40:0.3) and \control{(dlsp)}{(40:0.3)} -- (dlsp) .. controls +(140:0.3) and \control{(ulsp)}{(-140:0.3)};
\end{tikzpicture}
\caption{A double bubble: two spheres that intersect in a single disk.}
\label{pic_double_bubble}
\end{center}
\end{figure}
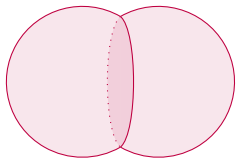

\subparagraph*{Spherewidth.} In this paragraph, we introduce sphere decompositions, which are the main way that we use to sweep knots and spatial graphs using spheres.

\begin{defn}[Sphere decomposition]
  Let $G$ be a graph embedded in $\Sp^3$. A \bfdex{sphere decomposition} of $G$ is a continuous map $f: \Sp^3 \rightarrow T$ where $T$ is a trivalent tree with at least one edge: 
	\begin{itemize}
		\item For all $x \in L(T), f^{-1}(x)$ is a point disjoint from $G$.
		\item For all $x \in V(T) \smallsetminus L(T), f^{-1} (x)$ is a PL double bubble transverse to $G$.
		\item For all $x$ interior to an edge, $f^{-1} (x)$ is a sphere transverse or finitely tangent to $G$.
	\end{itemize}
\end{defn}

The \bfdex{weight} of a sphere $S$ (with respect to $G$) is the number of  connected components in its intersection with $G$. The \bfdex{width} of a sphere decomposition $f$ is the supremum of the weight of $f^{-1}(x)$ over all points $x$ interior to edges of the tree $T$.  The \bfdex{spherewidth} of the graph $G$, denoted by $sw(G)$, is the infimum, over all sphere decompositions $f$, of the width of $f$: $sw(G) = \inf_{f: \Sp^3 \mapsto T} \sup_{x \in \mathring e \in E(T)} |C(f^{-1}(x) \cap G)|$. Therefore, a sphere decomposition is a way to continuously sweep $\Sp^3$ using spheres, which will occasionally merge or split in the form of double bubbles, and the spherewidth is a measure of how well we can sweep a graph $G$ using sphere decompositions. This is similar to the level sets given by a Morse function, but note that our double-bubble singularities are not of Morse type, and those are key to the proof of Theorem~\ref{T:duality}.

\begin{figure}[ht]
\begin{center}
%%\tikzsetnextfilename{Pretzel_decomp}
\begin{tikzpicture}[scale = 1]
%define intersection
\coordinate (b11) at (0,1);
\coordinate (b12) at (0,2);
\coordinate (b13) at (0,3);
\coordinate (b21) at (1,0.5);
\coordinate (b22) at (1,1.5);
\coordinate (b23) at (1,2.5);
\coordinate (b24) at (1,3.5);
\coordinate (b31) at (2,0);
\coordinate (b32) at (2,1);
\coordinate (b33) at (2,2);
\coordinate (b34) at (2,3);
\coordinate (b35) at (2,4);
\coordinate (bu1) at (0,5);
\coordinate (bu2) at (1,5);
\coordinate (bu3) at (2,5);
\def\p{0.4}
\def\sr{45};%north east
\pgfmathparse{180-\sr}\edef\sl{\pgfmathresult};%northwest
\pgfmathparse{-\sr}\edef\er{\pgfmathresult};%south east
\pgfmathparse{-\sl}\edef\el{\pgfmathresult};%south west
\def\r{2pt};
\begin{scope}[white]
\node (n11) at (b11) [fill, circle, fill, inner sep = \r] {};
\node (n12) at (b12) [fill, circle, fill, inner sep = \r] {};
\node (n13) at (b13) [fill, circle, fill, inner sep = \r] {};
\node (n21) at (b21) [fill, circle, fill, inner sep = \r] {};
\node (n22) at (b22) [fill, circle, fill, inner sep = \r] {};
\node (n23) at (b23) [fill, circle, fill, inner sep = \r] {};
\node (n24) at (b24) [fill, circle, fill, inner sep = \r] {};
\node (n31) at (b31) [fill, circle, fill, inner sep = \r] {};
\node (n32) at (b32) [fill, circle, fill, inner sep = \r] {};
\node (n33) at (b33) [fill, circle, fill, inner sep = \r] {};
\node (n34) at (b34) [fill, circle, fill, inner sep = \r] {};
\node (n35) at (b35) [fill, circle, fill, inner sep = \r] {};
\node (nu1) at (bu1) [fill, circle, fill, inner sep = \r] {};
\node (nu2) at (bu2) [fill, circle, fill, inner sep = \r] {};
\node (nu3) at (bu3) [fill, circle, fill, inner sep = \r] {};
\end{scope}

%draw strands
\begin{scope}[thick]
\draw (n11.north east) .. controls +(\sr:\p) and \control{(n12.south east)}{(\er:\p)};
\draw (n11.north west) .. controls +(\sl:\p) and \control{(n12.south west)}{(\el:\p)};
\draw (n12.north east) .. controls +(\sr:\p) and \control{(n13.south east)}{(\er:\p)};
\draw (n12.north west) .. controls +(\sl:\p) and \control{(n13.south west)}{(\el:\p)};
\draw (n21.north east) .. controls +(\sr:\p) and \control{(n22.south east)}{(\er:\p)};
\draw (n21.north west) .. controls +(\sl:\p) and \control{(n22.south west)}{(\el:\p)};
\draw (n22.north east) .. controls +(\sr:\p) and \control{(n23.south east)}{(\er:\p)};
\draw (n22.north west) .. controls +(\sl:\p) and \control{(n23.south west)}{(\el:\p)};
\draw (n23.north east) .. controls +(\sr:\p) and \control{(n24.south east)}{(\er:\p)};
\draw (n23.north west) .. controls +(\sl:\p) and \control{(n24.south west)}{(\el:\p)};
\draw (n31.north east) .. controls +(\sr:\p) and \control{(n32.south east)}{(\er:\p)};
\draw (n31.north west) .. controls +(\sl:\p) and \control{(n32.south west)}{(\el:\p)};
\draw (n32.north east) .. controls +(\sr:\p) and \control{(n33.south east)}{(\er:\p)};
\draw (n32.north west) .. controls +(\sl:\p) and \control{(n33.south west)}{(\el:\p)};
\draw (n33.north east) .. controls +(\sr:\p) and \control{(n34.south east)}{(\er:\p)};
\draw (n33.north west) .. controls +(\sl:\p) and \control{(n34.south west)}{(\el:\p)};
\draw (n34.north east) .. controls +(\sr:\p) and \control{(n35.south east)}{(\er:\p)};
\draw (n34.north west) .. controls +(\sl:\p) and \control{(n35.south west)}{(\el:\p)};
\draw (nu1.north east) .. controls +(\sr:\p) and \control{(nu2.north west)}{(\sl:\p)};
\draw (nu1.south east) .. controls +(\er:\p) and \control{(nu2.south west)}{(\el:\p)};
\draw (nu2.north east) .. controls +(\sr:\p) and \control{(nu3.north west)}{(\sl:\p)};
\draw (nu2.south east) .. controls +(\er:\p) and \control{(nu3.south west)}{(\el:\p)};
\draw (n13.north east) .. controls +(\sr:\p) and \control{(n24.north west)}{(\sl:\p)};
\draw (n24.north east) .. controls +(\sr:\p) and \control{(n35.north west)}{(\sl:\p)};
\draw (n11.south east) .. controls +(\er:\p) and \control{(n21.south west)}{(\el:\p)};
\draw (n21.south east) .. controls +(\er:\p) and \control{(n31.south west)}{(\el:\p)};
\draw (n35.north east) .. controls +(\sr:\p) and \control{(nu3.south east)}{(\er:\p)};
\draw (nu3.north east) .. controls +(\sr:4*\p) and \control{(n31.south east)}{(\er:4*\p)};
\draw (nu1.north west) .. controls +(\sl:4*\p) and \control{(n11.south west)}{(\el:4*\p)};
\draw (n13.north west) .. controls +(\sl:2*\p) and \control{(nu1.south west)}{(\el:2*\p)};

\draw (n11.south west) -- (n11.north east);
\draw (n12.south west) -- (n12.north east);
\draw (n13.south west) -- (n13.north east);
\draw (n21.south east) -- (n21.north west);
\draw (n22.south east) -- (n22.north west);
\draw (n23.south east) -- (n23.north west);
\draw (n24.south east) -- (n24.north west);
\draw (n31.south west) -- (n31.north east);
\draw (n32.south west) -- (n32.north east);
\draw (n33.south west) -- (n33.north east);
\draw (n34.south west) -- (n34.north east);
\draw (n35.south west) -- (n35.north east);
\draw (nu1.north west) -- (nu1.south east);
\draw (nu2.north west) -- (nu2.south east);
\draw (nu3.north west) -- (nu3.south east);
\end{scope}

%decomp
\begin{scope}[purple]
\draw (-0.55,-0.25) rectangle (1.5,4.25);
\draw (0.5,-0.25) -- (0.5,4.25);

\fill [green!70!purple] (0,1.5) circle (1.5pt);
\fill [blue] (1,2) circle (1.5pt);
\fill [brown] (2,2.5) circle (1.5pt);
\fill (0.5,5) circle (1.5pt);
\draw (-0.65,-0.35) rectangle (1.6,4.5);
\draw (1.6,4.5) -- (2.45,4.5) -- (2.45,-0.35) -- (1.6,-0.35);
\begin{scope}[dotted]
\draw (-0.75,-0.45) rectangle (2.55,4.6);
\draw (-0.6,-0.3) rectangle (1.55,4.4);
\draw (-0.3,1.3) rectangle (0.3,1.7);
\draw (-0.37,1) rectangle (0.37,2.5);
\draw (-0.42,0.8) rectangle (0.42,3.5);
\draw (0.7,1.8) rectangle (1.3,2.2);
\draw (0.63,0.7) rectangle (1.37,3.2);
\draw (0.58,0.2) rectangle (1.42,3.8);
\draw (1.7,2.3) rectangle (2.3,2.7);
\draw (1.63,0.7) rectangle (2.37,3.7);
\draw (0.4,4.7) rectangle (0.6,5.3);
\draw (0,4.63) rectangle (1.7,5.4);
\end{scope}
\end{scope}

%Tree of the sweeping
\draw [->] (3.2,2.5) -- (6,2.5) node [above, align = center, midway] {Sphere \\ decomposition};
\begin{scope}[purple]
\draw (6,1.5) -- (6.5,2.5) -- (7,2);
\draw (6.5,2.5) -- (7.25,3.5) -- (8,2.5);
\draw (6.5,5) -- (7.25,3.5);
\end{scope}
\begin{scope}[xshift = 6cm]
\fill [green!70!purple] (0,1.5) circle (1.5pt);
\fill [blue] (1,2) circle (1.5pt);
\fill [brown] (2,2.5) circle (1.5pt);
\fill [purple](0.5,5) circle (1.5pt);
\end{scope}
\end{tikzpicture}
\caption{A width-$4$ sphere decomposition of a pretzel knot.}
\end{center}
\end{figure}
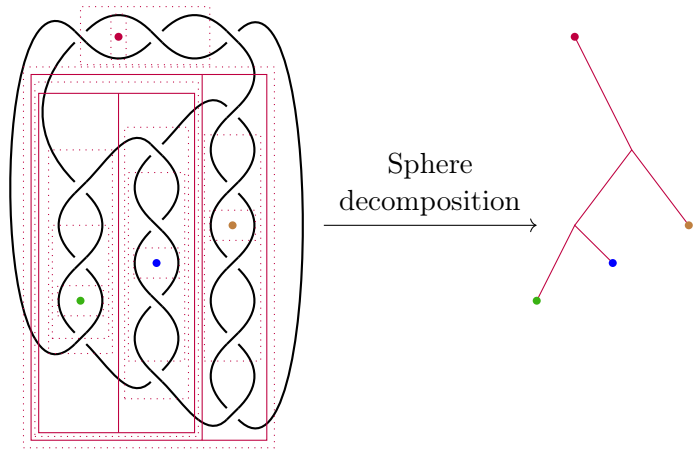

 The point of using thickened embeddings instead of usual ones is that this allows disjoint spheres of a sphere decomposition to intersect a same vertex of a graph embedding. This is motivated by the following proposition, which provides a bridge between sphere decompositions and tree decompositions of diagrams of knots and spatial graphs.

\begin{proposition}\label{P:swtw}
Let $G$ be a knot or a graph embedded in $\Sp^3$, and $D$ be a diagram of $G$. Then the spherewidth of $G$ is at most twice the tree-width of $D$.
\end{proposition}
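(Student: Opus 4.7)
The plan is to lift a sphere-cut decomposition of the planar diagram $D$ to a sphere decomposition of $G$ in $\Sp^3$ via a suspension construction. First, convert the given tree decomposition of $D$ of width $w = \text{tw}(D)$ into a branch decomposition $(T, \mu)$ of $D$ of width at most $w + 1$ using the classical inequality $\text{bw} \leq \text{tw} + 1$; here $T$ is a trivalent tree with $L(T)$ in bijection with $E(D)$, and each edge of $T$ carries a middle set (vertex separator of $D$) of size at most $w+1$. Since $D$ is planar, by the Seymour--Thomas theorem this branch decomposition can be realized as a sphere-cut decomposition in $\Sp^2$: each edge of $T$ corresponds to a noose $\gamma \subset \Sp^2$ passing through the middle-set vertices and separating the edges of $D$ according to the decomposition, and at each internal node the three incident nooses bound three regions of $\Sp^2$ meeting along a theta-graph with two common branch points $u$, $v$.

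Next, lift the planar decomposition to $\Sp^3$ via a suspension. Embed $\Sp^2$ as the equator of $\Sp^3$, isotope $G$ into a thin equatorial neighborhood, and fix two poles $N, S \in \Sp^3 \setminus G$ with geodesic retraction $\pi : \Sp^3 \setminus \{N, S\} \to \Sp^2$. For each planar region $R$ arising in the decomposition, its suspension $B_R := \pi^{-1}(R) \cup \{N, S\}$ is a $3$-ball, and for each noose $\gamma = \partial R$ the suspension $S_\gamma := \partial B_R$ is a $2$-sphere. At each internal node of $T$ with regions $R_1, R_2, R_3$ meeting along the theta-graph, the pairwise intersections $D_{ij} := B_{R_i} \cap B_{R_j} = \pi^{-1}(\alpha_{ij}) \cup \{N, S\}$ (where $\alpha_{ij}$ is the shared theta-graph arc between $R_i$ and $R_j$) are disks that all share the common boundary circle $\pi^{-1}(\{u, v\}) \cup \{N, S\}$; the triple $(D_{12}, D_{13}, D_{23})$ is therefore a double bubble in the sense defined in the paper. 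Interpolating continuously between these structures along each edge of $T$, and attaching small tails at the leaves to map onto points of $\Sp^3 \setminus G$, produces a continuous map $f : \Sp^3 \to T$ with the required structure.

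Finally, bound the width by analyzing $S_\gamma \cap G$ locally at each middle-set vertex $v$ on $\gamma$: if $v$ is a crossing of $D$, the two strands of $G$ meeting at $v$ each cross $S_\gamma$ once, contributing $2$ connected components; if $v$ is a thickened graph vertex, $S_\gamma$ meets the thickened ball in a single disk, contributing $1$ connected component. Summing, $|C(S_\gamma \cap G)| \leq 2 \cdot |\text{middle set}| \leq 2(w+1)$, which yields the bound $\text{sw}(G) \leq 2\,\text{tw}(D)$ after absorbing the additive constant into the multiplicative factor. The main obstacle is the realization of the double bubble at each internal node of $T$: the suspension trick is what makes this step tautological, since a theta-graph in $\Sp^2$ suspends directly to three disks sharing a common boundary circle in $\Sp^3$; the remaining technical detail is maintaining transversality (or finite tangency) of the interpolating sphere family with $G$, which can be arranged by generic choice of the poles $N, S$ together with a small PL perturbation.
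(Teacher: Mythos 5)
Your overall strategy coincides with the paper's: pass to a sphere-cut decomposition of the planar diagram, lift each noose vertically to a sphere in $\Sp^3$ (your suspension is the paper's ``cap off above and below the projection''), and place double bubbles at the trivalent nodes. However, there are two genuine gaps. The first is arithmetic and fatal as written: starting from $\mathrm{bw}(D)\le \mathrm{tw}(D)+1$ you obtain middle sets of size at most $w+1$ and hence a sphere decomposition of width at most $2(w+1)=2w+2$. The statement to be proved is the exact inequality $sw(G)\le 2\,\mathrm{tw}(D)$, and an additive $+2$ cannot be ``absorbed into the multiplicative factor.'' The paper avoids this by invoking a sharper statement (via Seymour--Thomas bond carving decompositions) that a planar graph which is not a tree admits a sphere-cut decomposition of width at most its treewidth $k$, not $k+1$; the tree/forest case is handled separately since there the spherewidth is $1$. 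Without that sharper input your argument only proves $sw(G)\le 2\,\mathrm{tw}(D)+2$.

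The second gap concerns the width bound itself. The width of a sphere decomposition is a supremum over \emph{all} interior points of edges of $T$, so you must bound the weight of every interpolating sphere, not just the distinguished spheres $S_\gamma$ over the nooses. Your local analysis (``crossing contributes $2$, thickened vertex contributes $1$'') is carried out only on $S_\gamma$. Moreover, the three regions incident to an internal node of a sphere-cut decomposition do \emph{not} tile $\Sp^2$ along a theta-graph: after the nooses are made disjoint, the complement of the three regions is a pair of pants containing precisely the (thickened) vertices and crossings of $D$ lying in the middle sets, so the double bubble must be placed in the interior of that pair of pants and the family of spheres interpolating out to the three $S_\gamma$ must be swept across those preimages. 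A thickened vertex whose preimage connects two or three boundary components of the pair of pants is necessarily crossed \emph{twice} by some interpolating sphere; this is exactly where the paper's factor ``at most twice each thickened vertex'' comes from. The final bound $2(c_1^e+c_2^e)$ does survive this, but it has to be verified for the interpolating family; as written, your ``tautological'' suspension of a theta-graph skips the step that actually carries the content of the factor $2$.
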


The proof is very similar to that of Lemma~3.4 in~\cite{Mesmay_treewidth}.

\begin{proof}[Proof of Proposition~\ref{P:swtw}]
  If $D$ is a tree, then the spherewidth of $G$ is one and the proposition is trivial. So we assume henceforth that $G$ contains at least one cycle. We will refrain from providing a precise definition of treewidth (we refer to Diestel~\cite{Diestel_Graph_Theory}) as we will rely on a variant with more structure, more adapted to planar graphs: sphere-cut decompositions~\cite{dorn2005efficient}, which we first introduce. Given a planar graph $D$, a \bfdex{sphere-cut decomposition} is a trivalent tree $T$, whose leaves are in bijection with the edges of $G$, and a family of Jordan curves $\gamma_e$ parameterized by the edges of $T$. An edge $e$ of $T$ partitions the leaves of $T$ and thus the edges of $D$ into two subsets $E^e_1$ and $E^e_2$. Each Jordan curve $\gamma_e$ is required to intersect $D$ only at its vertices (such a curve is often called a \bfdex{noose}), and thus partitions the edges of $D$ into two subsets: we require that this partition matches the partition $E^e_1$ and $E^e_2$. The \bfdex{weight} of a Jordan curve $w(\gamma_e)$ is the number of vertices that it intersects. The width of a sphere-cut decomposition $w(T,(\gamma_e))$ is the maximum weight $w(\gamma_e)$ over all the edges of $T$, and the sphere-cut width is the minimum width over all sphere-cut decompositions. Given a graph $D$ that is not a tree, if we denote by $k$ its treewidth, it is known that the sphere-cut width of $D$ is at most $k$: this follows from the fact that the branchwidth of $D$ is at most its treewidth~\cite{Graph_Minors_X} and the results of Seymour and Thomas on bond carving decompositions~\cite{Seymour_Ratcatcher} (see for example~\cite[Lemma~2.2]{twinwidth} for the exact statement that we use connecting branchwidth and sphere-cut width).

  We now work with a sphere-cut decomposition $(T,(\gamma_e))$ achieving the optimal width. There are two types of vertices in $D$: those corresponding to vertices of $G$ and those corresponding to crossings of $G$ in the projection. Since $G$ is a thickened embedding in $\Sp^3$, it makes sense to also thicken $D$, i.e., we thicken a bit the vertices of the planar graph $D$ so that they are small disks, and so that the Jordan curves $\gamma_e$ intersect the interior of these disks. Then, by perturbing them a little and if necessary removing bigons, we can assume that the Jordan curves are all pairwise disjoint, while still intersecting the graph $D$ only at the vertex disks. Now, since $D$ is a planar projection of $G$, each Jordan curve can be lifted to a sphere in $\Sp^3$ by capping it off with one disk above and one disk below the projection. The resulting spheres, which we denote by $S_e$, are pairwise disjoint and intersect the graph $G$ in two different ways: whenever $\gamma_e$ traverses one of the vertices of $G$, $S_e$ intersects the corresponding thickened vertex; and whenever $\gamma_e$ traverses one of the crossings of $G$, $S_e$ intersects transversely one or two of the edges of $G$ (see Figure~\ref{pic_gamma_G}). 
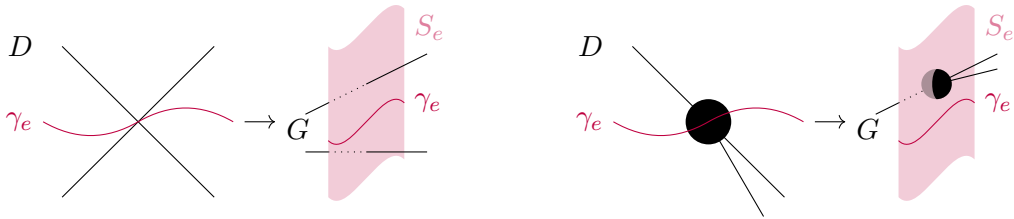
\begin{figure}[h]
\begin{center}
\begin{tikzpicture}[scale = 1]
\draw (-1,1) -- (1,-1);
\draw (-1,-1) -- (1,1);
\draw [purple] (-1.25,0) to [out=-30, in=-150] (0,0) to [out=30, in=150] (1.25,0);
\node at (-1.25,0) [left,purple] {$\gamma_e$};
\node at (-1.25,1) [left] {$D$};

\draw [->] (1.4,0) -- (1.8,0);

\fill [opacity = 0.2, purple] (2.5,-1) to [out=-45, in=-135] (3,-0.75)  to [out=45, in=135] (3.5,-0.5) -- (3.5,1.5) to [out=135, in=45] (3,1.25)  to [out=-135, in=-45] (2.5,1) -- cycle;
\draw [purple] (2.5,-0.25) to [out=-45, in=-135] (3,0)  to [out=45, in=135] (3.5,0.25);
\node at (3.5,0.25) [right,purple] {$\gamma_e$};
\node at (2.1,-0.1) {$G$};
\node at (3.5,1.25) [right,purple!50!] {$S_e$};
\draw (2.2,0.1) -- (2.5,0.25);
\draw [dotted](2.5,0.25) -- (3,0.5);
\draw (3,0.5) -- (3.8,0.9);
\draw (2.2,-0.4) -- (2.5,-0.4);
\draw [dotted](2.5,-0.4) -- (3,-0.4);
\draw (3,-0.4) -- (3.8,-0.4);

\begin{scope}[xshift = 7.5cm ]
\draw (-1,1) -- (1,-1);
\draw (0,0) -- (-60:1.45);
\fill (0,0) circle (0.3cm);
\draw [purple] (-1.25,0) to [out=-30, in=-150] (0,0) to [out=30, in=150] (1.25,0);
\node at (-1.25,0) [left,purple] {$\gamma_e$};
\node at (-1.25,1) [left] {$D$};

\draw [->] (1.4,0) -- (1.8,0);

\fill [opacity = 0.2, purple] (2.5,-1) to [out=-45, in=-135] (3,-0.75)  to [out=45, in=135] (3.5,-0.5) -- (3.5,1.5) to [out=135, in=45] (3,1.25)  to [out=-135, in=-45] (2.5,1) -- cycle;
\draw [purple] (2.5,-0.25) to [out=-45, in=-135] (3,0)  to [out=45, in=135] (3.5,0.25);
\node at (3.5,0.25) [right,purple] {$\gamma_e$};
\node at (2.1,-0.1) {$G$};
\node at (3.5,1.25) [right,purple!50!] {$S_e$};
\draw (2.2,0.1) -- (2.5,0.25);
\draw [dotted](2.5,0.25) -- (2.85,0.425);
\draw (3,0.5) -- (3.8,0.9);
\draw (3,0.5) -- (3.8,0.7);
\fill [opacity = 0.3] (3,0.5) circle (0.2cm);
\fill (3,0.3) arc (-90:90:0.2) .. controls +(-140:0.1) and \control{(3,0.3)}{(140:0.1)};
\end{scope}
\end{tikzpicture}
\caption{The two cases for the intersection between $\gamma_e$ and $D$ and their lift in $\Sp^3$.}
\label{pic_gamma_G}
\end{center}
\end{figure}

Therefore, the weight of a sphere $S_e$ is upper bounded by $c^e_1+2c^e_2$, where $c^e_1$ is the number of vertices of $G$ traversed by $\gamma_e$, and $c^e_2$ is the number of crossings of $G$ traversed by $\gamma_e$. The spheres $S_e$ are used as the backbone of a sphere decomposition, there just remains to interpolate between them:

  \begin{itemize}
\item A sphere $S_e$ for $e$ adjacent to a leaf of $T$ encloses exactly one edge of $D$, therefore it bounds a $G$-trivial ball. It is straightforward to define a continuous family of nested spheres of width $2$ and converging to a point disjoint from $G$, which together sweep such a $G$-trivial ball.
\item A vertex $v$ of $T$ is adjacent to three edges, corresponding to three spheres $S_{1}$, $S_{2}$ and $S_{3}$. The space inbetween these three spheres is homeomorphic to a solid pair of pants $P$, i.e., $\Sp^3$ with three balls removed. We look at $P \cap G$ and observe that since every edge of $D$ is in bijection with one of the leaves of $T$, it is enclosed by a ball in the previous item, and thus does not appear in $P\cap G$. Therefore, the only components that we see in $P\cap G$ are preimages under the projection map of thickened vertices or crossings. By construction, both these preimages are topologically trivial (they do not involve any knotting). The thickened vertices connect one, two or the three boundaries of $P$. So we can define a double bubble $DB$ in the middle of $P$ and transverse to $G$, and three families of nested spheres interpolating between the three spheres of $DB$ and $S_1$, $S_2$, and $S_3$, with the following property: each of these interpolating spheres intersects at most twice each thickened vertex connecting the three boundaries, and at most once any of the other preimages (see Figure~\ref{pic_solid_pant}). 
  \end{itemize}

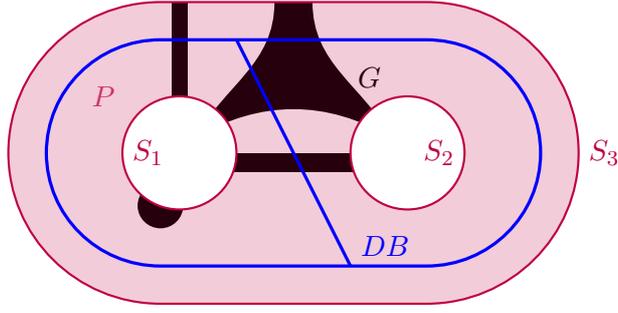
\begin{figure}[h]
\begin{center}
\begin{tikzpicture}
%Vertices
\begin{scope}[very thick]
\fill (-1.75,-0.25) rectangle (1.75,0);
\fill (-1.6,2) rectangle (-1.4,0);
\fill (-1.75,-0.7) circle (0.3cm);
\end{scope}
\fill (-1.3,0.2) to [out=30,in=150] (1.3,0.2) to [out=120,in=-90] (0.25,2) -- (-0.25,2) to [out=-90,in=60] (-1.3,0.2);
\node at (1,1) {$G$};

%Solid pant
\begin{scope}[thick, purple]
\filldraw [fill opacity = 0.2] (-1.75,-2) arc (-90:-270:2) -- (1.75,2) arc (90:-90:2) -- cycle;
\filldraw [fill = white] (-1.5,0) circle (0.75 cm);
\filldraw [fill = white] (1.5,0) circle (0.75 cm);
\end{scope}
\node at (-2.5,0.75) [purple!80!] {$P$};
\node at (-2.25,0) [purple, right] {$S_1$};
\node at (2.25,0) [purple, left] {$S_2$};
\node at (3.75,0) [purple, right] {$S_3$};

%Double Bubble 
\begin{scope}[very thick, blue]
\draw (-1.75,-1.5) arc (-90:-270:1.5) -- (1.75,1.5) arc (90:-90:1.5) -- cycle;
\draw (-0.75, 1.5) -- (0.75, -1.5);
\end{scope}
\node at (0.75,-1.5) [above right, blue] {$DB$};
\end{tikzpicture}
\caption{Definition of a double bubble $DB$ within a solid pant $P$.}
\label{pic_solid_pant}
\end{center}
\end{figure}

Each sphere involved in the previous two items has weight at most $2c^e_1 +2 c^e_2$ for some Jordan curve $\gamma_e$, and thus the width of the sphere decomposition if upper bounded by twice the sphere-cut width. We obtain a sphere decomposition of weight at most twice the treewidth of $D$, which concludes the proof.
\end{proof}

\subparagraph*{Bubble tangle.} Bubble tangles are our second main concept in this article. They will constitute an obstruction to spherewidth, by designating, for each sphere in $\Sp^3$ not intersecting the graph too many times, the side of the sphere that is easy to sweep. We first observe that some balls have to be easy to sweep: intuitively this will be the case of any unknotted segment or empty ball (see Figure~\ref{pic_ex_G-trivial}).
Let $G$ be a graph embedded in $\Sp^3$. A closed ball $B$ in $\Sp^3$ is said to be \bfdex{$G$-trivial} if its boundary is transverse to $G$ and one of the following holds (where $B(0,1)$ is the unit ball of $\mathbb{R}^3$):
\begin{itemize}
	\item $B \cap G = \varnothing$.
	\item $B \smallsetminus G$ is homeomorphic to $
	B(0,1) \smallsetminus [-1,1]\times \ens{(0,0)} \subset \mathbb{R}^3$.
	\item $B \smallsetminus G$ is homeomorphic to $
	B(0,1) \smallsetminus [-1,0]\times \ens{(0,0)} \subset \mathbb{R}^3$.
\end{itemize} 

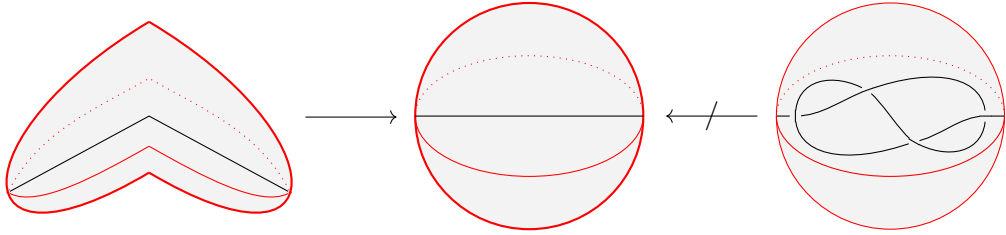
\begin{figure}[ht]
\begin{center}
%\tikzsetnextfilename{G-trivial_ball}
\begin{tikzpicture}
\draw [red, thick] (0,2) .. controls +(-2.5,-1.5) and \control{(0,0)}{(-2.5,-1.5)} node (nl) [pos = 0.55, fill, circle, inner sep = 0 pt] {};
\draw [red, thick] (0,2) .. controls +(2.5,-1.5) and \control{(0,0)}{(2.5,-1.5)} node (nr) [pos = 0.55, fill, circle, inner sep = 0 pt] {};
\fill [opacity = 0.05] (0,2) .. controls +(-2.5,-1.5) and \control{(0,0)}{(-2.5,-1.5)} .. controls ++(2.5,-1.5) and \control{(0,2)}{(2.5,-1.5)};
\draw [red] (nl) .. controls +(0.4,-0.2) and \control{(0,0.35)}{(-0.7,-0.4)};
\draw [red, dotted] (nl) .. controls +(0.4,0.8) and \control{(0,1.25)}{(-0.7,-0.4)};
\draw [red] (nr) .. controls +(-0.4,-0.2) and \control{(0,0.35)}{(0.7,-0.4)};
\draw [red, dotted] (nr) .. controls +(-0.4,0.8) and \control{(0,1.25)}{(0.7,-0.4)};
\draw (nl) -- (0,0.75) -- (nr);
\draw [->] (nr) ++(0.2,1) -- +(1.2,0);
\draw [red, thick] (5,0.75) circle (1.5);
\fill [opacity = 0.05] (5,0.75) circle (1.5);
\draw [red] (6.5,0.75) arc (0: -180: 1.5 and 0.8);
\draw [red, dotted] (6.5,0.75) arc (0:180: 1.5 and 0.8);
\draw (3.5, 0.75) -- (6.5,0.75);
\draw [<-] (6.8,0.75) -- +(1.2,0) node [midway] {$/$};
\begin{scope}[xshift = 8.5cm, yshift = 0.75 cm]
\coordinate (a) at (0,0);
\coordinate (a') at ($(a)+(-0.25,0)$);
\coordinate (b) at (20:1);
\coordinate (c) at (2.5,0);
\coordinate (c') at ($(c)+(0.25,0)$);
\coordinate (d) at ($(200:1)+(c)$);
\draw (a) -- (a');
\draw (c) -- (c');
\draw (a) .. controls +(0:0.35) and \control{(b)}{(200:0.35)};
\draw (a) .. controls +(90:0.5) and \control{(b)}{(150:0.5)};
\draw (b) .. controls +(20:1) and \control{(c)}{(90:0.5)};
\draw (c) .. controls +(180:0.35) and \control{(d)}{(10:0.35)};
\draw (c) .. controls +(0,-0.5) and \control{(d)}{(-30:0.5)};
\draw (d) .. controls +(200:1) and \control{(a)}{(-90:0.5)};
\draw (b) .. controls +(-30:0.2) and \control{(d)}{(150:0.2)};
\draw [red] (1.25,0) circle (1.5);
\draw [red] (2.75,0) arc (0: -180: 1.5 and 0.8);
\draw [red, dotted] (2.75,0) arc (0: 180: 1.5 and 0.8);
\node [fill,white, circle, inner sep = 1.5pt] (na) at (a) {};
\node [fill,white, circle, inner sep = 1.5pt, rotate = 20] (nb) at (b) {};
\node [fill,white, circle, inner sep = 1.5pt] (nc) at (c) {};
\node [fill,white, circle, inner sep = 1.5pt, rotate = -30] (nd) at (d) {};
\fill [opacity = 0.05] (1.25,0) circle (1.5);
\draw (na.south) -- (na.north);
\draw (nb.west) -- (nb.east);
\draw (nc.west) -- (nc.east);
\draw (nd.west) -- (nd.east);
\end{scope}
\end{tikzpicture}
\caption{\label{pic_ex_G-trivial} Representation of a $G$-trivial ball and a non $G$-trivial ball.}
\end{center}
\end{figure}

We can now introduce bubble tangles.

\begin{defn}
Let $G$ be an embedding of a graph in $\Sp^3$ and $n \in \N$. A \bfdex{bubble tangle $\mathcal{T}$ of order $n\geq 2$}, is a collection of closed balls in $\Sp^3$ such that: 
\begin{enumerate}[label = (T\arabic*)]
	\item \label{def_T1} For every closed ball $B$ in $\mathcal{T}$, $|C(\partial B \cap G)| < n$.
	\item \label{def_T2} For every sphere $S$ in $\Sp^3$ transverse to $G$, if $|C(S \cap G)| < n$ then exactly one of the two closed balls $\bar B_1$ is in $\T$ or $\bar B_2$ is in $\T$, where $\Sp^3 \smallsetminus S = \ens{B_1, B_2}$. 
	\item \label{def_T3} For every triple of balls $B_1,B_2$ and $B_3$ induced by a double bubble transverse to $G$, $\ens{B_1, B_2, B_3} \not \subset \T$.  
	\item \label{def_T4} For every closed ball $B$ in $\Sp^3$, if $B$ is $G$-trivial and $|C(\partial B \cap G)| < n,$ then $B \in \T$.
\end{enumerate}
\end{defn}

For every transverse sphere $S$ such that $|C(S \cap G)| < n$, a bubble tangle chooses one of the two balls having $S$ as the boundary. We think of the ball in $\T$ as being a ``small side'', since~\ref{def_T4} stipulates that balls containing trivial parts of $G$ are in $\T$, while the other one is the ``big side''. Then the key property~\ref{def_T3} enforces that no three small sides forming a double bubble should cover the entire $\Sp^3$.

\begin{remark} Tangles in graph theory are often endowed with an additional axiom, specifying that small sides should be stable under inclusion (see e.g.,~\cite[Axiom~(T3A)]{Geelen_Matroid}). Our bubble tangles are weaker in the sense that we do not enforce this axiom, but still strong enough to guarantee duality (Theorem~\ref{T:duality}) and the connection to compression-representativity (Theorem~\ref{T:representativity}). Whether such an axiom can be additionally enforced in our definition of bubble tangle while preserving these properties is left as an open problem.
\end{remark}

\section{Obstruction and duality}\label{S:duality}

In this section, we prove Theorem~\ref{T:duality}: given a graph $G$ embedded in $\Sp^3$, the highest possible order of a bubble tangle is equal to the spherewidth of $G$. In the following, $G$ is an embedding of a graph in $\Sp^3$ and the order of all bubble tangles that we consider is at least $3$, the theorem being trivial otherwise. The proof is split into two inequalities: Proposition~\ref{prop_max_tangle_l_sw} and Proposition~\ref{P:nogapside} which together immediately imply Theorem~\ref{T:duality}.

\subparagraph*{Bubble tangles as obstruction.} We first show that a bubble tangle of order $k$ and a sphere decomposition of width less than $k$ cannot both exist at the same time.

\begin{proposition}\label{prop_max_tangle_l_sw}
Let $G$ be an embedding of a graph in $\Sp^3$. If $G$ admits a bubble tangle $\T$ of order $k$ then $sw(G) \geq k$.
\end{proposition}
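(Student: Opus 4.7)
I would argue by contradiction, assuming that in addition to $\T$ there is a sphere decomposition $f : \Sp^3 \to T$ of $G$ of width strictly less than $k$. The plan is to use $\T$ to orient each edge of $T$ away from its ``small side'' (the ball that $\T$ selects), check that every vertex of $T$ has at least one outgoing adjacent edge, and then derive a contradiction from the finiteness of $T$.

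For each edge $e$ and each transverse interior point $x \in \mathring e$, the sphere $S_x = f^{-1}(x)$ satisfies $|C(S_x \cap G)| < k$, so axiom \ref{def_T2} singles out exactly one of the two balls it bounds as the small side. The local analysis at the vertices of $T$ proceeds as follows. At a leaf $\ell$, picking $x$ close enough to $\ell$ makes $\bar B^x_\ell$ a small ball around $f^{-1}(\ell)$ disjoint from $G$, hence $G$-trivial; axiom \ref{def_T4} then places it in $\T$ and the adjacent edge is oriented away from $\ell$. At an interior vertex $v$, the preimage $f^{-1}(v)$ is a double bubble cutting $\Sp^3$ into three balls $\bar R_1, \bar R_2, \bar R_3$, one per adjacent edge $e_i$. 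Each $\partial \bar R_i$ has weight $< k$ as a limit of nearby transverse spheres, so axiom \ref{def_T2} applies, and axiom \ref{def_T3} forces at least one $\bar R_i \notin \T$, so that the corresponding $e_i$ is oriented away from $v$.

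The main technical step is to show that this orientation of $e$ does not depend on the chosen transverse $x \in \mathring e$. Suppose for contradiction there were transverse $x_1 < x_2$ in $\mathring e$ with $\bar B^{x_1}_u \in \T$ and $\bar B^{x_2}_v \in \T$. In the shell $f^{-1}([x_1,x_2])$ the set $G \cap $ shell consists of a few arcs. I would construct an auxiliary disk $\delta$ in this shell whose boundary is a small circle on $S_{x_1}$ bounding a small disk $\delta_{x_1} \subset S_{x_1}$, chosen so that the pair $(\delta_{x_1},\delta)$ either avoids $G$ entirely, or cleanly encloses a single arc of $G \cap $ shell puncturing each of $\delta_{x_1}$ and $\delta$ exactly once; these are precisely the two scenarios that match the $G$-trivial cases of the definition. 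The double bubble $(\delta_{x_1}, S_{x_1} \setminus \delta_{x_1}, \delta)$ then yields three balls: $\bar B^{x_1}_u$, a small $G$-trivial bubble $\Omega_B$ (which is in $\T$ by \ref{def_T4} since its boundary weight is at most $2 < k$), and a third ball $\Omega_C$. Axiom \ref{def_T3} forces $\Omega_C \notin \T$, so by \ref{def_T2} the strict enlargement $\bar B^{x_1}_u \cup \Omega_B = \Sp^3 \setminus \Omega_C$ lies in $\T$. Iterating this bubble extension along the shell, crossing the arcs of $G$ one by one using the single-arc case of $G$-triviality, one eventually reaches $\bar B^{x_2}_u \in \T$, contradicting $\bar B^{x_2}_v \in \T$ via \ref{def_T2}.

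With consistency established, every edge of $T$ carries a well-defined orientation, every vertex has at least one outgoing edge, and the walk in $T$ that always follows an outgoing edge is, by consistency, non-backtracking. Since $T$ is a finite tree such a walk cannot continue indefinitely, giving the desired contradiction. I expect the hardest part to be the consistency step: carefully orchestrating the choice of disks $\delta$ so that each bubble extension remains $G$-trivial and the process fills up the whole shell is where the topological subtlety of double bubbles versus transverse spheres really matters.
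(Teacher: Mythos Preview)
Your overall strategy matches the paper's: orient each edge of $T$ via $\T$, verify that no vertex is a sink (leaves by \ref{def_T4}, internal vertices by \ref{def_T3}), and reach a contradiction from finiteness of $T$. Your non-backtracking walk is just a rephrasing of the paper's ``every finite directed acyclic graph has a sink'' endgame.

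The genuine gap is in your consistency step, which you correctly identify as the crux. Your premise that $G \cap f^{-1}([x_1,x_2])$ ``consists of a few arcs'' is not true in general: the shell can contain thickened vertex-balls of $G$, arcs with both endpoints on the same boundary sphere (produced by tangencies), and---if several tangencies occur between $x_1$ and $x_2$---arcs that are knotted inside the shell. None of these match the two $G$-trivial templates you invoke, so the ``cross the arcs one by one'' iteration is not well-defined as you describe it. The same issue bites when you link the double-bubble balls $\bar R_i$ at an interior vertex to the nearby edge spheres: that passage already requires knowing the tangle choice is stable under the small isotopy from $\partial R_i$ to a nearby $f^{-1}(x)$.

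The paper resolves this by splitting the problem. Lemma~\ref{lem_braid_equi} isolates the case where the two spheres are \emph{braid-equivalent}, meaning the shell minus $G$ is genuinely a product $S_k \times [0,1]$; in that product structure one can explicitly build the $G$-trivial tubes around each strand and absorb vertex-balls into the homeomorphism rather than treat them as separate pieces. Lemma~\ref{lem_orientable_edges} then reduces a general edge to a chain of braid-equivalent segments by perturbing $f$ so that tangencies occur one at a time, and showing that a single tangency (a ``V'' shape or a vertex-ball touching the sphere) can be crossed by one $G$-trivial ball. This decomposition is precisely what prevents you from ever facing a knotted arc or a multi-branch vertex configuration in one go, and it is what your sketch is missing.
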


The proof of this proposition is similar to its graph-theoretical counterparts showing that tangles are an obstruction to branchwidth (see, e.g.,~\cite{Graph_Minors_X}). The main difference with these proofs lies in the continuous aspects of our sphere decomposition, which we control using Lemmas~\ref{lem_braid_equi} and~\ref{lem_orientable_edges}.

Let $S$ and $S'$ be two disjoint spheres in $\Sp^3$. Then $\Sp^3 \smallsetminus (S \cup S')$ has three connected components: two balls and a space $I$ homeomorphic to $\Sp^2 \times [0,1]$. The spheres $S$ and $S'$ are said to be \bfdex{braid-equivalent} if $(I \cup S \cup S') \smallsetminus G$ is homeomorphic to $S_k \times [0,1]$ where $S_k$ is the $2$-sphere with $k$ holes. The intuition behind this definition is that it means that $G$ forms a braid between $S$ and $S'$. The following lemma explains how braid-equivalent spheres interact with a bubble tangle.

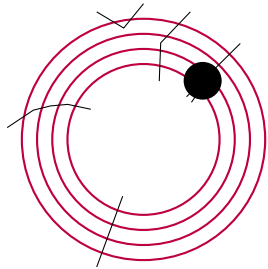
\begin{figure}[h]
\begin{center}
%\tikzsetnextfilename{braid_equi_def}
\begin{tikzpicture}
\draw [purple, thick] (0,0) circle (1cm);
\draw [purple, thick] (0,0) circle (1.2cm);
\draw [purple, thick] (0,0) circle (1.4cm);
\draw [purple, thick] (0,0) circle (1.6cm);
\fill (45:1.1) circle (0.25cm);
\draw (45: 0.8) -- (45:1.8);
\draw (40:0.9) -- (38:0.8);
\draw (75:0.8) -- (80 : 1.3) -- (70:1.8);
\draw (150:0.8) -- (155 : 1.1) -- (160:1.3) -- (165:1.5) -- (175:1.8);
\draw (-110:0.8) -- (-110:1.8);
\draw (110:1.8) -- (100:1.5) -- (90:1.8);
\end{tikzpicture}
\label{pic_braid_equi_def}
\caption{The three innermost spheres are braid-equivalent, not the fourth one.}
\end{center}
\end{figure}

\begin{lem}\label{lem_braid_equi}
Let $\T$ be a bubble tangle and $S,S'$ be two braid-equivalent spheres. Let us write $\Sp^3 \smallsetminus S = \ens{B_1,B_2}$ and $\Sp^3 \smallsetminus S' = \ens{B'_1,B'_2}$ such that $B_1 \subset B'_1$. If $B_1 \in \T$ then $B'_1 \in \T$.
\end{lem}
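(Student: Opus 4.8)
The plan is to give a \emph{direct} proof: starting from $\bar B_1 \in \T$, I will build a finite chain of closed balls $\bar B_1 = \bar B^{(0)} \subset \bar B^{(1)} \subset \dots \subset \bar B^{(m)} = \bar B'_1$ in which each $\bar B^{(l)}$ is $\bar B^{(l-1)}$ with a $G$-trivial ball glued on, and then propagate membership in $\T$ along this chain one step at a time using axioms~\ref{def_T3} and~\ref{def_T2}. In particular the hypothesis $B_1 \in \T$ is used only at the start and no argument by contradiction is needed. First I would record the bookkeeping facts: since $\bar B_1 \in \T$, axiom~\ref{def_T1} gives $k := |C(S \cap G)| < n$; braid-equivalence forces $|C(S' \cap G)| = k$ as well; and, writing $\bar I$ for the closed region between $S$ and $S'$, one has $\bar B'_1 = \bar B_1 \cup \bar I$.

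The geometric heart is the decomposition of the braid region. From the definition of braid-equivalence, I would argue that the pair $(\bar I, G \cap \bar I)$ is homeomorphic to $(\Sp^2 \times [0,1],\, \{p_1,\dots,p_k\} \times [0,1])$, with $S = \Sp^2 \times \{0\}$ and $S' = \Sp^2 \times \{1\}$; that is, $G$ meets $\bar I$ in $k$ unknotted, unlinked spanning arcs. Granting this, fix a tiling of the cross-sectional sphere $\Sp^2$ by $m \ge 2$ closed disks $E_1,\dots,E_m$ such that each $p_i$ lies in the interior of exactly one $E_l$, each $E_l$ contains at most one $p_i$, and no $p_i$ lies on any $\partial E_l$. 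Set $N_l := E_l \times [0,1]$ and $\bar B^{(l)} := \bar B_1 \cup N_1 \cup \dots \cup N_l$, so that $\bar B^{(m)} = \bar B_1 \cup \bar I = \bar B'_1$. Each $N_l$ is a ball meeting $G$ in at most one unknotted spanning arc, hence is $G$-trivial, with $|C(\partial N_l \cap G)| \le 2 < n$ (we may assume $n \ge 3$, which holds whenever the lemma is invoked). A short check shows $\bar B^{(l-1)} \cap N_l$ is a single disk lying in both $\partial \bar B^{(l-1)}$ and $\partial N_l$, so each $\bar B^{(l)}$ is again a ball; and a direct count gives $|C(\partial \bar B^{(l)} \cap G)| = k < n$.

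For the propagation step, assume $\bar B^{(l-1)} \in \T$. Since $N_l$ is $G$-trivial with $|C(\partial N_l \cap G)| < n$, axiom~\ref{def_T4} yields $N_l \in \T$. The triple $\big(\bar B^{(l-1)},\, N_l,\, Y_l\big)$ with $Y_l := \overline{\Sp^3 \smallsetminus \bar B^{(l)}}$ is the triple of balls induced by a double bubble transverse to $G$: the three pairwise intersections are the disk $\delta := \bar B^{(l-1)} \cap N_l$, the disk $\partial \bar B^{(l-1)} \smallsetminus \operatorname{int}\delta$, and the disk $\partial N_l \smallsetminus \operatorname{int}\delta$, all sharing the boundary circle $\partial\delta$, which can be arranged to avoid $G$ and its vertices. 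By axiom~\ref{def_T3} these three balls are not all in $\T$; since $\bar B^{(l-1)} \in \T$ and $N_l \in \T$, we get $Y_l \notin \T$. As $\partial Y_l = \partial \bar B^{(l)}$ has weight $k < n$, axiom~\ref{def_T2} puts exactly one of $Y_l$ and $\bar B^{(l)}$ into $\T$, forcing $\bar B^{(l)} \in \T$. Iterating from $\bar B^{(0)} = \bar B_1$ to $\bar B^{(m)} = \bar B'_1$ completes the proof.

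The step I expect to be the main obstacle is resisting the first instinct, namely to merge $\bar B_1$ and $\bar B'_2$ directly into a double bubble: this fails, because the complement of two disjoint balls straddling $\bar I$ is an $\Sp^2 \times [0,1]$ rather than a ball, and any single tube merging the two balls produces a third sphere of weight $2k$, which may exceed $n$, so that axiom~\ref{def_T1} blocks the use of axiom~\ref{def_T3}. Circumventing this by absorbing $\bar I$ through $G$-trivial slices — so that every sphere that occurs keeps weight $k$ or $\le 2$ — is the key device. The other delicate point is the one flagged above: extracting the standard product structure of the pair $(\bar I, G \cap \bar I)$ from the definition of braid-equivalence, which needs the (standard, but not entirely formal) fact that a properly embedded $1$-manifold in a product $3$-manifold whose complement is itself a product must be unknotted and unlinked.
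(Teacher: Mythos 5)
Your proof is correct and takes essentially the same route as the paper's: grow $B_1$ into $B'_1$ by absorbing the braid region through $G$-trivial balls, each meeting the accumulated ball in a single disk, and propagate membership in $\T$ via~\ref{def_T4},~\ref{def_T3} and~\ref{def_T2} at each double bubble; the only difference is that you decompose $\bar I$ into prisms over a tiling of the cross-sectional sphere, whereas the paper uses strand-tubes, then connecting bands along a tree, then one final complementary ball. The one point to soften is your claim that $(\bar I, G\cap\bar I)$ is homeomorphic to the standard product pair $(\Sp^2\times[0,1],\{p_1,\dots,p_k\}\times[0,1])$ --- braid-equivalence only gives a product structure on the \emph{complement} of $G$ (and $G\cap\bar I$ may contain pieces of thickened vertex balls), but this is all you actually need, since $G$-triviality of each prism $N_l$ is defined via $N_l\smallsetminus G\cong (E_l\smallsetminus t_i)\times[0,1]$, exactly as the paper argues for its tubes $\beta_i$.
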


\begin{proof}[Proof of Lemma~\ref{lem_braid_equi}]
 The idea of the proof is to show that if $S$ and $S'$ are braid-equivalent, we can cover the space $I = B'_1 \smallsetminus  \mathring B_1$ by $G$-trivial balls in order to "grow" $S$ into $S'$, which will yield the result by invoking~\ref{def_T3} inductively. Notice that braid equivalence implies that $S$ and $S'$ are transverse to $G$, and that $|C(S \cap G)| = |C(S' \cap G)| = k$ where $k$ is less than the order of $\T$ as $S \in \T$; hence, by \ref{def_T2} either $B'_1 \in \T$ or $B'_2 \in \T$. Let $h :S_k \times [0,1] \rightarrow I \smallsetminus G$ be a homeomorphism where $S_k$ is the $2$-sphere with $k$ holes $t_1, \ldots, t_k$, and let us work on $S_k \times [0,1]$.

Let $D_1, \ldots, D_k$ be $k$ disjoint closed disks on $\Sp^2$, each one covering a hole of $S_k$: for all $i \in \inter{1}{k}, \mathring D_i \smallsetminus t_i$ is an open annulus. For $i \in \inter{1}{k}$ notice that $h((D_i \smallsetminus t_i) \times [0,1])$ is homeomorphic to $B(0,1) \smallsetminus [-1,1]\times \ens{(0,0)}$ (see Figure~\ref{pic_disks}).

\begin{figure}[h]
\begin{center}
%\tikzsetnextfilename{braid_equi_lem}
\begin{tikzpicture}
\filldraw [fill opacity = 0.4, purple] (0,0) circle (1.3cm);
\coordinate (ct1) at (45:0.75);
\coordinate (ct2) at (-5:0.65);
\coordinate (ct3) at (125:1);
\coordinate (ct4) at (125:0.55);
\coordinate (ct5) at (15:0.15);
\coordinate (ct6) at (-85:0.9);
\foreach \i in {1,2,...,6}{
	\node [inner sep = 1pt, circle, fill, black] at (ct\i) {};
	\node [right] at (ct\i) {$t_\i$};
};

\begin{scope}[xshift = 3.5cm]
\filldraw [fill opacity = 0.4, purple] (0,0) circle (1.3cm);
\coordinate (ct1) at (45:0.75);
\coordinate (ct2) at (-5:0.65);
\coordinate (ct3) at (125:1);
\coordinate (ct4) at (125:0.55);
\coordinate (ct5) at (15:0.15);
\coordinate (ct6) at (-85:0.9);
\foreach \i in {1,2,...,6}{
	\node [inner sep = 1pt, circle, fill, black] at (ct\i) {};
};
\foreach \i in {3,4,5}{
	\node [draw, inner sep = 3pt, circle] (t\i) at (ct\i) {};
	\node [fill, inner sep = 3pt, opacity = 0.2, circle] at (ct\i) {};
	\node [below left] at (ct\i) {$D_\i$};
};
\foreach \i in {1,2,6}{
	\node [draw, inner sep = 3pt, circle] (t\i) at (ct\i) {};
	\node [fill, inner sep = 3pt, opacity = 0.2, circle] at (ct\i) {};
	\node [right] at (ct\i) {$D_\i$};
};
\end{scope}
\end{tikzpicture}
\caption{Disks $D_1, \ldots, D_k$ over holes of $\Sp^2$.}
\label{pic_disks}
\end{center}
\end{figure}

We first extend $h^{-1}$ to a continuous function $\varphi : I \rightarrow \Sp^2 \times [0,1]$ such that  $\varphi(D_i \times [0,1])$ is a closed ball of $S^3$ that is $G$-trivial, and belongs to $\T$ by $\ref{def_T4}$. It suffices to extend $h^{-1}$ on $G$. Notice that $G$ is a union of balls and edges with endpoints on these balls. On $I$, the closure of each $h(S_k \times \ens{x})$ for $x \in [0,1]$ naturally extends $h^{-1}$ to $\partial G \cap I$ by continuity: send the boundary components of $h(S_k \times \ens{x})$ to $\ens{t_i} \times \ens{x}$ with $\varphi$ when the boundary component extends the image of a neighborhood of $\ens{t_i} \times \ens{x}$. There only remains to deal with the balls of the thickened embedding. Let $B$ be a vertex ball of $G$ such that $I \cap B \neq \varnothing$ and let $X$ be one of the connected components of $B \cap I$. We layer $X$ continuously with disks such that each disk is bounded by a circle induced by one of the boundary components of  $h(S_k \times \ens{x})$. Then we send each disk to $\ens{t_i}\times \ens{x}$ with $\varphi$. 

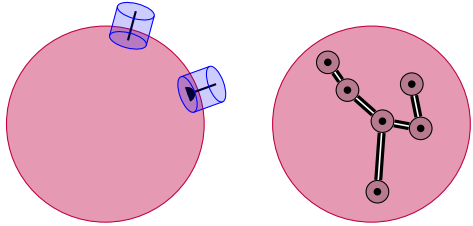
\begin{figure}[h]
\begin{center}
%\tikzsetnextfilename{braid_equi_lem2}
\begin{tikzpicture}
\begin{scope}[xshift = 3.5cm]
\filldraw [fill opacity = 0.4, purple] (0,0) circle (1.3cm);
\coordinate (ct1) at (45:0.75);
\coordinate (ct2) at (-5:0.65);
\coordinate (ct3) at (125:1);
\coordinate (ct4) at (125:0.55);[double, very thick] 
\coordinate (ct5) at (15:0.15);
\coordinate (ct6) at (-85:0.9);
\foreach \i in {1,2,...,6}{
	\node [inner sep = 1pt, circle, fill, black] at (ct\i) {};
};
\foreach \i in {3,4,5}{
	\node [draw, inner sep = 3pt, circle] (t\i) at (ct\i) {};
	\node [fill, inner sep = 3pt, opacity = 0.2, circle] at (ct\i) {};
};
\foreach \i in {1,2,6}{
	\node [draw, inner sep = 3pt, circle] (t\i) at (ct\i) {};
	\node [fill, inner sep = 3pt, opacity = 0.2, circle] at (ct\i) {};
};
\draw [double, very thick] (t1) -- (t2) -- (t5) -- (t6);
\draw [double, very thick] (t3) -- (t4) -- (t5);
\end{scope}

\filldraw [fill opacity = 0.4, purple] (0,0) circle (1.3cm);
\draw [thick] (75:1.15) -- (75:1.55);
\draw [blue, rotate around = {-15: (75:1.55)}] (75:1.55) circle (0.25 and 0.1);
\draw [blue, rotate around = {-15: (75:1.15)}] (75:1.15) circle (0.25 and 0.1);
\fill [blue, opacity = 0.2] [rotate around = {-15: (75:1.15)}] (75:1.15) ++ (0:0.25) arc  (0:-180: 0.25 and 0.1) -- ++(90:0.4) arc (180:0: 0.25 and 0.1) -- cycle;
\draw [blue] [rotate around = {-15: (75:1.15)}] (75:1.15) ++ (0:0.25) -- ++(90:0.4);
\draw [blue] [rotate around = {-15: (75:1.15)}] (75:1.15) ++ (0:-0.25) -- ++(90:0.4);

\begin{scope}[rotate = -55]
\draw [thick] (75:1.15) -- (75:1.55);
\fill [thick] [rotate around = {-15: (75:1.15)}] (75:1.15) ++ (0:0.1) arc (0:180:0.1) -- cycle;
\draw [blue, rotate around = {-15: (75:1.55)}] (75:1.55) circle (0.25 and 0.1);
\draw [blue, rotate around = {-15: (75:1.15)}] (75:1.15) circle (0.25 and 0.1);
\fill [blue, opacity = 0.2] [rotate around = {-15: (75:1.15)}] (75:1.15) ++ (0:0.25) arc  (0:-180: 0.25 and 0.1) -- ++(90:0.4) arc (180:0: 0.25 and 0.1) -- cycle;
\draw [blue] [rotate around = {-15: (75:1.15)}] (75:1.15) ++ (0:0.25) -- ++(90:0.4);
\draw [blue] [rotate around = {-15: (75:1.15)}] (75:1.15) ++ (0:-0.25) -- ++(90:0.4);
\end{scope}
\end{tikzpicture}
\caption{A $G$-trivial ball covering a strand and an example of tree connecting the disks $D_i$.}
\label{pic_braid_equi_lem}
\end{center}
\end{figure}

Now $\varphi^{-1} (D_i \times [0,1])$ is a closed ball $\beta_i$ in $\Sp^3$ by construction, and is $G$-trivial as $\varphi^{-1} (D_i \times [0,1]) \smallsetminus G = h((D_i \smallsetminus t_i) \times [0,1])$. Hence $\beta_i$ belongs to $\T$. Notice that $B_1$ intersects $\beta_i$ on a disk $D_i$, so that these balls induce a double bubble. It follows by \ref{def_T3} that $B_1 \cup \beta_1$ belongs to $\T$. As the balls $D_i$ are disjoint, $(B_1 \cup \bigcup_{j \in \inter{1}{i}}{\beta_j}) \cap \beta_{i+1} = D_{i+1}$. Hence, these balls induce a double bubble. By induction, we obtain that $B_1 \cup \bigcup_{j \in \inter{1}{k}}{\beta_j}$ is in $\T$. \\
At that point, we have that $(B_1 \cup \bigcup_{j \in \inter{1}{k}}{\beta_j}) \cap S' = h(\bigcup_{j \in \inter{1}{k}} D_i \times \ens{1})$. Now, we consider a tree $T$ connecting the disks $D_i$ as in the picture above. Then, we thicken the edges of that tree to replace each edge $e$ joining two disks with a band $r_e$. It follows that each $h(r_e \times [0,1])$ is a closed ball in $\Sp^3$ that intersects $B_1 \cup \bigcup_{j \in \inter{1}{k}}{\beta_j}$ on a $U$ shaped disk and  is disjoint from $G$ (thus it is $G$-trivial). Using both \ref{def_T3} and \ref{def_T4} on each of the balls induced by the bands, we add balls to $B_1 \cup \bigcup_{j \in \inter{1}{k}}{\beta_j}$ to get $B \in \T$ such that $B$ intersects $S'$ on a single connected component: $h((\bigcup_{e \in T} r_e \cup \bigcup_{i \in \inter{1}{k}} D_i) \times \ens{1})$. 
Finally, we add to $B$ the ball $\overline{B'_1 \smallsetminus B}$ disjoint from $G$ which intersects $S'$ on a disk : the complementary of $h((\bigcup_{e \in T} r_e \cup \bigcup_{i \in \inter{1}{k}} D_i) \times \ens{1}$. Since its intersection with $B$ is a disk, they induce a double bubble and $B'_1 \in \T$ by \ref{def_T3}.
\end{proof}

In the following, we will assume that there exists a bubble tangle $\T$ of order $k$ and a sphere decomposition $f : \Sp^3 \rightarrow T$ of $G$ of width less than $k$ in order to reach a contradiction.
Let $e = (u,v) \in E(T)$ be an edge and $x$ be a point of $e$ so that $f^{-1}(x)$ is transverse to $G$. Notice that $x$ cuts $T$ in two trees : $T_u(x)$ and $T_v(x)$ where $T_u(x)$ is the tree containing the endpoint $u$. By definition $f^{-1} (x) = S$ is a sphere in $\Sp^3$ such that $|C(G \cap S)| < k$. It follows by \ref{def_T2} that exactly one of $f^{-1} (T_u(x))$ or $f^{-1} (T_v(x))$ belongs to $\T$. We define an \bfdex{orientation} $o: T \rightarrow V(T)$ induced by $\T$ as follows: if $f^{-1} (x)$ is transverse to $G$, $o(x):=v$ if $f^{-1} (T_u(x)) \in \T$, or $o(x):=u$ if $f^{-1} (T_v(x)) \in \T$. In other words, at a point $x$ where $f^{-1} (x)$ is transverse to $G$ the orientation $o$ orients $x$ outwards, toward the  ``big side''. If $f^{-1}(x)$ has a tangency with $G$, note that for any close enough neighbor $y$ of $x$, $f^{-1}(y)$ is transverse to $G$, and we define $o(x):=o(y)$, making an arbitrary choice if needed.
As we consider edges of the tree $T$ to be intervals, we will use interval notations: we write $[u,v]$ for the edge $(u,v)$, and more generally $[x,y]$ to describe all the points on the edge between $x$ and $y$. We say that an orientation $o$ is \bfdex{consistent} if for any $x$ on some edge such that $f^{-1}(x)$ is transverse to $G$, $o$ is constant on $[x,o(x)]$.
The following lemma shows that the orientation $o$ can be assumed to be consistent on all the edges of the tree $T$.

\begin{lem}\label{lem_orientable_edges}
Let us assume that there exists a bubble tangle $\T$ of order $k$ and a sphere decomposition $f : \Sp^3 \rightarrow T$ of $G$ of width less than $k$. Then there exists a sphere decomposition to the same tree such that $o$ is consistent on $T$.
\end{lem}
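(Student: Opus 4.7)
The plan is to combine Lemma~\ref{lem_braid_equi} with a local analysis at each tangency of $f^{-1}$ with $G$. The tangle axioms, together with braid-equivalence between nearby transverse spheres, will force the orientation $o$ along each edge to be monotone in exactly the sense of consistency.

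First, I argue that between any two tangent values on the same edge of $T$, the orientation $o$ is constant. Suppose $x < y$ lie on the same edge and no sphere $f^{-1}(z)$ for $z \in [x,y]$ is tangent to $G$. The family $(f^{-1}(z))_{z \in [x,y]}$ is then a continuous nested family of transverse spheres with the same intersection pattern with $G$, so the shell $f^{-1}([x,y]) \smallsetminus G$ is homeomorphic to $S_k \times [0,1]$, where $k = |C(f^{-1}(x) \cap G)|$. Hence $f^{-1}(x)$ and $f^{-1}(y)$ are braid-equivalent and Lemma~\ref{lem_braid_equi} gives $o(x) = o(y)$. Consequently, flips of $o$ on any edge occur only at the finitely many tangent parameter values.

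Next, I would modify $f$ locally -- preserving the underlying tree $T$ -- so that each tangency $t$ on an edge $[u,v]$ is \emph{elementary}: the two nearby transverse spheres $S^- = f^{-1}(t-\epsilon)$ and $S^+ = f^{-1}(t+\epsilon)$ agree outside a small PL closed ball $B_{\text{bump}}$ surrounding the tangency point, and $B_{\text{bump}} \cap G$ is a single unknotted chord with both endpoints on $\partial B_{\text{bump}}$. Such a normal form can be produced by PL surgery in a neighborhood of each tangency; tangencies involving a thickened vertex of $G$ are decomposed into a sequence of edge-type elementary tangencies by first shrinking the vertex ball and then sweeping across it. In this normal form, $B_{\text{bump}}$ is $G$-trivial with $|C(\partial B_{\text{bump}} \cap G)| = 2 < k$, so axiom~\ref{def_T4} gives $B_{\text{bump}} \in \T$.

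The crucial observation is that at such an elementary tangency, the three balls $B^- = f^{-1}(T_u(t-\epsilon))$, $B_{\text{bump}}$, and $V^+ = \Sp^3 \smallsetminus f^{-1}(T_u(t+\epsilon))$ form a transverse double bubble: their three pairwise shared disks -- the inner bump disk $S^- \cap B_{\text{bump}}$, the outer bump disk $S^+ \cap B_{\text{bump}}$, and the common part of $S^-$ and $S^+$ outside the bump -- all share the equator circle of $B_{\text{bump}}$. Axiom~\ref{def_T3} then forbids all three from lying in $\T$ simultaneously. Since $B_{\text{bump}} \in \T$, if $B^- \in \T$ (i.e.\ $o(t-\epsilon) = v$), we conclude $V^+ \notin \T$, hence by~\ref{def_T2} applied to $S^+$ we get $f^{-1}(T_u(t+\epsilon)) \in \T$, i.e.\ $o(t+\epsilon) = v$. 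So as one traverses the edge from $u$ to $v$, the orientation can change from $u$ to $v$ at a tangency, but never back from $v$ to $u$, which is precisely the definition of consistency.

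The main technical obstacle is producing the elementary normal form at tangencies -- especially those involving thickened vertices of $G$ -- by PL surgery, while ensuring the resulting family still assembles into a valid sphere decomposition onto the same tree $T$. This is the delicate step; once it is in place, the tangle-axiomatic argument above yields consistency on every edge.
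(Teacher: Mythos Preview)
Your first paragraph is fine and matches the paper: between tangent values the spheres are braid-equivalent and Lemma~\ref{lem_braid_equi} pins the orientation.

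The gap is in your ``elementary normal form'' and the double bubble you extract from it. You ask that $S^- = f^{-1}(t-\epsilon)$ and $S^+ = f^{-1}(t+\epsilon)$ \emph{agree} outside $B_{\text{bump}}$. But $f$ is a function $\Sp^3 \to T$: preimages of distinct points are disjoint, so $S^- \cap S^+ = \varnothing$ in any sphere decomposition. There is no ``common part of $S^-$ and $S^+$ outside the bump'' to serve as your third disk. If instead you keep $S^-$ and $S^+$ genuinely disjoint (as they must be), then $B^-$ and $V^+$ are separated by the entire shell $f^{-1}([t-\epsilon,t+\epsilon])$: they share no $2$-disk, and $B^- \cup B_{\text{bump}} \cup V^+$ misses the whole shell away from the bump, so these three balls do not cover $\Sp^3$ and cannot be the balls of a double bubble. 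Your single application of~\ref{def_T3} therefore has no configuration to apply to.

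The paper's proof handles exactly this issue. It does not try to jump from $B^-$ to $V^+$ in one step. Instead it builds a $G$-trivial ball $B$ around the tangent strand that sits \emph{inside the shell} and meets $S^-$ in a single disk; then $(B^-, B, \overline{\Sp^3 \smallsetminus (B^- \cup B)})$ is a genuine double bubble, and~\ref{def_T3} with~\ref{def_T4} gives $B^- \cup B \in \T$. After this, the remaining part of the shell minus $G$ is a product $S_k \times [0,1]$, and one repeats the incremental argument of Lemma~\ref{lem_braid_equi} --- adding one $G$-trivial ball per strand, then bands, then the leftover disk --- applying~\ref{def_T3} at each step, until the accumulated ball is exactly $f^{-1}(T_u(t+\epsilon))$. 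The monotonicity of $o$ across the tangency comes from this \emph{sequence} of double-bubble steps, not from a single one. Your proposal correctly identifies the role of a $G$-trivial bump ball and axiom~\ref{def_T3}, but collapses a multi-step growth process into one impossible configuration.
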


\begin{proof}[Proof of Lemma~\ref{lem_orientable_edges}]
  Lemma~\ref{lem_braid_equi} shows that orientations are consistent between braid-equivalent spheres. Therefore, in order to prove Lemma~\ref{lem_orientable_edges}, we want to control what happens outside of these braid-equivalences, that is, when the quantity $|C(f^{-1} (x) \cap G)|$ varies on an edge of $T$. These variations, local maxima and minima, occur only when spheres have a tangency with $G$, since two spheres transverse to $G$ and close enough are braid-equivalent. By applying a small perturbation to $f$ if needed, we can assume that the spheres of $f$ tangent to $G$ have at most one tangency. 

Let $e = [u,v]$ be an edge of $T$ and let $x$ be some point interior to it such that $f^{-1} (x)$ has no tangency with $G$. Up to switching $u$ and $v$, we can assume that $o(x) = v$. Then, we let $t_1, \ldots, t_n \in [x,v]^n$ be all the points in $[x,v]$ where $f^{-1}(t_j)$ has a tangency with $G$. By Lemma~\ref{lem_braid_equi}, for all $y \in [x,t_1)$, we have that $o(x) = o(y) =v$ since $f^{-1} (y)$ is braid-equivalent to $f^{-1}(x)$ (since all the intersections between the spheres and $G$ are transverse between $x$ and $y$, one can glue the local homeomorphisms to get one between $f^{-1} ([x,y])$ and $S_j \times [0,1]$ for some $j$). There only remains to show that $o(x) = o(y)$ for all $y \in (t_1, t_2)$ and we will be done by induction, since this will imply by our definition of $o$ that $o(t_1)$ is also equal to $o(x)$. 
 We set $S' = f^{-1} (t_1)$ and $S_y = f^{-1} (y)$ for $y \in (t_1,t_2)$.

\begin{itemize}
\item We first consider the case where there is a local maximum on $t_1$, i.e., locally around the tangency $\tau$ the embedding of $G$ has either the ``V'' shape described in Figure~\ref{pic_orient_1} (recall that both the spheres and the embedding of the graph are piecewise-linear) or $S_1$ is tangent to a vertex.
The idea is to deal first with this tangency and then with every other strand in a similar manner as in the proof of Lemma~\ref{lem_braid_equi}. As we will use an argument similar to the proof of Lemma~\ref{lem_braid_equi}, let us introduce similar notations: $f^{-1} (x) = S$,  $B_1 = f^{-1} (T_u(x))$, $f^{-1} (t_1) = S'$, and $I$ the space between $S$ and $S_y$: $I = \Sp^3 \smallsetminus (\mathring B_1 \cup \mathring f^{-1} (T_v(y)))$.

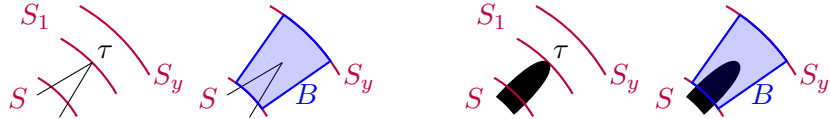
\begin{figure}[h]
\begin{center}
%\tikzsetnextfilename{orient_case_1_edge}
\begin{tikzpicture}[scale = 2]
\draw [purple, thick] (30:1) arc (30:60:1);
\draw [purple, thick] (30:0.7) arc (30:60:0.7);
\draw [purple, thick] (30:1.25) arc (30:60:1.25);
\draw (55:0.6) -- (45:1);
\draw (45:1 )-- (35:0.6);
\node [purple, above left] at (60:1) {$S_1$};
\node [purple, below left] at (60:0.7) {$S$};
\node at (45:0.95) [above right] {$\tau$};
\node [purple,right] at (30:1.2) {$S_y$};

\begin{scope}[xshift = 1.25 cm]
\draw [purple, thick] (30:0.7) arc (30:60:0.7);
\draw [purple, thick] (30:1.25) arc (30:60:1.25);
\draw (55:0.6) -- (45:1);
\draw (45:1 )-- (35:0.6);
\node [purple, below left] at (60:0.7) {$S$};
\node [purple,right] at (30:1.2) {$S_y$};
\filldraw [blue, fill opacity = 0.2, thick] (55:0.7) -- (55:1.25) arc (55:35:1.25) -- (35:0.7) arc (35:55:0.7) -- cycle;
\node [blue] at (30:1) {$B$};
\end{scope}

\begin{scope}[xshift = 3cm]
\draw [purple, thick] (30:1) arc (30:60:1);
\draw [purple, thick] (30:1.25) arc (30:60:1.25);
\node [purple, above left] at (60:1) {$S_1$};
\node [purple, below left] at (60:0.7) {$S$};
\node at (45:0.95) [above right] {$\tau$};
\node [purple,right] at (30:1.2) {$S_y$};
\fill (37:0.6) .. controls +(37:0.05) and \control{(45:1)}{(-45:0.1)} .. controls +(135:0.1) and \control{(53:0.6)}{(53:0.05)} arc (53:37:0.6);
\draw [purple, thick] (30:0.7) arc (30:60:0.7);
\end{scope}

\begin{scope}[xshift = 4.25 cm]
\draw [purple, thick] (30:1.25) arc (30:60:1.25);
\fill (37:0.6) .. controls +(37:0.05) and \control{(45:1)}{(-45:0.1)} .. controls +(135:0.1) and \control{(53:0.6)}{(53:0.05)} arc (53:37:0.6);
\draw [purple, thick] (30:0.7) arc (30:60:0.7);
\node [purple, below left] at (60:0.7) {$S$};
\node [purple,right] at (30:1.2) {$S_y$};
\filldraw [blue, fill opacity = 0.2, thick] (55:0.7) -- (55:1.25) arc (55:35:1.25) -- (35:0.7) arc (35:55:0.7) -- cycle;
\node [blue] at (30:1) {$B$};
\end{scope}
\end{tikzpicture}
\caption{The tangency $\tau$ on $S_1$ and a cover by a $G$-trivial ball.}
\label{pic_orient_1}
\end{center}
\end{figure}

Let $g$ be the connected component of $G \cap I$ containing $\tau$. Because $S$ and $G$ are transverse, $g \cap S$ is either a closed disk if $\tau$ belongs to a ball, a point if $\tau$ is a degree one vertex, or two points otherwise. Let us choose $D$ a closed disk on $S$ containing $g \cap S$ on its interior and disjoint from $G$ otherwise. We can now define $B$, a closed ball contained in $I$ intersecting $S$ on $D$, $S'$ on a disk, and containing $g$ in its interior (see Figure \ref{pic_orient_1}). Then $B$ is $G$-trivial. Indeed, either $\tau$ is the point of a ``V'' shape and $B \smallsetminus G$ is homeomorphic to $B(0,1) \smallsetminus [-1,1] \times \ens{(0,0)}$, or $B \smallsetminus G$ is homeomorphic to $B(0,1) \smallsetminus [-1,0]\times \ens{(0,0)}$ in the other cases (up to homeomorphism of $B \smallsetminus G$ the thickness of $g$ here does not matter, it can be a point or a segment or a closed ball). Furthermore, $B$ is by definition transverse to $G$ so that by \ref{def_T3}, $B \cup B_1$ belongs to $\T$.

At this stage,  the space $I \smallsetminus (B \cup G)$ is homeomorphic to $S_k \times [0,1]$ so that we can use a similar technique as in the proof of Lemma~\ref{lem_braid_equi} to deal with the other elements of $G$ in $I$ and cover what remains with balls disjoint from $B$, we apply~\ref{def_T3} inductively and conclude that $f^{-1} (y)$ belongs to $\T$.

\item If there is a local minimum on $S_1$, we let $\tau$ denote the point of tangency on $S_1$. Either $o(y) = o(x) = v$ and the proof is over, or $o(y) = u$. In the latter case, the setup is exactly the one of the previous case: $\tau$ is now a local maximum from $S_y$ to $S$ so that $o(x) = o(y) = u$ which is in contradiction with \ref{def_T2} on $f ^{-1}(x)$ by definition of $o$. 
\end{itemize}
\end{proof}

Lemma~\ref{lem_orientable_edges} ensures that for any edge $e=(u,v)$ of $T$, there exists a point $x_e$ so that all the points in $(x_e,v)$ are oriented towards $v$, while all the points in $(u,x_e)$ are oriented towards $u$. Hence, by subdividing each edge $e$ of $T$ at this $x_e$, we can think of $o$ as assigning a direction to each edge.
This directed tree is the main tool that we use in the proof of Proposition~\ref{prop_max_tangle_l_sw}.

\begin{proof}[Proof of Proposition~\ref{prop_max_tangle_l_sw}]
  Let us assume that there exists both a bubble tangle of order $k$ and a sphere decomposition $f:\Sp^3 \rightarrow T$ of width less than $k$.
  By Lemma~\ref{lem_orientable_edges}, there exists a sphere decomposition of width less than $k$ so that the orientation $o$ as defined above is consistent. Denoting by $T'$ the tree $T$ where each edge has been subdivided once, this orientation corresponds to a choice of direction for each edge of $T'$. Every directed acyclic graph, and thus in particular the tree $T'$ contains at least one sink, see Figure~\ref{pic_orient_half}.

  This sink cannot be a leaf of the tree. Indeed, let $e = [\ell,u]$ be an edge of $T$ incident to a leaf $\ell$. By definition, $f^{-1} (\ell)$ is a point disjoint from $G$, and thus for any $y$ in $(\ell, u)$ close enough to $\ell$, $f^{-1} (y)$ is a sphere disjoint from $G$. Hence $f^{-1} (T_\ell(y))$ is a $G$-trivial ball and belongs to $\T$. It follows that all edges incident to leaves of $T'$ are oriented inward. This sink cannot be a degree-two vertex either, as the tree $T'$ was defined in such a way that the two edges adjacent to a degree-two vertex are always oriented outwards. Finally, this sink cannot be a degree-three vertex as this would mean that the three balls induced by a double bubble are in $\T$, which would violate~\ref{def_T3}. We have thus reached a contradiction.\end{proof}

\begin{figure}[ht]
\begin{center}
%\tikzsetnextfilename{orient_half_tree}
\begin{tikzpicture}
\coordinate (d0) at (0,0);
\coordinate (d1) at (30:0.5);
\coordinate (d2) at (150:0.5);
\coordinate (d3) at (-90:0.5);
\coordinate (d4) at (30:1);
\coordinate (d5) at (150:1);
\coordinate (d6) at (-90:1);
\coordinate (d7) at ($(d4)+(90:0.5)$);
\coordinate (d8) at ($(d4)+(-30:0.5)$);
\coordinate (d9) at ($(d4)+(90:1)$);
\coordinate (d10) at ($(d4)+(-30:1)$);
\coordinate (d11) at ($(d5)+(90:0.5)$);
\coordinate (d12) at ($(d5)+(-150:0.5)$);
\coordinate (d13) at ($(d5)+(90:1)$);
\coordinate (d14) at ($(d5)+(-150:1)$);
\coordinate (d15) at ($(d14)+(-90:1)$);
\coordinate (d16) at ($(d14)+(-90:0.5)$);
\coordinate (d17) at ($(d14)+(150:1)$);
\coordinate (d18) at ($(d14)+(150:0.5)$);

\foreach \i in {0,1,...,18}
{
	\node [fill, circle, inner sep = 1pt] at (d\i) {};
}

\draw (d0) -- (d1) node [red, midway, sloped] {<} ;
\draw (d0) -- (d2) node [red, midway, sloped] {>} ;
\draw (d0) -- (d3) node [red, midway, sloped] {<} ;
\draw (d1) -- (d4) node [red, midway, sloped] {>} ;
\draw (d2) -- (d5) node [red, midway, sloped] {>} ;
\draw (d3) -- (d6) node [red, midway, sloped] {<} ;
\draw (d4) -- (d7) node [red, midway, sloped] {<} ;
\draw (d7) -- (d9) node [red, midway, sloped] {<} ;
\draw (d4) -- (d8) node [red, midway, sloped] {<} ;
\draw (d8) -- (d10) node [red, midway, sloped] {<} ;
\draw (d5) -- (d11) node [red, midway, sloped] {<} ;
\draw (d11) -- (d13) node [red, midway, sloped] {<} ;
\draw (d5) -- (d12) node [red, midway, sloped] {<} ;
\draw (d12) -- (d14) node [red, midway, sloped] {<} ;
\draw (d14) -- (d16) node [red, midway, sloped] {<} ;
\draw (d16) -- (d15) node [red, midway, sloped] {<} ;
\draw (d14) -- (d18) node [red, midway, sloped] {>} ;
\draw (d18) -- (d17) node [red, midway, sloped] {>} ;

\begin{scope}[xshift = -5cm]
\coordinate (d0) at (0,0);
\coordinate (d1) at (30:1);
\coordinate (d2) at (150:1);
\coordinate (d3) at (-90:1);
\coordinate (d4) at ($(d1)+(90:1)$);
\coordinate (d5) at ($(d1)+(-30:1)$);
\coordinate (d6) at ($(d2)+(90:1)$);
\coordinate (d7) at ($(d2)+(-150:1)$);
\coordinate (d8) at ($(d7)+(-90:1)$);
\coordinate (d9) at ($(d7)+(150:1)$);

\foreach \i in {0,1,...,9}
{
	\node [fill, circle, inner sep = 1pt] at (d\i) {};
}

\draw (d0) -- (d1) node [red, pos = 0.3, sloped] {<}  node [red, pos = 0.6, sloped] {>}  node [red, pos = 0.85, sloped] {>};
\draw (d0) -- (d2) node [red, pos = 0.3, sloped] {>}  node [red, pos = 0.6, sloped] {>}  node [red, pos = 0.85, sloped] {>} ;
\draw (d0) -- (d3)  node [red, pos = 0.3, sloped] {<}  node [red, pos = 0.6, sloped] {<}  node [red, pos = 0.85, sloped] {<} ;
\draw (d1) -- (d4) node [red, pos = 0.3, sloped] {<}  node [red, pos = 0.55, sloped] {<}  node [red, pos = 0.85, sloped] {<} ;
\draw (d1) -- (d5) node [red, pos = 0.3, sloped] {<}  node [red, pos = 0.6, sloped] {<}  node [red, pos = 0.85, sloped] {<};
\draw (d2) -- (d6)  node [red, pos = 0.3, sloped] {<}  node [red, pos = 0.6, sloped] {<}  node [red, pos = 0.85, sloped] {<};
\draw (d2) -- (d7)  node [red, pos = 0.3, sloped] {<}  node [red, pos = 0.6, sloped] {<}  node [red, pos = 0.85, sloped] {<} ;
\draw (d8) -- (d7)  node [red, pos = 0.3, sloped] {>}  node [red, pos = 0.6, sloped] {>}  node [red, pos = 0.85, sloped] {>} ;
\draw (d9) -- (d7)  node [red, pos = 0.3, sloped] {>}  node [red, pos = 0.6, sloped] {>}  node [red, pos = 0.9, sloped] {>} ;
\end{scope}
\end{tikzpicture}
\caption{An example of $T'$ from $T$ leading to at least one sink.}
\label{pic_orient_half}
\end{center}
\end{figure}
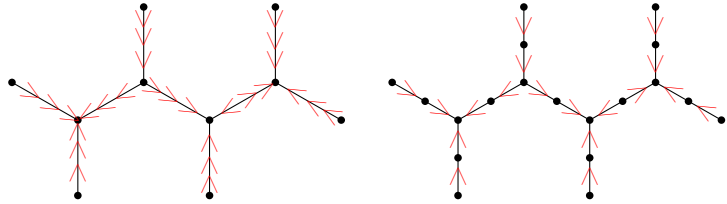

\subparagraph*{Tightness of the obstruction.} In this paragraph, we show that bubble tangles form a tight obstruction to sphere decompositions, in the sense that a bubble tangle of order $k$ exists whenever a sphere decomposition of width less than $k$ does not exist.

\begin{proposition}\label{P:nogapside}
Let $G$ be an embedding of a graph in $\Sp^3$ and $k$ be an integer at least three. If $G$ does not admit a sphere decomposition of width less than $k$, then there exists a bubble tangle of order $k$.
\end{proposition}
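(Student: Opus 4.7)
My plan is to prove the contrapositive: assuming no bubble tangle of order $k$ exists on $G$, I will construct a sphere decomposition of $G$ of width less than $k$, contradicting the hypothesis.

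I start from the natural candidate set of ``small sides''. Call a ball $B \subset \Sp^3$ with $\partial B$ transverse to $G$ and $|C(\partial B \cap G)| < k$ \emph{sweepable} if it admits a \emph{ball sphere decomposition} of width less than $k$, meaning a continuous map $f_B \colon B \to T_B$ to a trivalent tree with a distinguished leaf $\ell_0$ such that $f_B^{-1}(\ell_0) = \partial B$, every other leaf maps to a point off $G$, interior vertices map to double bubbles transverse to $G$, and interior edge points map to spheres transverse or finitely tangent to $G$. Let $\T_0$ be the collection of sweepable balls. Axioms~\ref{def_T1} and~\ref{def_T4} hold trivially---the latter because a $G$-trivial ball has boundary weight at most $2 < k$ (using $k \geq 3$) and can be swept by nested spheres shrinking to a point. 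Axiom~\ref{def_T3} follows by a gluing principle: if the three balls of some double bubble were all sweepable, one could glue their ball sphere decompositions at a central vertex whose preimage is this double bubble, obtaining a sphere decomposition of $\Sp^3$ of width less than $k$ and contradicting the hypothesis. The same gluing argument, along a single sphere rather than a double bubble, gives the ``at most one side'' half of~\ref{def_T2} for $\T_0$.

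The remaining and main task is to upgrade $\T_0$ so that the ``at least one side'' half of~\ref{def_T2} also holds. My plan is to apply Zorn's lemma to the poset of collections $\T \supseteq \T_0$ satisfying~\ref{def_T1},~\ref{def_T3},~\ref{def_T4} and the ``at most one'' half of~\ref{def_T2}, ordered by inclusion; the preserved axioms are stable under unions of chains. Let $\T^*$ be maximal. Either $\T^*$ is a bubble tangle of order $k$, contradicting our assumption, or there is a sphere $S$ of weight less than $k$ with both sides $B_1, B_2 \notin \T^*$. By maximality, adding either $B_i$ to $\T^*$ must violate some axiom; since~\ref{def_T1},~\ref{def_T4}, and the ``at most one'' half of~\ref{def_T2} for $S$ (as $B_{3-i} \notin \T^*$) are preserved, the violation must be of~\ref{def_T3}: there exist, for each $i \in \{1,2\}$, double bubbles yielding triples $(B_i, C_i, C_i')$ with $C_i, C_i' \in \T^*$.

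The main obstacle will be to convert this configuration into a sphere decomposition of $\Sp^3$ of width less than $k$, yielding the desired contradiction. The difficulty is that members of $\T^* \setminus \T_0$ are not a priori sweepable, so direct gluing is insufficient: one needs a structural argument showing that every ball in $\T^*$ can be reduced via a tree of forbidden double bubbles to sweepable balls in $\T_0$. I expect the proof to follow the template of the Robertson--Seymour duality theorem for branchwidth and tangles: one sets up a tree of reductions rooted at $S$ with the two maximality-failure double bubbles providing its first branching, then recurses, reaching sweepable leaves in $\T_0$ and stitching the resulting ball sphere decompositions into a global one. The topological control provided by braid-equivalence (Lemma~\ref{lem_braid_equi}) should play a complementary role to its use in Proposition~\ref{prop_max_tangle_l_sw}, ensuring that the local stitching is continuous and transverse across consecutive layers; a small perturbation may also be needed to bring the various double bubbles arising in the recursion into general position.
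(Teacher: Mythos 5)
Your first steps are sound: the collection $\T_0$ of sweepable balls does satisfy \ref{def_T1}, \ref{def_T4}, \ref{def_T3} and the ``at most one side'' half of \ref{def_T2} (otherwise gluing the ball decompositions along the sphere or the double bubble yields a sphere decomposition of width less than $k$), and Zorn's lemma applies to your poset. The gap is in the final step, and it is not a routine one. A maximal collection $\T^*$ produced by Zorn's lemma contains balls that were added with no structural justification whatsoever: maximality only tells you that adding \emph{more} balls breaks an axiom; it gives you no handle on the balls already present. So when the failure of \ref{def_T2} produces the double bubbles $(B_i, C_i, C_i')$ with $C_i, C_i' \in \T^*$, you know nothing about $C_i$, $C_i'$ beyond membership in $\T^*$, and the ``tree of reductions reaching sweepable leaves in $\T_0$'' that you invoke does not exist: there is no well-founded measure in this continuous setting to make such a recursion terminate (in the Robertson--Seymour graph argument the analogous recursion is grounded in the finiteness of the vertex set, which has no counterpart here).

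The paper's fix is to change the invariant that Zorn's lemma maximizes against. One takes $\T$ maximal among collections satisfying \ref{def_T1}, \ref{def_T4} and the property that \emph{no partial sphere decomposition of width less than $k$ conforms to $\T$}, where a partial sphere decomposition may have ``non-trivial leaves'' whose preimages are balls braid-equivalent to (and contained in) members of $\T$. With this invariant, \ref{def_T3} holds automatically for the maximal $\T$: a violating double bubble directly yields a conforming partial decomposition with three non-trivial leaves. And when \ref{def_T2} fails for a sphere $S$ with sides $B_1,B_2$, maximality hands you partial sphere decompositions conforming to $\T \cup \{B_1\}$ and to $\T \cup \{B_2\}$; after adjusting their non-trivial leaves to be exactly $B_1$ and $B_2$ (Lemma~\ref{lem_partial_truncate}, which is where braid-equivalence actually enters), these paste along $S$ into a partial decomposition conforming to $\T$, the desired contradiction. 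The notion of conformity is the device that replaces your missing recursion: it carries, throughout the transfinite construction, exactly the ``partial sweepability'' information that your $\T^*$ discards.
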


The idea of the proof is to show that, given a collection of closed balls satisfying the axioms~\ref{def_T1} and~\ref{def_T4} of bubble tangles, then either we can extend this collection to a bubble tangle, or there exists a \bfdex{partial sphere decomposition} of width $k$ which sweeps the space "between" the balls of the collection.
We first introduce the relevant definition.

Let $G$ a graph embedded in $\Sp^3$. A \bfdex{partial sphere decomposition} of $G$ is a continuous map $f : \Sp^3 \rightarrow T$ where $T$ is a trivalent tree with at least one edge such that: 
	\begin{itemize}
		\item For all $x \in L(T),~f^{-1}(x)$ is a point disjoint from $G$ or a closed ball $B$.
		\item For all $x \in V(T) \smallsetminus L(T),~f^{-1} (x)$ is a double bubble transverse to $G$.
		\item For all $x$ interior to an edge, $f^{-1} (x)$ is a sphere transverse or finitely tangent to $G$.
	\end{itemize}
The leaves of $T$ having preimages by $f$ which are not points are called \bfdex{non-trivial leaves}.
Let $G$ be a graph embedding in $\Sp^3$ and $\mathcal{A}$ be a collection of closed balls in $\Sp^3$. A partial sphere decomposition $f$ \bfdex{conforms} to $\mathcal{A}$ if, for all $x \in L(T),$ $f^{-1}(x)$ is either a point disjoint from $G$, or a closed ball $B$ such that there exists $A \in \mathcal{A}$ such that $\partial B$ and $\partial A$ are braid equivalent and $B \subset A$. In the latter case we say that $x$ conforms to $A$.
The width of a partial sphere decomposition is defined like the width of standard sphere decompositions: it is the supremal weight of spheres that are pre-images of points in the interiors of edges of $T$.

Now, the proof of Proposition~\ref{P:nogapside} hinges on the following key lemma. Its proof is similar to branchwidth-tangle duality proofs~\cite{Graph_Minors_X} in that it builds a bubble tangle inductively, but the continuous nature of our objects makes us rely on transfinite induction in the form of Zorn's lemma.

\begin{lem}\label{prop_extension_conformity}
Let $G$ be an embedding of a graph in $\Sp^3$, $k$ be an integer at least $3$ and $\mathcal{A}$ be a collection of closed balls in $\Sp^3$ satisfying \ref{def_T1} and \ref{def_T4}. Then one of the following is true : 
\begin{itemize}
	\item $\mathcal{A}$ extends to a bubble tangle of order $k$.
	\item there is a partial sphere decomposition of width less than $k$ that conforms to $\mathcal{A}$.
\end{itemize}
\end{lem}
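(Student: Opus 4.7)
The plan is a Zorn's lemma argument in the style of classical branchwidth--tangle duality, with the transfinite induction needed because of the continuous nature of our objects. Let $\mathcal{P}$ be the poset, ordered by inclusion, of collections $\mathcal{B}$ of closed balls such that $\mathcal{A} \subseteq \mathcal{B}$, $\mathcal{B}$ satisfies \ref{def_T1} and \ref{def_T4}, and no partial sphere decomposition of width less than $k$ conforms to $\mathcal{B}$. If $\mathcal{A} \notin \mathcal{P}$ then the second alternative of the lemma already holds, so assume $\mathcal{A} \in \mathcal{P}$. For a chain $(\mathcal{B}_i)_i$ in $\mathcal{P}$ I would verify that $\bigcup_i \mathcal{B}_i \in \mathcal{P}$: axioms \ref{def_T1} and \ref{def_T4} pass to the union trivially, and any hypothetical conforming partial SD has only finitely many leaves, each conforming to a single $A \in \bigcup_i \mathcal{B}_i$, so all the finitely many elements of $\bigcup_i \mathcal{B}_i$ used would lie in some $\mathcal{B}_{i_0}$, contradicting $\mathcal{B}_{i_0} \in \mathcal{P}$. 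Zorn then yields a maximal $\mathcal{B} \in \mathcal{P}$, and it suffices to show that $\mathcal{B}$ satisfies \ref{def_T2} and \ref{def_T3}.

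The easy direction is to rule out violations of \ref{def_T3} and of the ``at most one side'' part of \ref{def_T2} by direct construction. If three balls $B_1, B_2, B_3 \in \mathcal{B}$ come from a transverse double bubble, I would build a partial SD on the trivalent star $K_{1,3}$: send the central vertex to the double bubble itself, and each edge to a continuous family of braid-equivalent spheres interpolating between one sphere of the double bubble and the boundary of a ball slightly interior to $B_i$, which forms a conforming non-trivial leaf. By \ref{def_T1}, all the interpolating spheres have weight equal to $|C(\partial B_i \cap G)| < k$, so this partial SD has width $<k$ and conforms to $\mathcal{B}$, contradicting $\mathcal{B} \in \mathcal{P}$. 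The ``at most one'' half of \ref{def_T2} is exactly analogous, with a tree consisting of a single edge whose midpoint maps to $S$ and whose two leaves conform to $B_1$ and $B_2$ respectively.

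The main technical step is the ``at least one'' half of \ref{def_T2}. Suppose for contradiction that some sphere $S$ transverse to $G$ with $|C(S \cap G)| < k$ bounds two balls $B_1, B_2$, neither of which lies in $\mathcal{B}$. Each $\mathcal{B} \cup \{B_i\}$ still satisfies \ref{def_T1} and \ref{def_T4}, so by maximality there exists a partial sphere decomposition $f_i : \Sp^3 \to T_i$ of width $<k$ conforming to $\mathcal{B} \cup \{B_i\}$. The plan is to glue $f_1$ and $f_2$ along $S$ into a new partial SD conforming to $\mathcal{B}$. Each leaf $\ell$ of $f_1$ conforming to $B_1$ (and not to any $A \in \mathcal{B}$) satisfies $f_1^{-1}(\ell) = B_\ell \subseteq B_1$ with $\partial B_\ell$ braid-equivalent to $S$; since the annular region $B_1 \smallsetminus \mathring B_\ell$ is then a trivial product, the pair $(B_1, B_1 \cap G)$ is homeomorphic to $(B_\ell, B_\ell \cap G)$, and this homeomorphism allows one to paste into $B_\ell$ a rescaled copy of the ``$B_1$-side'' of $f_2$. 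Performing this paste symmetrically for every leaf of $f_2$ conforming to $B_2$, and interpolating along the newly created edges by braid-equivalent spheres (which have weight $<k$ throughout), yields a partial sphere decomposition whose tree is still trivalent (each removed leaf supplies the needed third edge) and whose leaves all conform to $\mathcal{B}$ or are trivial, the desired contradiction.

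The hard part will be making this gluing rigorous: one must decide what ``the $B_1$-side of $f_2$'' really means, since $f_2$ need not contain $S$ itself as a preimage sphere. I expect to handle this either by first simplifying $f_2$ (pushing all of its leaves conforming to $B_2$ through a single distinguished edge whose midpoint is a sphere braid-equivalent to $S$ lying inside $B_2$) or by cutting $T_2$ along such a slice obtained from the braid-equivalence, and then verifying transversality, finite tangency, trivalence and width preservation along the new edges. Once this local surgery is in place, the three verifications above close the argument and show that the maximal $\mathcal{B}$ is a bubble tangle of order $k$ extending $\mathcal{A}$.
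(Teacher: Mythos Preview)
Your proposal is essentially the paper's proof: the same Zorn's-lemma poset, the same chain argument via finiteness of leaves, the same trivalent-star construction for \ref{def_T3}, and the same maximality-plus-gluing argument for \ref{def_T2}. The one place where the paper is cleaner is exactly what you flagged as ``the hard part'': rather than pasting rescaled copies of $f_2$ into every $B_1$-conforming leaf of $f_1$, the paper first proves a small normalization lemma (Lemma~\ref{lem_partial_truncate}) showing that if a non-trivial leaf $\ell$ conforms to $A$, one may modify the partial sphere decomposition, without changing its width, so that $f^{-1}(\ell)$ equals any prescribed ball $B\subset A$ with $\partial B$ braid-equivalent to $\partial f^{-1}(\ell)$; applying this once to $f_1$ with $B=B_1$ and once to $f_2$ with $B=B_2$ lets you glue the two decompositions directly along the common sphere $S$, sidestepping the question of what ``the $B_1$-side of $f_2$'' means. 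Your extra verification of the ``at most one side'' half of \ref{def_T2} via a single-edge partial sphere decomposition is correct and worth keeping; the paper glosses over this case.
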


The proof of Lemma~\ref{prop_extension_conformity} relies on the following preliminary lemma which allows us to slightly move the non-trivial leaves of a partial sphere decomposition.

\begin{lem}\label{lem_partial_truncate}
Let $f : \Sp^3 \rightarrow T$ be a partial sphere decomposition that conforms to a set of closed balls $\T$. Let $\ell \in L(T)$ be a non trivial leaf such that $\partial f^{-1} (\ell)$ conforms to $A \in \T$. Let $B$ be a closed ball such that $B \subset A$ and $\partial B$ is braid-equivalent to $f^{-1}(\ell)$. Then there exists a partial sphere decomposition $f' : \Sp^3 \rightarrow T$ that conforms to $\T$ such that $w(f) = w(f')$ and $f'^{-1} (\ell) = B$.
\end{lem}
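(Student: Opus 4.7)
The plan is to modify $f$ only in a small neighborhood of the leaf $\ell$, by ``inserting'' a sweep through the braid-equivalence region between $\partial B$ and $\partial B_0 := \partial f^{-1}(\ell)$, leaving the rest of the decomposition untouched. First I would reduce to a convenient configuration: since $\partial B$ and $\partial B_0$ are braid-equivalent they are disjoint spheres (if $B = B_0$ the lemma is trivial), so one of them lies in the interior of the closed ball bounded by the other. Using $B \subseteq A$ and $B_0 \subseteq A$, one checks that the braid region must lie in $A$ as well, and the two subcases $B \subset B_0$ (``shrinking the leaf'') and $B_0 \subset B$ (``enlarging the leaf'') are symmetric, so I will describe the shrinking case.

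Next I would exploit the braid-equivalence homeomorphism $h : S_k \times [0,1] \to (\mathring{B_0} \setminus B) \setminus G$ to obtain a continuous family of spheres $(S_s)_{s \in [0,1]}$, all transverse to $G$ and of weight exactly $k = |C(\partial B \cap G)| = |C(\partial B_0 \cap G)|$, sweeping $\overline{B_0 \setminus \mathring B}$ with $S_0 = \partial B$ and $S_1 = \partial B_0$. Extending $h$ to $G$ uses exactly the same technique as in the proof of Lemma~\ref{lem_braid_equi} (layering thickened vertices and strands by continuity). Now I would choose a small initial sub-interval $[\ell, p^*]$ of the edge $e = [\ell, u]$ incident to $\ell$ such that $f^{-1}([\ell, p^*])$ lies in a small product collar of $\partial B_0$ outside $B_0$, and reparameterize $e$ by a continuous monotone map: on the first half $[\ell, m]$, define $f'^{-1}(s)$ to be $S_{\varphi(s)}$ for a reparameterization $\varphi : [\ell, m] \to [0,1]$ with $\varphi(\ell) = 0$ and $\varphi(m) = 1$, so that $f'^{-1}(\ell) = B$ and $f'^{-1}(m) = \partial B_0$; on the second half $[m, u]$, define $f'$ to agree with the original $f$ after compressing the interval $[\ell, p^*]$ into $[m, p^*]$ and leaving $[p^*, u]$ untouched, with the matching convention that at $m$ the sphere is $\partial B_0$ (which is the limit of $f^{-1}(t)$ as $t \to \ell^+$, up to an arbitrarily small perturbation which may be absorbed in the reparameterization).

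Finally I would verify the four properties. Continuity of $f'$ follows because the spheres match at each interface point (at $\ell$ to the ball $B$, at $m$ to the sphere $\partial B_0$). The map $f'$ is still a valid partial sphere decomposition, because all new spheres are piecewise linear and transverse to $G$ by construction, while the double bubbles and tangencies in the rest of $T$ are untouched. The conformity to $\T$ holds at the modified leaf since $B \subseteq A$ by hypothesis, and $\partial B$ is braid-equivalent to $\partial A$ by transitivity of braid-equivalence applied to $(\partial B, \partial B_0)$ and $(\partial B_0, \partial A)$; conformity at all other leaves is inherited from $f$. For the width, every newly introduced sphere $S_s$ has weight exactly $k$, which is already the weight of the spheres $f^{-1}(t)$ for $t$ close to $\ell$ (they are braid-equivalent to $\partial B_0$ through the collar outside $B_0$), hence at most $w(f)$; all other preimages are unchanged, so $w(f') = w(f)$. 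The main technical obstacle is making sure that the reparameterization and the extension of the braid homeomorphism across the thickened vertices and strands of $G$ fit together continuously at the matching sphere $\partial B_0$, which is where the argument from Lemma~\ref{lem_braid_equi} is re-used.
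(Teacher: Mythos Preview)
Your argument has the right idea and your treatment of the shrinking case $B \subset B_0 := f^{-1}(\ell)$ is sound. The difficulty is the claim that the enlarging case $B_0 \subset B$ is ``symmetric.'' In that case the braid region $B \smallsetminus \mathring B_0$ lies \emph{outside} the current leaf ball and is therefore already being swept by $f$; nothing forces this part of the sweep to use only the edge incident to $\ell$ --- the level sets $f^{-1}(x)$ for $x$ elsewhere in $T$ may well pass through this region --- so it cannot be undone by reparameterising that single edge. Since the enlarging case is exactly the one invoked later in the proof of Lemma~\ref{prop_extension_conformity} (there one wants to push $f_1^{-1}(\ell_1)$ \emph{outward} to $B_1$), this is a genuine gap rather than a cosmetic one.

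The paper avoids the asymmetry by acting on the source instead of the target: it observes that the swept region $I$ and the desired swept region $I \smallsetminus B$ differ only by the product collar coming from braid-equivalence, builds a homeomorphism $\psi$ between them (extended across $G$ exactly as in Lemma~\ref{lem_braid_equi}), and sets $f'$ to be $f$ composed with $\psi$. All level sets --- in particular all double bubbles and all other leaf balls --- are then carried along by the homeomorphism automatically, so width and conformity are immediate and no case distinction is needed. Recasting your argument in this source-side form closes the gap and in fact shortens the proof.
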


\begin{proof}[Proof of Lemma~\ref{lem_partial_truncate}]
Let us denote by $I$ the subset of $\Sp^3$ that is swept by $f$, that is, the union $\bigcup f^{-1}(x)$ where $x$ ranges over all the points of $T$ except the non-trivial leaves. Now, $(I \smallsetminus B) \smallsetminus G$ is homeomorphic to $I \smallsetminus G$, since the subset of $\Sp^3 \smallsetminus G$ in $B \smallsetminus f^{-1}(\ell)$ is homeomorphic to the product of a sphere with holes with an interval, by definition of braid-equivalence. We can easily extend this homeomorphism to a homeomorphism between $I \smallsetminus B$ and $I$ as is done in the proof of Lemma~\ref{lem_braid_equi}, and then we compose $f$ with this homeomorphism to obtain a new partial sphere decomposition $f'$ so that $w(f)=w(f')$ and $f'^{-1}(\ell)=B$.
\end{proof}

\begin{proof}[Proof of Lemma~\ref{prop_extension_conformity}]
  The proof relies on Zorn's lemma. Assume that there is no partial sphere decomposition of $G$ of width less than $k$ that conforms to $\mathcal{A}$, for ease of notation we denote this assumption by (NPD). Let $X$ be the set of collection of closed balls in $\Sp^3$ satisfying \ref{def_T1}, \ref{def_T4} and (NPD) and whose boundaries are transverse to $G$. Then, we order $X$ by inclusion. Our aim is to apply Zorn's lemma on $X$ in order to get a maximal set of closed balls containing $A$ and still satisfying \ref{def_T1}, \ref{def_T4}, and (NPD).
  
  We first show that every chain of $X$ admits an upper bound in $X$. Let $C$ be a chain of $X$ and $\T = \bigcup_{c \in C} c$. If $C$ is empty, then the set of $G$-trivial balls is an upper bound by definition of $X$. Otherwise, there exists $c$ in $C$, and as $c$ satisfies \ref{def_T4} and $c \subset \T$, we also have that $\T$ satisfies \ref{def_T4}. Let $B$ be a closed ball in $\T$. Then there exists $c$ in $C$ such that $B$ belongs to $C$. Since $c$ satisfies \ref{def_T1}, we have $|C(\partial B \cap G)| < k$, and thus $\T$ satisfies \ref{def_T1} as well. Finally, we establish (NPD) for $\T$: we assume by contradiction that there exists a partial sphere decomposition $f$ of $G$ of width less than $k$ that conforms to $\T$. For each non-trivial leaf $\ell$ of $T$, by definition of conformity, there exists a ball $B_\ell \in c_\ell \in C$ such that $\partial f^{-1} (\ell)$ is braid-equivalent to $\partial B_\ell$. As $C$ is totally ordered and $L(T)$ is finite, there is an element $c$ of $C$ that contains all the sets $c_\ell$. This implies that $f$ conforms to $c$ which is absurd because $c$ satisfies (NPD). We conclude that $\T$ satisfies (NPD). Hence $\T$ belongs to $X$ and is an upper bound of $C$.

By Zorn's lemma, since every chain of $X$ admits an upper bound, it admits a maximal element: there exists $\T$ such that $\mathcal{A} \subset \T$, and $\T$ is maximal with respect to \ref{def_T1}, \ref{def_T4}, and (NPD).

Notice that $\T$ satisfies \ref{def_T3}. Indeed, if there exists $B_1,B_2,B_3 \in \T^3$ such that $B_1,B_2,B_3$ induce a double bubble $\B$ transverse to $G$ and $\Sp^3 = B_1 \cup B_2 \cup B_3$, then each of the spheres $S_1,S_2,S_3$ induced by the double are transverse to $G$. Each of them admits a braid-equivalent sphere in its neighborhood, we choose for each of them a sphere $S'_i$, braid-equivalent to $S_i$ and disjoint from $\B$ (such a sphere exists because each $S_i$ has braid-equivalent spheres on both of its sides). 

\begin{figure}[h]
\begin{center}
%\tikzsetnextfilename{trivial_partial_decomp}
\begin{tikzpicture}
%outside bubble
\fill [purple, opacity = 0.3] (-2,-1.5) rectangle (2,1.5);
\filldraw [fill = white, draw = purple] ($(-0.5,0)+(65:1.183)$) arc (65:295:1.183) arc (-115:115:1.183) -- cycle;

%double bubble
\draw [blue, very thick] (0,0.866) arc (60:300:1) arc (-120:120:1);
\draw [blue, very thick] (0,0.866) .. controls +(-50:0.3) and \control{(0,-0.866)}{(50:0.3)};
\draw [blue, dotted, very thick] (0,0.866) .. controls +(-130:0.3) and \control{(0,-0.866)}{(130:0.3)};

%left inside
\filldraw [purple, fill opacity = 0.3] ($(-0.5,0)+(65:0.75)$) arc (65:295:0.75) .. controls +(50:0.3) and \control{($(-0.5,0)+(65:0.75)$)}{(-50:0.3)};

%right inside
\filldraw [purple, fill opacity = 0.3] ($(0.5,0)+(115:0.75)$) arc (115:-115:0.75) .. controls +(130:0.3) and \control{($(0.5,0)+(115:0.75)$)}{(-130:0.3)};

%labels
\node [purple] at (-0.65,0) {$B'_1$};
\node [purple] at (0.65,0) {$B'_2$};
\node [purple] at (-1.65,1) {$B'_3$};

\begin{scope}[xshift = 4.5cm, yshift = -0.5cm]
\draw (0,0) -- (90:1.25);
\draw (0,0) -- (-30:1.25);
\draw (0,0) -- (-150:1.25);
\fill [blue] (0,0) circle (2pt);
\fill [purple] (90:1.25) circle (2pt);
\fill [purple] (-30:1.25) circle (2pt);
\fill [purple] (-150:1.25) circle (2pt);
\node at (90:1.25) [purple, above] {$B'_1$};
\node at (-30:1.25) [purple, below right] {$B'_2$};
\node at (-150:1.25) [purple, below left] {$B'_3$};
\end{scope}
\end{tikzpicture}
\caption{A partial sphere decomposition that conforms to $\B$.}
\label{pic_partial_conf}
\end{center}
\end{figure}
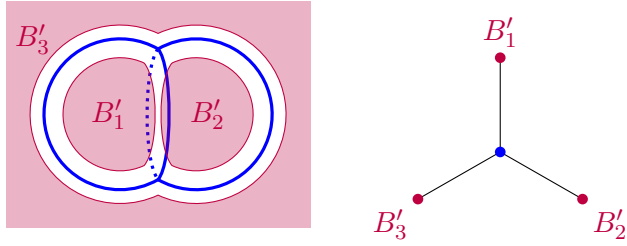

Let $B'_i$ be the closed ball disjoint from $\B$ such that $\partial B'_i = S'_i$. As $S_i$ and $S'_i$ are braid-equivalent, there is a homeomorphism $h_i : S_k \times [0,1]  \rightarrow I_i \smallsetminus G$; where $I_i = B_i \smallsetminus \mathring B'_i$. As seen in the proof of Lemma~\ref{lem_braid_equi}, we can extend $h^{-1}_i$ to a continuous function $ \phi_i : I_i \rightarrow \Sp^2 \times [0,1]$. We can now construct a partial sphere decomposition that conforms to $ B_1,B_2,B_3$ (see Figure \ref{pic_partial_conf}) on the trivalent tree made of $3$ edges $(\ens{v_0,v_1,v_2,v_3}, \ens{[v_0,v_1],[v_0,v_2],[v_0,v_3]})$.  
\begin{equation*}
f(x) =
\left\lbrace
\begin{array}{l}
v_0 \text{ if } x \in \B\\
v_i \text{ if } x \in B'_i\\
y \in [v_0,v_i] \approx [0,1] \text{ if } x \in I_i \tand h_i (x) = (\cdot,y)
\end{array}
\right.
\end{equation*}
This partial sphere in contradiction is in contradiction with $\T$ satisfying (NPD), thus $\T$ satisfies \ref{def_T3}.

Now, if $\T$ satisfies~\ref{def_T2}, it is a bubble tangle, and we are
done. Otherwise $\T$ does not satisfy \ref{def_T2}: there exists some
sphere of weight less than $k$ so that none of the two
balls that it bounds are in $\T$. In that case, we can define a partial
sphere decomposition that conforms to $\T$. Let $S$ be such a sphere
in $\Sp^3$: $|C(S \cap G)| < k$ and for $B_1,B_2$, its sides, neither
$B_1 \in \T$ nor $B_2 \in \T$. Notice that $\T \cup \ens{B_1}$ is a
collection of closed balls satisfying both \ref{def_T1} and
\ref{def_T4}. By maximality of $\T$ under \ref{def_T1}, \ref{def_T4},
and (NPD), there exists a partial sphere decomposition $f_1 : \Sp^3
\rightarrow T_1$ of width at most $k$ that conforms to $\T \cup
\ens{B_1}$. As $f_1$ does not conform to $\T$, it necessarily admits a
non-trivial leaf $\ell_1$ such that $\partial f_1^{-1} (\ell_1)$ is
braid-equivalent to $\partial B_1 = S$ and $f_1^{-1}(\ell_1) \subset B_1$. Similarly there exists a
partial sphere decomposition $f_2: \Sp^3 \rightarrow T_2$ of width at
most $k$ that conforms to $\T \cup \ens{B_2}$ with a non-trivial leaf
$\ell_2$ such that $\partial f_1^{-1} (\ell_2)$ is braid-equivalent to
$\partial B_2 = S$ and $\partial f_2^{-1}(\ell_2) \subset B_2$. Now, by Lemma~\ref{lem_partial_truncate}, $f_1$
and $f_2$ can be modified so that $f_1^{-1}(\ell_1)$ is $B_1$, and
$f_2^{-1}(\ell_2)$ is $B_2$. Now, $f_1$ and $f_2$ can be pasted
together at $\ell_1$ and $\ell_2$ to yield a single partial sphere
decomposition of width $k$ which conforms to $\T$. We reach a final contradiction
which concludes the proof.
\end{proof}

\begin{remark}
We use the axiom of choice in this proof for convenience, but it seems likely that one can just rely on the countable axiom of choice since, while there are uncountably many spheres with a low number of intersections with $G$, there are only countably many isotopy classes of those.
\end{remark}

We can now prove Proposition~\ref{P:nogapside}.

\begin{proof}[Proof of Proposition~\ref{P:nogapside}]
We denote by $\mathcal{A}$ the collection of $G$-trivial balls. By definition, $\mathcal{A}$ satisfies~\ref{def_T4}, and since $G$-trivial balls have weight at most two, it also satisfies~\ref{def_T1} for $k$ at least three. Therefore, by Lemma~\ref{prop_extension_conformity}, either $\mathcal{A}$ extends to a bubble tangle of order $k$, or there exists a partial sphere decomposition of width less than $k$ conforming to it. In the first case, we are done. In the second case, we are also done, since, given a partial sphere decomposition of width less than $k$ conforming to $G$-trivial balls, it is straightforward to sweep within the $G$-trivial balls so as to obtain a sphere decomposition of width less than $k$.
\end{proof}

\section{From compression representativity to bubble tangles}\label{S:representativity}

The goal of this section is to show Theorem~\ref{T:representativity}: when a graph $G$ is embedded on a compact, orientable, and non-zero genus surface $\Sigma$, there exists a bubble tangle naturally arising from the compression representativity of $G$ on $\Sigma$. \textit{In the following, we assume $\Sigma$ is compact, orientable, and not a sphere.}

Under these hypotheses, the idea of the proof is to show that there exists a natural choice of small side for every sphere with fewer intersections with $G$ than the compression representativity. Intuitively, such a sphere will only cut disks or ``trivial parts'' of $\Sigma$ on one of its sides, which we will designate as the small one. That is justified by the following lemma.

\begin{lem}\label{lem_sphere_compressible}
Let $\Sigma$ be a surface embedded in $\Sp^3$ and $S$ be a sphere in $\Sp^3$ that intersects $\Sigma$ transversely such that there is at least one non-contractible curve in the intersection. Then one of the non-contractible curves is compressible. 
\end{lem}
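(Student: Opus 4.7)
The plan is to argue by induction on $n = |C(S \cap \Sigma)|$, which is finite by PL transversality. For the base case $n = 1$, the unique intersection curve is non-contractible by hypothesis, and since it bounds two disks on the sphere $S$, each with interior disjoint from $\Sigma$, either disk serves as a compression disk.

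For the inductive step, I would first locate an \emph{innermost disk} $D \subset S$: the curves $S \cap \Sigma$ cut $S$ into finitely many disk regions whose dual tree has leaves, so there exists a disk $D \subset S$ with boundary $\beta \in S \cap \Sigma$ whose interior is disjoint from $\Sigma$. If $\beta$ happens to be non-contractible on $\Sigma$, then $D$ is already the desired compression disk and the proof is complete.

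The interesting case is when $\beta$ is contractible on $\Sigma$, so that $\beta$ bounds a disk $E \subset \Sigma$. Then the $2$-sphere $D \cup E$ bounds a ball $B$ in $\Sp^3$ by the PL Schoenflies theorem, and I would push $E$ slightly off $\Sigma$ to obtain a disk $E''$ disjoint from $\Sigma$. Replacing $D$ by $E''$ in $S$ produces a new sphere $S'$, PL-isotopic to $S$, transverse to $\Sigma$, and satisfying $S' \cap \Sigma = (S \cap \Sigma) \setminus \{\beta\}$. Since $\beta$ was contractible while $S \cap \Sigma$ had at least one non-contractible curve by assumption, $S' \cap \Sigma$ still contains a non-contractible curve, so the inductive hypothesis applied to $S'$ yields a non-contractible $\alpha \in S' \cap \Sigma$ that bounds a compression disk. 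Since $S' \cap \Sigma \subset S \cap \Sigma$, this $\alpha$ proves the conclusion for $S$ as well.

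The main obstacle is justifying the push of $E$ without creating new intersections with $\Sigma$. The key observation is that the complementary subsurface $E^c := \Sigma \setminus \mathring E$ is connected (removing a disk from a connected closed surface of positive genus preserves connectedness) and meets $\partial B = D \cup E$ only along $\beta$; consequently $E^c \setminus \beta$ lies entirely on one side of $\partial B$, and pushing $E$ in a normal direction into the opposite side avoids $\Sigma$ altogether. Once this is established, the rest reduces to standard innermost-disk and Schoenflies bookkeeping driving the induction.
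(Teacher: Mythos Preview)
Your overall approach matches the paper's: locate an innermost disk $D \subset S$, and if $\beta = \partial D$ is contractible on $\Sigma$ (bounding $E \subset \Sigma$), modify $S$ to drop $\beta$ from the intersection, then repeat. The paper phrases this as an iteration rather than an explicit induction, but the content is identical, and your observation that $S'\cap\Sigma \subset S\cap\Sigma$ lets the conclusion transfer back to $S$ is exactly the point.

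There is, however, a slip in your surgery step. You push $E$ into the side of the sphere $D\cup E$ opposite to $E^c$; call that ball $B$. Since $\mathring B \cap \Sigma = \varnothing$, this $B$ lies entirely on the same side of $\Sigma$ as $\mathring D$ does. Your replacement disk $E'' \subset B$ is therefore on the same side of $\Sigma$ as $D$, while $S \setminus D$ approaches $\beta$ from the \emph{other} side of $\Sigma$ (by transversality of $S$ and $\Sigma$ along $\beta$). Hence $S' = (S\setminus \mathring D) \cup E''$ still crosses $\Sigma$ transversally at $\beta$, and the claimed equality $S'\cap\Sigma=(S\cap\Sigma)\setminus\{\beta\}$ is false as stated. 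Your $E^c$--connectedness argument correctly yields $\mathring E''\cap\Sigma=\varnothing$, but that was never the obstacle: a small normal push of $E$ to \emph{either} side of $\Sigma$ already gives a disk with interior disjoint from $\Sigma$. What actually matters for eliminating $\beta$ is that the replacement disk end up on the \emph{same} side of $\Sigma$ as $S\setminus D$ near $\beta$, so that the new sphere no longer crosses $\Sigma$ there. The paper achieves this by isotoping $D$ across the ball and then \emph{through} $E$ to the far side of $\Sigma$; once you push in that direction (or, equivalently, replace $D$ by a push-off of $E$ to the side where $S\setminus D$ sits), the induction goes through.
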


\begin{proof}
As $\Sigma$ and $S$ are transverse, the intersection of $S$ and $\Sigma$ consists of a disjoint union of simple closed curves. Each one of these curves bounds two disks on $S$. Let $\alpha$ be a curve of $S \cap \Sigma$ that is innermost in $S$, i.e. it bounds a disk $D$ in $S$ that does not contain any other curve of $S \cap \Sigma$. If $\alpha$ is non-contractible, then the disk $D$ is a compression disk for $\alpha$, and thus $\alpha$ is compressible. Otherwise, $\alpha$ bounds a disk $D_{\Sigma}$ in $\Sigma$ (see for example Epstein~\cite[Theorem~1.7]{epstein}). We deform $S$ continuously by ``pushing'' $D$ through $D_\Sigma$ while keeping $S$ embedded (see Figure~\ref{pic_torus_inessential}) until $\alpha$ disappears from $\Sigma \cap S$. 

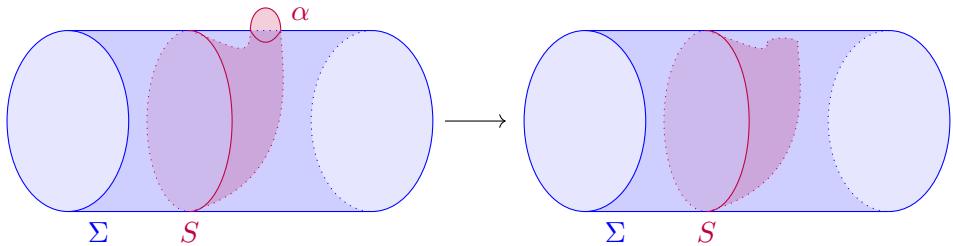
\begin{figure}[ht]
\begin{center}
%\tikzsetnextfilename{curve_removing}
\begin{tikzpicture}[scale = 0.8]
\draw [blue] (0,0) circle (1 and 1.5);
\draw [blue] (5,-1.5) arc (-90:90:1 and 1.5);
\draw [blue, dotted] (5,1.5) arc (90:270:1 and 1.5);
\draw [blue] (0,1.5) -- (3,1.5);
\draw [blue] (3.5,1.5) -- (5,1.5);
\draw [blue, dotted] (3,1.5) -- (3.5,1.5);
\draw [blue] (0,-1.5) -- (5,-1.5);
\fill [blue, opacity = 0.1] (0,-1.5) arc (-90:90:1 and 1.5) -- (5,1.5) arc (90:-90:1 and 1.5) -- cycle;
\fill [blue, opacity = 0.1] (0,-1.5) arc (-90:-270:1 and 1.5) -- (5,1.5) arc (90:270:1 and 1.5) -- cycle;
\draw [purple] (2,-1.5) arc (-90:90:0.7 and 1.5);
\draw [purple, dotted] (2,1.5) arc (90:270:0.7 and 1.5);
\draw [purple, dotted] (2,1.5) .. controls +(-20:0.8) and \control{(3,1.5)}{(-90:0.5)};
\draw [purple, dotted] (2,-1.5) .. controls +(15:2) and \control{(3.5,1.5)}{(-90:0.5)};
\draw [purple] (3,1.5) .. controls +(90:0.5) and \control{(3.5,1.5)}{(90:0.5)};
\draw [purple] (3,1.5) .. controls +(-60:0.3) and \control{(3.5,1.5)}{(-120:0.3)};
\fill [purple, opacity = 0.1] (2,1.5) .. controls +(-20:0.8) and \control{(3,1.5)}{(-90:0.5)} -- (3,1.5) .. controls +(90:0.5) and \control{(3.5,1.5)}{(90:0.5)} -- (3.5,1.5) .. controls +(-90:0.5) and \control{(2,-1.5)}{(15:2)} arc (-90:-270: 0.7 and 1.5);
\fill [purple, opacity = 0.1] (2,1.5) .. controls +(-20:0.8) and \control{(3,1.5)}{(-90:0.5)} -- (3,1.5) .. controls +(90:0.5) and \control{(3.5,1.5)}{(90:0.5)} -- (3.5,1.5) .. controls +(-90:0.5) and \control{(2,-1.5)}{(15:2)} arc (-90:90: 0.7 and 1.5);
\node at (3.5,1.5) [above right, purple] {$\alpha$};
\node at (0.5,-1.5) [below, blue] {$\Sigma$};
\node at (2,-1.5) [below, purple] {$S$};
\draw [->] (6.2,0) -- (7.2, 0);
\begin{scope}[xshift = 8.5 cm]
\draw [blue] (0,0) circle (1 and 1.5);
\draw [blue] (5,-1.5) arc (-90:90:1 and 1.5);
\draw [blue, dotted] (5,1.5) arc (90:270:1 and 1.5);
\draw [blue] (0,1.5) -- (5,1.5);
\draw [blue] (0,-1.5) -- (5,-1.5);
\fill [blue, opacity = 0.1] (0,-1.5) arc (-90:90:1 and 1.5) -- (5,1.5) arc (90:-90:1 and 1.5) -- cycle;
\fill [blue, opacity = 0.1] (0,-1.5) arc (-90:-270:1 and 1.5) -- (5,1.5) arc (90:270:1 and 1.5) -- cycle;
\draw [purple] (2,-1.5) arc (-90:90:0.7 and 1.5);
\draw [purple, dotted] (2,1.5) arc (90:270:0.7 and 1.5);
\draw [purple, dotted] (2,1.5) .. controls +(-20:0.8) and \control{(3,1.3)}{(-90:0.2)};
\draw [purple, dotted] (2,-1.5) .. controls +(15:2) and \control{(3.5,1.3)}{(-90:0.5)};
\draw [purple, dotted] (3,1.3) .. controls +(90:0.1) and \control{(3.5,1.3)}{(90:0.1)};
\fill [purple, opacity = 0.1] (2,1.5) .. controls +(-20:0.8) and \control{(3,1.3)}{(-90:0.2)} -- (3,1.3) .. controls +(90:0.1) and \control{(3.5,1.3)}{(90:0.1)} -- (3.5,1.3) .. controls +(-90:0.5) and \control{(2,-1.5)}{(15:2)} arc (-90:-270: 0.7 and 1.5);
\fill [purple, opacity = 0.1] (2,1.5) .. controls +(-20:0.8) and \control{(3,1.3)}{(-90:0.2)} -- (3,1.3) .. controls +(90:0.1) and \control{(3.5,1.3)}{(90:0.1)} -- (3.5,1.3) .. controls +(-90:0.5) and \control{(2,-1.5)}{(15:2)} arc (-90:90: 0.7 and 1.5);
\node at (0.5,-1.5) [below, blue] {$\Sigma$};
\node at (2,-1.5) [below, purple] {$S$};
\end{scope}
\end{tikzpicture}
\end{center}
\caption{\label{pic_torus_inessential} Removing a trivial curve from $S \cap \Sigma$.}
\end{figure} 

Repeating this process on a new innermost curve of $S$ will eventually yield a non-contractible compressible curve. Indeed, the number of curves in the intersection is finite (recall that both surfaces are piecewise linear), decreases at each step, and one of the curves in $\Sigma \cap S$ is non-contractible.
\end{proof}

A direct consequence of this lemma is that if $G$ is embedded on a surface $\Sigma$, a sphere $S$ intersects $\Sigma$, and the intersection has weight less than $\comprep(G,\Sigma)$, then all the simple closed curves in the intersection are contractible. Therefore, one of the two balls bounded by $S$ contains the meaningful topology of $\Sigma$, while the other one only contains spheres with holes (see Figure~\ref{F:torusknot}). In order to formalize this, we will rely on fundamental groups (see for example Hatcher~\cite{Hatcher_Algebraic_Topology} for an introduction to this concept). 
The inclusion of a subsurface $X$ on $\Sigma$ induces a morphism $i_*: \pi_1(X) \rightarrow \pi_1(\Sigma)$. If this morphism is trivial, we say that $X$ is \bfdex{$\pi_1$-trivial} with respect to $\Sigma$.

\begin{defn}[Compression bubble tangle on an embedded surface]
Let $G$ be a graph embedded on $\Sigma$, a surface embedded in $\Sp^3$ such that $\comprep(G,\Sigma) \geq 3$ and set $k = \frac{2}{3} \comprep(G,\Sigma) $. The \bfdex{compression bubble tangle} $\cT$, is the collection of balls in $\Sp^3$ defined as follows: for any sphere $S$ in $\Sp^3$ transverse to $G$ such that $ |C(S \cap G)| < k$, by Lemma~\ref{lem_sphere_compressible}, there is exactly one connected component $A$ of $\Sigma \smallsetminus S$ that is $\pi_1$-trivial. Exactly one of the open balls $B$ of $\Sp^3 \smallsetminus S$ contains $A$, and we put the closed ball in $\cT$: $\bar B \in \cT$.
\end{defn}

\begin{figure}[h]
\begin{center}
%\tikzsetnextfilename{torus_rep_sphere}
\begin{tikzpicture}[scale = 0.85]
%Torus filling
\fill [blue, opacity = 0.2] (0,0) circle (4 and 2);
\fill [white](1,0) .. controls +(150:0.5) and \control{(-1,0)}{(30:0.5)} .. controls +(-30:0.5) and \control{(1,0)}{(210:0.5)}; 

%Torus drawing and coordinate definition
\path [name path = elip, blue] (4,0) arc (0:360:4 and 2) node (c5) [pos = 0,circle, fill, inner sep = 0pt] {} node (c4) [pos = 0.166,circle, fill, inner sep = 0pt] {} node (c3) [pos = 0.333,circle, fill, inner sep = 0pt] {} node (c2) [pos = 0.5,circle, fill, inner sep = 0pt] {} node (c1) [pos = 0.666,circle, fill, inner sep = 0pt] {} node (c6) [pos = 0.833,circle, fill, inner sep = 0pt] {};
\draw [name path = highc, blue] (1,0) coordinate (a) .. controls +(150:0.5) and \control{(-1,0)}{(30:0.5)} node (i1) [pos = 0.2,circle, fill, inner sep = 0pt] {} node (i6) [pos = 0.5,circle, fill, inner sep = 0pt] {} node (i5) [pos = 0.8,circle, fill, inner sep = 0pt] {} .. controls +(-30:0.5) and \control{(a)}{(210:0.5)}  node (i4) [pos = 0.2,circle, fill, inner sep = 0pt] {} node (i3) [pos = 0.5,circle, fill, inner sep = 0pt] {} node (i2) [pos = 0.8,circle, fill, inner sep = 0pt] {};
\coordinate (aa) at ($(1,0) + (30:0.2)$);
\coordinate (bb) at ($(-1,0) + (150:0.2)$);
\draw [blue] (1,0) -- (aa);
\draw [blue] (-1,0) -- (bb);

%Knot drawing
\begin{scope}[thick]
\draw [name path = br1] (c1) .. controls +(-15:1.5) and \control{(i1)}{(10:2.5)};
\draw [name path = br6] (c6) .. controls +(15:1.5) and \control{(i6)}{(35:2.3)};
\draw [name path = br5] (c5) .. controls +(90:1) and \control{(i5)}{(50:2)};
\draw [name path = br4] (c4) .. controls +(165:1.5) and \control{(i4)}{(190:2.5)};
\draw [name path = br3] (c3) .. controls +(195:1.5) and \control{(i3)}{(215:2.3)};
\draw [name path = br2] (c2) .. controls +(-90:1) and \control{(i2)}{(230:2)};
\end{scope}
\begin{scope}[thick, opacity = 0.3]
\draw [name path = bbr1] (c1) .. controls +(165:1.5) and \control{(i6)}{(155:2)};
\draw [name path = bbr6] (c6) .. controls +(195:1.5) and \control{(i5)}{(165:2)};
\draw [name path = bbr5] (c5) .. controls +(270:1) and \control{(i4)}{(-40:1)};
\draw [name path = bbr4] (c4) .. controls +(-15:1.5) and \control{(i3)}{(-25:2)};
\draw [name path = bbr3] (c3) .. controls +(15:1.5) and \control{(i2)}{(-15:2)};
\draw [name path = bbr2] (c2) .. controls +(90:1) and \control{(i1)}{(140:1)};
\end{scope}

%intersetion
\filldraw [fill opacity = 0.2, purple, name path = sphere] (3.6,-1.1) circle (2.2);
\path [name intersections={of=elip and sphere,total=\tot}]
\foreach \s in {1,...,\tot}{coordinate (c-\s) at (intersection-\s)};
\begin{scope}
\clip (c-1)+(2,0) rectangle (c-2); 
\draw [blue,dotted] (4,0) arc (0:360:4 and 2);
\end{scope}
\begin{scope}
\clip (-5,3) -- ($(c-1) +(0,2)$) -- (c-1) -- (c-2) -- ($(c-2) +(0,-1)$) -- (-5,-3) --cycle; 
\draw [blue] (4,0) arc (0:360:4 and 2);
\end{scope}
\draw [purple, dash pattern = on 5pt off 1pt, very thick, rotate = 45] (1.75,-2.1) circle (1.3 and 0.8);

\draw [->] (5.1,0.8) -- (6.3,0.8);
\fill [opacity = 0.2, blue] (7.8,0.8) circle (1.2);
\draw [purple, dash pattern = on 5pt off 1pt, very thick] (7.8,0.8) circle (1.2);
\path (9,0.8) arc (0:360:1.2) node (p1) [pos = 0.24, fill, circle, inner sep = 0pt] {} node (p2) [pos = 0.4, fill, circle, inner sep = 0pt] {} node (p3) [pos = 0.55, fill, circle, inner sep = 0pt] {} node (p4) [pos = 0.85, fill, circle, inner sep = 0pt] {};
\draw [thick] (p1) .. controls +(230:0.4) and \control{(p2)}{(10:0.4)};
\draw [thick] (p3) .. controls +(20:0.7) and \control{(p4)}{(140:0.7)};

%Added information
\node at (7.8,0.9) [color = blue] {$\Sigma \smallsetminus A$};
\node at (-3.5,0) [color = blue] {$A$};
\node at (4,-2) [color = purple] {$B$};
\end{tikzpicture}
\end{center}
\caption{Intersection between a torus knot $T_{6,5}$ embedded on a torus and a sphere. Here the ball $B$ containing the disk on the right is in the compression bubble tangle.}
\label{F:torusknot}
\end{figure}
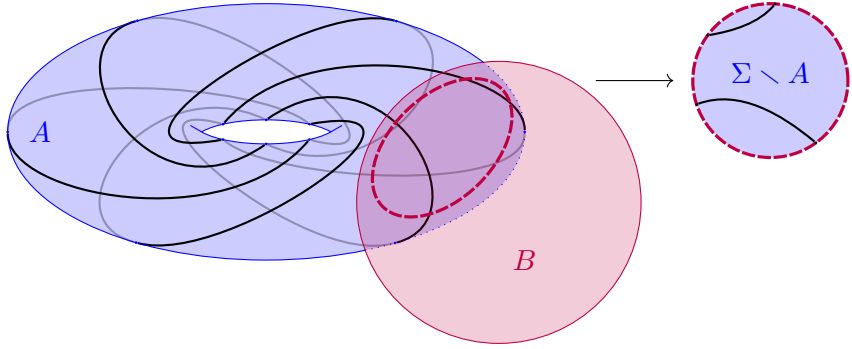  

The main step in the proof of Theorem~\ref{T:representativity} is to prove that a compression bubble tangle on the torus is indeed a bubble tangle. 

\begin{proposition}\label{prop_bubble_tangle}
A compression bubble tangle is a bubble tangle.
\end{proposition}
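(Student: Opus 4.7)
The plan is to verify the four bubble tangle axioms (T1)--(T4) for the collection $\cT$. Axiom~\ref{def_T1} is immediate: by construction, $\cT$ only contains balls whose boundary sphere has fewer than $k$ connected components of intersection with $G$. For axiom~\ref{def_T2}, given a sphere $S$ with $|C(S\cap G)| < k \leq \comprep(G,\Sigma)$, Lemma~\ref{lem_sphere_compressible} forces all simple closed curves of $S \cap \Sigma$ to be contractible on $\Sigma$ and thus to bound disks on $\Sigma$. These disks form a laminar family decomposing $\Sigma \smallsetminus S$ into one ``outer'' region (a copy of $\Sigma$ with some disjoint disks removed, which is not $\pi_1$-trivial since $\Sigma$ is not a sphere) and several ``inner'' planar pieces whose boundary curves are contractible in $\Sigma$, hence $\pi_1$-trivial. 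The outer region lies entirely in one of the two bounded balls, so the intersection of the opposite ball with $\Sigma$ is a disjoint union of $\pi_1$-trivial pieces, and that opposite ball is the unique element of $\cT$ bounded by $S$.

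For axiom~\ref{def_T4}, I argue by contradiction. Suppose a $G$-trivial ball $B$ with $|C(\partial B \cap G)| < k$ has $\Sigma \cap B$ not $\pi_1$-trivial. Since the boundary curves of $\Sigma \cap B$ on $\partial B$ are contractible in $\Sigma$, and a subsurface of the non-sphere $\Sigma$ with contractible boundary is $\pi_1$-trivial if and only if each of its components is planar (using that the universal cover of $\Sigma$ is $\mathbb{R}^2$), some component of $\Sigma \cap B$ must have positive genus. I pick a simple closed curve $\gamma$ there non-contractible in $\Sigma$. The simple structure of $G \cap B$ (empty or a trivial arc) lets me isotope $\gamma$ on $\Sigma \cap B$ off $G$, achieving weight zero. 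Since $B$ is simply connected, $\gamma$ is null-homotopic in $B$, and applying the Loop Theorem to a component of $\overline{B \smallsetminus \Sigma}$ yields an embedded compression disk bounded by a non-contractible curve on $\Sigma$ with weight zero, contradicting $\comprep(G,\Sigma) \geq 3$.

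Axiom~\ref{def_T3} is the main step and is where the $2/3$ factor is essential. Assume for contradiction that the three balls $B_1, B_2, B_3$ of a double bubble all lie in $\cT$, with shared disks $D_{12}, D_{13}, D_{23}$ of weights $c_{ij} = |C(D_{ij} \cap G)|$. Transversality of the double bubble keeps $G$ disjoint from the central circle, so $|C(\partial B_i \cap G)| = c_{ij} + c_{ik}$, and summing the three inequalities $c_{ij} + c_{ik} < k$ yields $c_{12} + c_{13} + c_{23} < 3k/2 = \comprep(G,\Sigma)$. In particular each $c_{ij} < k < \comprep$, so the innermost-disk argument from Lemma~\ref{lem_sphere_compressible} forces all closed curves of each $D_{ij} \cap \Sigma$ to be contractible in $\Sigma$, else such an innermost curve would bound a compression disk of weight below $\comprep$. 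After isotoping $\Sigma$ to remove those closed intersections, I obtain a trivalent graph $\Gamma \subset \Sigma$ (vertices on the central circle, arcs on the three disks) whose faces are precisely the components of the $\Sigma \cap B_i$, each planar and $\pi_1$-trivial in $\Sigma$.

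The hard part will be to extract from this configuration a compressible non-contractible curve of weight below $\comprep(G,\Sigma)$. My plan is to show first that $\Gamma$ already carries $H_1(\Sigma)$: completing $\Gamma$ to a cellular embedding by adding internal arcs inside each planar face only introduces cycles bounding sub-faces, which are contractible in $\Sigma$ and thus do not enlarge the image of $H_1(\Gamma)\to H_1(\Sigma)$; since the cellular graph surjects onto $H_1(\Sigma)$, so does $\Gamma$, and therefore $\Gamma$ contains a simple closed curve $\gamma$ non-contractible in $\Sigma$. The weight of $\gamma$ is bounded by $|\Gamma \cap G| = c_{12} + c_{13} + c_{23} < \comprep(G,\Sigma)$. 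Finally I construct a compression disk bounded by $\gamma$: the curve $\gamma$ lies on the simply connected 2-complex $D_{12} \cup D_{13} \cup D_{23}$ and bounds a singular disk there, which after a Loop Theorem argument becomes an embedded compression disk in $\Sp^3 \smallsetminus \Sigma$ with non-contractible boundary of weight strictly below $\comprep(G,\Sigma)$, contradicting compression representativity and completing the proof.
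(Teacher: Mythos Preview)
Your treatment of \ref{def_T1} and \ref{def_T2} is fine and matches the paper's. Your argument for \ref{def_T4} is more elaborate than needed (the paper simply observes that a $G$-trivial ball has weight at most $2<\comprep$, so by Lemma~\ref{lem_sphere_compressible} the curves of $\partial B\cap\Sigma$ are contractible; the ``outer'' piece of $\Sigma$ must lie on one side, and if that side were $B$ then the other side is also $G$-trivial, forcing $G$ to be trivial). Your Loop Theorem step there is already shaky: from $\gamma$ being null-homotopic in the ball $B$ you cannot conclude it is null-homotopic in a component of $\overline{B\smallsetminus\Sigma}$, which is what Dehn's Lemma actually requires.

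The real gap is in \ref{def_T3}. You correctly get a simple cycle $\gamma\subset\Gamma$ that is non-contractible in $\Sigma$ and has weight $<\comprep(G,\Sigma)$, but the step ``$\gamma$ lies on the simply connected $2$-complex $D_{12}\cup D_{13}\cup D_{23}$ and bounds a singular disk there, which after a Loop Theorem argument becomes an embedded compression disk in $\Sp^3\smallsetminus\Sigma$'' does not work. The singular disk you produce in the double bubble $2$-complex intersects $\Sigma$ (the complex meets $\Sigma$ exactly in $\Gamma$), so it is not a map $(D^2,\partial D^2)\to(M,\partial M)$ for $M$ a component of $\Sp^3\smallsetminus\Sigma$, and neither Dehn's Lemma nor the Loop Theorem applies. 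And this is not a technicality you can patch: non-contractible simple closed curves on $\Sigma$ need not be compressible. Already on the standard torus, only the $(1,0)$ and $(0,1)$ classes bound compression disks; a $(1,1)$-curve does not. One can realise cellular trivalent graphs on the torus whose three simple cycles lie in classes such as $(1,2),(2,3),(3,5)$, none of which is compressible. So your argument, which draws the contradiction from compressibility of an \emph{arbitrary} non-contractible simple cycle of $\Gamma$, cannot succeed as stated.

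This is precisely why the paper introduces the merging process in Proposition~\ref{prop_1cur_compress}: by gradually merging $B_1$ and $B_2$ along the membrane, it produces at some stage a \emph{sphere} $\partial B_{1,2}(T)$ whose intersection with $\Sigma$ has just acquired an annulus. The two boundary curves $b,b'$ of that annulus lie on a sphere, so Lemma~\ref{lem_sphere_compressible} applies and guarantees compressibility. The cost is that $b$ and $b'$ may each traverse a duplicated edge of $\Gamma$, whence the factor $\tfrac12$ and the need for two curves rather than one. Your proof is missing this mechanism for forcing compressibility; without it the contradiction in \ref{def_T3} does not close.
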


Note that Proposition~\ref{prop_bubble_tangle} immediately implies Theorem~\ref{T:representativity} by Proposition~\ref{prop_max_tangle_l_sw} (the theorem is trivial if $\comprep(G,\Sigma) <3$). Therefore, the remainder of this section is devoted to proving Proposition~\ref{prop_bubble_tangle}.

By definition, a compression bubble tangle satisfies \ref{def_T1} and \ref{def_T2}. We then notice that \ref{def_T4} is verified whenever the compression representativity of $G$ on $\Sigma$ is greater than $2$.

\begin{lem}\label{lem_T4}
If $\comprep(G,\Sigma) \geq 3$ then for all $G$-trivial balls $B$, $B \cap \Sigma$ is $\pi_1$-trivial.
\end{lem}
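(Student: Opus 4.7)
I would prove the statement in two steps.

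First, I would show that every simple closed curve of $\partial B \cap \Sigma$ is contractible on $\Sigma$. Since $B$ is $G$-trivial, $|C(\partial B \cap G)| \leq 2 < 3 \leq \comprep(G,\Sigma)$; if some curve of $\partial B \cap \Sigma$ were non-contractible on $\Sigma$, then Lemma~\ref{lem_sphere_compressible} would produce a compressible curve $\alpha \subset \partial B \cap \Sigma$, satisfying $|C(\alpha \cap G)| \leq |C(\partial B \cap G)| \leq 2 < \comprep(G,\Sigma)$, contradicting the definition of the compression representativity. As observed just after Lemma~\ref{lem_sphere_compressible} in the paper, this has the consequence that at most one connected component $N$ of $\Sigma \smallsetminus \partial B$ is $\pi_1$-non-trivial in $\Sigma$ (the piece carrying all the handles of $\Sigma$), while all other components are sphere-with-holes whose inclusion into $\Sigma$ is trivial on $\pi_1$ since their boundary curves are contractible on $\Sigma$.

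The second step is to show that $N \not\subset B$: this immediately implies that $B \cap \Sigma$ is a disjoint union of $\pi_1$-trivial pieces, hence itself $\pi_1$-trivial in $\Sigma$. I would argue by contradiction: suppose $N \subset B$. Since $\Sigma$ is a closed orientable surface of positive genus embedded in $\Sp^3$, the loop theorem supplies at least one compressible simple closed curve $\delta$ on $\Sigma$; being non-contractible, $\delta$ must lie in $N \subset B$, so $\delta \cap G \subset G \cap N \subset G \cap B$. If $G \cap B = \varnothing$ this already yields $|C(\delta \cap G)| = 0 < \comprep(G,\Sigma)$, the contradiction sought. If $G \cap B$ is a half-segment, $B \smallsetminus G$ is simply connected: after pushing $\delta$ off the half-arc within $N$ (which does not change its class in $\pi_1(\Sigma)$, since a small loop around an arc on $\Sigma$ is contractible), $\delta$ bounds a disk $D \subset B \smallsetminus G$; an innermost-disk reduction on $D \cap \Sigma$ forces any non-contractible innermost circle on $D$ to be compressible but disjoint from $G$, contradicting $\comprep(G,\Sigma) \geq 3$, and applying the same reasoning to $\delta$ itself at the end of the reduction gives the final contradiction.

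The main obstacle I expect is the full-segment case, where $\pi_1(B \smallsetminus G) = \mathbb{Z}$ and a loop on $\Sigma$ in $B$ disjoint from the arc may have nonzero linking number with it, so bounds no disk in $B \smallsetminus G$. I would resolve this by either subdividing $B$ along a disk transverse to the arc at its midpoint (reducing to simpler sub-balls of boundary weight $2$ and reassembling the $\pi_1$-triviality of $B \cap \Sigma$ via a Van Kampen glueing, using Step 1 applied to the cutting sphere to control the glueing curves), or by constructing $\delta$ with the help of the unknottedness disk of the arc in $B$ so that its linking with the arc vanishes and the previous innermost-disk argument goes through.
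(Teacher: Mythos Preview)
Your Step 1 is correct and coincides with the paper's opening move.

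Your Step 2, however, takes a genuinely different route and contains a real gap. The paper does \emph{not} try to manufacture a cheap compressible curve inside $B$. Instead it argues globally: let $B'$ be the side of $S=\partial B$ whose intersection with $\Sigma$ is $\pi_1$-trivial. If $B=B'$ we are done. Otherwise $B$ is $G$-trivial (by hypothesis) and $B'\cap\Sigma$ is planar, so $G=(G\cap B)\cup(G\cap B')$ is a trivial arc glued to a graph lying on a planar patch of $\Sigma$; such a spatial graph is an unknot (possibly with trees attached), which has compression representativity at most $1$ by Ozawa's result, contradicting $\comprep(G,\Sigma)\ge 3$. This global argument completely sidesteps any case analysis on the shape of $G\cap B$.

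Your local approach, by contrast, runs into exactly the difficulty you flag in the full-arc case, and neither proposed fix is complete. For fix~(a), cutting $B$ by a disk $D_0$ through the midpoint of the arc gives two half-arc balls, but you have no control over $D_0\cap\Sigma$: the new sphere $\partial B_i$ may meet $\Sigma$ in non-contractible curves, so Step~1 does not apply to it, and the Van Kampen reassembly of $\pi_1$-triviality is not spelled out. For fix~(b), even if you can find a homology class on $N$ with vanishing linking number with the arc $\alpha$, you must realise it by a \emph{simple} curve on $N$ that is simultaneously disjoint from $\alpha$ on the surface; on a once-punctured torus with $\alpha$ an essential arc this may force you into the one homotopy class parallel to $\alpha$, and checking that this class always has linking zero requires an argument you have not given. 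A smaller point: the sentence ``being non-contractible, $\delta$ must lie in $N$'' is not justified---the compressible curve supplied by the loop theorem need not avoid $\partial B$, so you must instead construct $\delta$ directly inside $N$ (e.g.\ take a non-separating curve on $N$ and use that $N\subset B$ to get a disk, then run the innermost argument).

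The paper's global trick is both shorter and more robust: I would recommend abandoning the case analysis and arguing that $B\neq B'$ forces $G$ itself to be trivial.
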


\begin{proof}[Proof of Lemma~\ref{lem_T4}]
Let $B$ be a $G$-trivial ball. Then its boundary $S$ has weight at most two. By Lemma~\ref{lem_sphere_compressible} and the definition of compression-representativity, this implies that $S$ intersects $\Sigma$ only on contractible curves. Therefore, $S$ bounds two balls, exactly one of which, denoted by $B'$ is $\pi_1$-trivial. This implies that $B'$ is $G$-trivial. Now either $B=B'$ and we are done, or they are different, but this would directly imply that $G$ is made of two trivial pieces, and is thus an unknot, plus possibly some trees attached to it. Since an unknot has compression-representativity one (see for example~\cite[Example~3.2]{ozawa}), this would contradict the assumption that the compression-representativity is at least three.
\end{proof}

The hard part of the proof is to show that \ref{def_T3} is satisfied. This is more delicate than it seems at first glance, since any surface can be obtained by gluing three disks, and these three disks can even come from a double bubble: we provide an example in Appendix~\ref{A:torus} and Figures~\ref{pic_torus_3_disks}, \ref{pic_double_bubble_example_2}, and \ref{pic_double_bubble_example_3} in the case of the torus.

Henceforth, we will proceed by contradiction and assume that we can cover $\Sp^3$ by three closed balls $B_1,B_2,B_3$ of $\cT$ that induce a double bubble $DB$ transverse to $\Sigma$ and $G$. Thus $\Sigma$ is covered by three surfaces with boundary: $\Sigma \cap B_1, \Sigma \cap B_2$ and $\Sigma \cap B_3$ which are $\pi_1$-trivial by definition of $\cT$. In the following, we write $S_i = \partial B_i$. We first show that we can furthermore assume that these surfaces are a disjoint union of closed disks on $\Sigma$.

\begin{lem}\label{lem_Sigma_disks}
  Let $G$ be a graph embedded on $\Sigma$, a surface embedded in $\Sp^3$. Let $\cT$ be the compression bubble tangle associated to $G$ and $\Sigma$. If there is a double bubble $DB$ transverse to $\Sigma$, inducing three balls $B_1,B_2,B_3 \in \cT^3$ such that $B_1 \cup B_2 \cup B_3 = \Sp^3$, then we can isotope the double bubble so that we additionally have that $B_i \cap \Sigma$ is a union of closed disks.
\end{lem}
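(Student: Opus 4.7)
The plan is to proceed via an innermost disk argument, minimizing a complexity measure over all admissible isotopes of the double bubble. Specifically, among all double bubbles isotopic to $DB$ that still satisfy the hypotheses (each $B_i \in \cT$ and $B_1 \cup B_2 \cup B_3 = \Sp^3$), I would choose one minimizing the total number of connected components of $\Sigma \cap DB$, then argue that this minimizer already has each $B_i \cap \Sigma$ a disjoint union of disks. By a preliminary small perturbation, I would also arrange that the shared circle $C$ of the double bubble is disjoint from $\Sigma$, so that $\Sigma \cap DB$ consists entirely of disjoint simple closed curves lying in the interiors of the three disks $\delta_1, \delta_2, \delta_3$; since each $B_i \cap \Sigma$ is $\pi_1$-trivial, every such curve is contractible on $\Sigma$.

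Suppose, for contradiction, that at the minimizer some component $X$ of some $B_i \cap \Sigma$ is not a disk. Since $X$ is $\pi_1$-trivial, it is a planar subsurface contained in some disk of $\Sigma$ and has at least two boundary components. Among all curves of $\Sigma \cap DB$, I would choose $\gamma$ innermost on $\Sigma$, meaning the disk $D_\gamma \subset \Sigma$ bounded by $\gamma$ satisfies $\mathring D_\gamma \cap DB = \varnothing$. Such a $\gamma$ exists because the intersection is finite, every curve is contractible, and we can iterate by replacing any $\gamma$ with a contained intersection curve. The curve $\gamma$ lies on a unique disk $\delta_j$ of $DB$, and together with the subdisk $E \subset \delta_j$ it bounds (on the side not containing $C$) it closes up to form a sphere $E \cup D_\gamma$.

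To make the intersection transverse, I would perturb $D_\gamma$ slightly in the normal direction to obtain $D'_\gamma$ with $\partial D'_\gamma = \gamma$ and $\mathring D'_\gamma \cap \Sigma = \varnothing$; the PL Schoenflies theorem applied to $E \cup D'_\gamma$ then yields a ball $B$ in $\Sp^3$. Since $\gamma$ is innermost and separates $\Sigma$ into $D_\gamma$ and a genus-$g$ surface with boundary, the correct choice of $B$ gives $\Sigma \cap \bar B = D_\gamma$, so $\mathring B \cap \Sigma = \varnothing$ and in particular $\mathring B \cap G = \varnothing$ because $G \subset \Sigma$. I then isotope $\delta_j$ by pushing $E$ through $\mathring B$ to a disk $E'$ lying $\epsilon$-close to $D_\gamma$ on the opposite side; the new disk $\delta'_j = (\delta_j \setminus \mathring E) \cup E'$ still has boundary $C$, so the double bubble structure is preserved, and $B_1 \cup B_2 \cup B_3 = \Sp^3$ is automatic. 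The curve $\gamma$ disappears from $\Sigma \cap DB$ and no new intersection curves are created by $E'$ (it lies in a neighborhood of $D_\gamma$ disjoint from $\Sigma$), so the total number of intersection components strictly decreases. Furthermore, $G \cap \delta'_j \subseteq G \cap \delta_j$ since $E'$ avoids $\Sigma \supseteq G$, so the $G$-intersection count on each resulting boundary sphere does not increase and the $\pi_1$-triviality of each $B'_i \cap \Sigma$ is preserved (the modification either merges $D_\gamma$ into the adjacent $\pi_1$-trivial subsurface or removes it, keeping planarity on both sides). This contradicts the minimality and proves the lemma.

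The main obstacle I expect is justifying that the ball $B$ can be chosen with $\mathring B \cap \Sigma = \varnothing$, which is what allows the isotopy to be performed without introducing new intersections with $\Sigma$ or $G$. This requires the combination of innermost-on-$\Sigma$ (controlling intersections of $D_\gamma$ with $DB$) together with a careful choice of side for the perturbation and of the Schoenflies ball, exploiting the fact that $\gamma$ contractible on $\Sigma$ separates $\Sigma$ into a small disk and its complement. Handling the auxiliary case where $\Sigma$ meets the shared circle $C$ is also a minor technical point, resolved by a small general-position perturbation at the outset.
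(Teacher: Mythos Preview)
Your preliminary perturbation making the shared circle $C$ disjoint from $\Sigma$ is the fatal step. Since $DB$ is transverse to $\Sigma$, the finitely many points of $C \cap \Sigma$ are transverse and cannot be removed by a \emph{small} perturbation; more importantly, they cannot be removed by \emph{any} isotopy that keeps $B_1,B_2,B_3 \in \cT$. Indeed, if $C \cap \Sigma = \varnothing$ then $DB \cap \Sigma$ consists of disjoint simple closed curves, all contractible on $\Sigma$ as you observe. Cutting a closed orientable surface of genus $g \geq 1$ along finitely many disjoint contractible curves always leaves one component carrying all the genus, and the inclusion of that component into $\Sigma$ is $\pi_1$-surjective. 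That component lies in some $B_i$, so $B_i \cap \Sigma$ is not $\pi_1$-trivial and hence $B_i \notin \cT$. In other words, the vertices $C \cap \Sigma$ of the induced graph $\Gamma$ are precisely what allow $\Sigma$ to be cellulated into $\pi_1$-trivial disks; erasing them makes the conclusion of the lemma unreachable rather than easier to obtain.

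Dropping the perturbation, your innermost-on-$\Sigma$ argument also breaks down: the boundary $\gamma$ of an innermost face of $\Sigma \setminus DB$ is then a cycle in the graph $\Gamma$ made of arcs lying in several of the $\delta_j$, so there is no single subdisk $E \subset \delta_j$ with $\partial E = \gamma$ to compress along. The paper works from the sphere side instead: it picks a curve $c \subset S_i \cap \Sigma$ that is innermost on $S_i$ for the property that the $\Sigma$-disk $s(c)$ it bounds is \emph{not} contained in $B_i$, and then pushes $s(c)$ across the $S_i$-disk bounded by $c$ into $B_i$. This move leaves the double-bubble structure (in particular $C \cap \Sigma$) untouched, does not increase $|C(S_i \cap G)|$, preserves membership in $\cT$, and strictly decreases $|C(S_i \cap \Sigma)|$. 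Iterating until every $s(c) \subset B_i$ for all $i$ forces each $B_i \cap \Sigma$ to be the union of the outermost disks $s(c)$.
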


\begin{proof}[Proof of Lemma~\ref{lem_Sigma_disks}]
 By Proposition~\ref{lem_sphere_compressible}, each simple closed curve $c$ of $\bigcup{C(\partial B_i \cap \Sigma)}$ is a contractible curve of $\Sigma$. Hence it bounds a unique closed disk $D_c$ on $\Sigma$ (the other connected component being a surface with non-zero genus and a puncture), and we can define a \bfdex{slope}\footnote{This terminology mirrors the one of Robertson and Seymour in~\cite{Graph_minors_XI}.} $s$ that associates the disk $D_c$ to each $c$:
\begin{equation*}
\begin{array}{rcl}
s:~\oper{\bigcup}{i \in \ens{1,2,3}}{}{C(\partial B_i \cap \Sigma)} & \rightarrow & \P(\Sigma)\\
 c & \mapsto & \text{the closed disk }D_c \text{ of $\Sigma$ such that } \partial D_c = c
\end{array}
\end{equation*}

Let $c$ be a simple closed curve of $S_i \cap \Sigma$ such that $s(c) \not \subset B_i$ and that is innermost in $S_i$ with respect to that property, i.e., $c$ bounds a disk $D_c$ in $S_i$ such that all $c' \in C(\mathring D_c \cap \Sigma)$ satisfy $s(c') \subset B_i$ (see Figure~\ref{pic_surface_disks}). Denote by $D_{c'}$ the disk associated to each $c'$ on $S_i$.

\begin{figure}[H]
\begin{center}
%\tikzsetnextfilename{pic_torus_disks}
\begin{tikzpicture}[scale = 0.75]
\fill [fill = purple!60!] (0,0) circle (2.4);  
\fill [fill = green!60!purple] (0,0) circle (1.25);  

\fill [fill = green!60!purple, draw = purple] (0,0) circle (1.25);  
\fill [blue!60!] (0,1.25) -- (0,-1.25) arc  (-90:90:1.25);
\draw (0,1.25) -- (0,-1.25);
\draw [color = purple] (0,0) circle (1.25);
\node at (45:1.21) [right, color = purple!70!black] {c};
\draw [dashed] (0,0) circle (1.7);
\node at (0:2) {$\alpha$};
\filldraw [fill = purple!60!, draw = purple] (0,0) circle (0.5);
\node at (-45:0.4) [right, color = purple!70!black] {c'}; 

\node at (3,0) {$\rightarrow$};

\fill [fill = purple!60!] (6,0) circle (2.4); 

\begin{scope}[xshift = -7cm]
\fill [fill = purple!60!] (0,0) circle (2.4);    
\fill [fill = green!60!purple] (0,0) circle (1.25);  
\node at (45:1.25) [right] {c};
\fill [fill = green!60!purple, draw = purple] (0,0) circle (1.25);  
\node at (45:1.25) [right] {c};
\fill [blue!60!] (0,1.25) -- (0,-1.25) arc  (-90:90:1.25);
\draw (0,1.25) -- (0,-1.25);
\draw [color = black] (0,0) circle (1.25);
\filldraw [fill = purple!60!, draw = purple] (0,0) circle (0.5);
\fill [color = gray, opacity = 0.85] (0,0) circle (1.25);
\node at (0,-0.4) {$s(c)$};
\node [color = purple!60!black] at (-90:1.8) {$B_i \cap \Sigma$};
\end{scope}
\end{tikzpicture}
\caption{An example of $c$, innermost with the property that $s(c) \not \subset B_i$. The green and blue parts depict the intersection between $\Sigma$ and the two other balls induced by the double bubble.} 
\label{pic_surface_disks}
\end{center}
\end{figure}
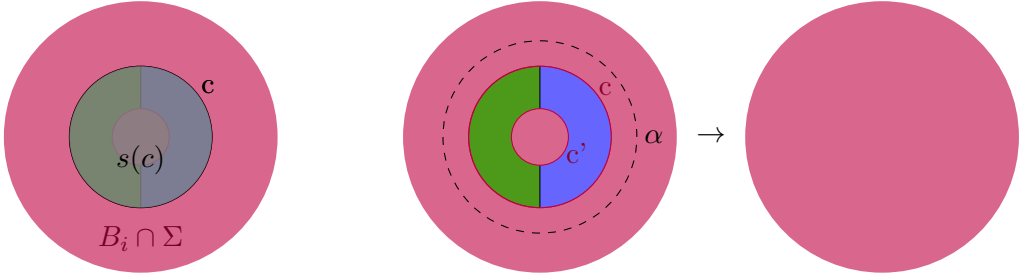

The idea is now to push $s(c)$, a disk of $\Sigma$, through $D_c$ until ${s(c)} \subset B_i$ (see Figure~\ref{pic_surface_disks_2}). In order to remove both the interior of $D_c$ and its boundary, we define $\alpha$, a simple closed curve homotopic to $c$ in $\Sigma \cap B_i$ lying in a tubular neighborhood of $c$ and disjoint from $S_i$. Then $D_\alpha = s(\alpha)$ satisfies $D_c \subset D_\alpha$. 

To be more precise, let $\Delta$ be the disk of $S_i \smallsetminus c$ not disjoint from $s(c)$. As $c$ is innermost on $S_i$ for the property that $s(c) \not \subset B_i$, $\Delta \smallsetminus \Sigma$ has one connected component $W$ which contains $c$ in its boundary, and possibly other connected components which are open disks (any other surface would contradict the fact that $c$ is innermost). These disks may cobound with $D_\alpha$ some closed balls of $B_i$. But in any case, it suffices for $D_\alpha$ to cross through $W$ during the push (see Figure~\ref{pic_surface_disks_2}). As $c$ is innermost on $S_i$ with respect to the property above, $\Sp^3 \smallsetminus (W \cup D_\alpha)$ has two connected components. Let $A$ be the one disjoint from $B_i$, it is necessary disjoint from $\Sigma$ (again because $c$ is innermost). Hence the push described here makes $D_\alpha$ sweep $A$ until it crosses $W$ to end in $B_i$.

This transformation can be made a piecewise-linear isotopy since $A \cup \mathring W$ is disjoint from $\Sigma$ and both $W$ and $\Sigma$ are piecewise-linear. We apply this transformation on $\Sigma$ to obtain a new embedding $\Sigma'$. The number of elements in $C(S_i \cap \Sigma')$ is smaller than $C(S_i \cap \Sigma)$ so that repeating this process will eventually end.

\begin{figure}[H]
\begin{center}
%\tikzsetnextfilename{pic_torus_disks_2}
\begin{tikzpicture}[scale = 0.75]
\draw [blue] (-1,0) -- (7,0);
\node [blue, left] at (-1,0) {$\Sigma$};
\draw [ultra thick, dashed] (0,0) .. controls +(75:1) and \control{(4,0)}{(105:1)};
\draw [ultra thick, dashed] (5,0) .. controls +(75:0.75) and \control{(6,0)}{(105:0.75)};
\draw [color = purple] (-0.8,-1.5) ..controls +(3.5,0) and \control{(2,0)}{(-75:1)} .. controls +(105:0.6) and \control{(1,0)}{(75:0.6)} .. controls +(-105:1) and \control{(0,0)}{(-105:1)} .. controls +(75:1) and \control{(4,0)}{(105:1)} .. controls +(-75:0.4) and \control{(5,0)}{(-105:0.4)} .. controls +(75:0.75) and \control{(6,0)}{(105:0.75)} .. controls +(-75:1) and \control{(6.8,-1.5)}{(-1,0)}; 
\fill [color = purple, opacity = 0.3] (-0.8,-1.5) ..controls +(3.5,0) and \control{(2,0)}{(-75:1)} .. controls +(105:0.6) and \control{(1,0)}{(75:0.6)} .. controls +(-105:1) and \control{(0,0)}{(-105:1)} .. controls +(75:1) and \control{(4,0)}{(105:1)} .. controls +(-75:0.4) and \control{(5,0)}{(-105:0.4)} .. controls +(75:0.75) and \control{(6,0)}{(105:0.75)} .. controls +(-75:1) and \control{(6.8,-1.5)}{(-1,0)} -- (6.8,1.3) -- (-0.8,1.3) -- cycle; 
\node at (0,-1.1) [color = purple!80!black] {$B_i$};
\node at (7.5,0) {$\rightarrow$};

\node at (4.5,0.5) [black] {$W$};

\begin{scope}[xshift = 9.5cm]
\draw [blue](-1,0) -- (-0.6,0) .. controls + (45:2) and \control{(4.3,0)}{(135:2)} -- (4.7,0) .. controls +(75:1) and \control{(6.3,0)}{(105:1)} -- (7,0);
\node [blue, left] at (-1,0) {$\Sigma$};
\draw [color = purple] (-0.8,-1.5) ..controls +(3.5,0) and \control{(2,0)}{(-75:1)} .. controls +(105:0.6) and \control{(1,0)}{(75:0.6)} .. controls +(-105:1) and \control{(0,0)}{(-105:1)} .. controls +(75:1) and \control{(4,0)}{(105:1)} .. controls +(-75:0.4) and \control{(5,0)}{(-105:0.4)} .. controls +(75:0.75) and \control{(6,0)}{(105:0.75)} .. controls +(-75:1) and \control{(6.8,-1.5)}{(-1,0)}; 
\fill [color = purple, opacity = 0.3] (-0.8,-1.5) ..controls +(3.5,0) and \control{(2,0)}{(-75:1)} .. controls +(105:0.6) and \control{(1,0)}{(75:0.6)} .. controls +(-105:1) and \control{(0,0)}{(-105:1)} .. controls +(75:1) and \control{(4,0)}{(105:1)} .. controls +(-75:0.4) and \control{(5,0)}{(-105:0.4)} .. controls +(75:0.75) and \control{(6,0)}{(105:0.75)} .. controls +(-75:1) and \control{(6.8,-1.5)}{(-1,0)} -- (6.8,1.3) -- (-0.8,1.3) -- cycle; 
\node at (0,-1.1) [color = purple!80!black] {$B_i$};
\end{scope}
\end{tikzpicture}
\caption{Pushing the surface $\Sigma$ until $c$ is removed from it intersection with $B_i$.}
\label{pic_surface_disks_2}
\end{center}
\end{figure}
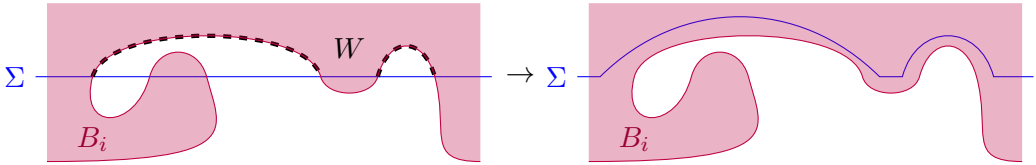

During this transformation, for all $j \in \ens{1,2,3}$ the only modifications of $\Sigma \cap B_j$ were made on $s(c)$ such that $DB \cap \Sigma' = DB \cap \Sigma \smallsetminus s(c)$ and the transversality is preserved. 
Furthermore, $|S_i \cap G|$ did not increase so that the balls $B_j$ still verify \ref{def_T1} and are still in $\cT$ since they are either not intersecting $G$ (hence $G$-trivial) or they contain the $\pi_1$-trivial subparts of $\Sigma'$.

We repeat this transformation on $\Sigma$ until there remains no simple closed curve of $S_i \cap \Sigma$ such that $s(c) \not \subset B_i \cap \Sigma$ for all $i \in \ens{1,2,3}$. We obtain that $B_i \cap \Sigma$ is a disjoint union of closed disks. Finally, we have described the isotopy on $\Sigma$ for convenience, but up to applying a homeomorphism, we can instead keep $\Sigma$ fixed and apply an isotopy on $DB$ and obtain the same properties. This concludes the proof.
\end{proof}

 Then we define $\Gamma$ \bfdex{induced by the double bubble $DB$} to be the intersection of the double bubble with $\Sigma$: where vertices are the intersection of the common boundary of the three disks with $\Sigma$ and edges are the intersections of $\Sigma$ with the disks. By Lemma~\ref{lem_Sigma_disks}, we can assume that this graph is trivalent and cellularly embedded. It is naturally weighted by endowing each edge with its weight, i.e., the number of connected components in its intersection with $G$. Let us now state the lemma we will use for the sake of contradiction.

\begin{lem}\label{lem_contrad_compress}
The total weight of $\Gamma$ is less than $\comprep(G, \Sigma)$: 
\begin{equation*}
\sum_{e \in E(\Gamma)} |C(e \cap G)| < \comprep(G, \Sigma).
\end{equation*}
\end{lem}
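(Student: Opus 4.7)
The plan is to exploit the fact that each disk of the double bubble participates in exactly two of the three boundary spheres $S_1, S_2, S_3$, which combined with the $\frac{2}{3}$ factor in the order of the bubble tangle will give strict inequality. Write the double bubble as three disks $D_1, D_2, D_3$ sharing a common boundary circle $\gamma$, arranged so that $S_i = D_j \cup D_k$ for $\{i,j,k\} = \{1,2,3\}$. By construction, the edges of $\Gamma$ are partitioned by which disk $D_i$ they lie on, and the vertices of $\Gamma$ are exactly $\gamma \cap \Sigma$.

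The first key observation is that $G$ is disjoint from $\gamma$. This follows from the transversality assumption on the double bubble with respect to $G$: the vertices of $G$ are required to be disjoint from $\gamma$ by definition, and since $G \subset \Sigma$ while $\gamma$ meets $\Sigma$ only at the finitely many vertices of $\Gamma$, generic position (which we can assume after a small perturbation) ensures that edges of $G$ on $\Sigma$ avoid these isolated vertex points. Consequently, every connected component of $D_i \cap G = D_i \cap \Sigma \cap G$ lies in a unique edge of $\Gamma$, giving
\begin{equation*}
\sum_{e \in E(\Gamma)} |C(e \cap G)| = \sum_{i=1}^{3} |C(D_i \cap G)|.
\end{equation*}

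The second step uses the same disjointness of $G$ from $\gamma$ to write, for each sphere $S_i = D_j \cup D_k$, the decomposition $|C(S_i \cap G)| = |C(D_j \cap G)| + |C(D_k \cap G)|$, since no component of $S_i \cap G$ crosses $\gamma$. Summing over $i$ and counting each disk twice, we obtain
\begin{equation*}
\sum_{i=1}^{3} |C(S_i \cap G)| = 2 \sum_{i=1}^{3} |C(D_i \cap G)|.
\end{equation*}

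Finally, since $B_i \in \cT$, axiom \ref{def_T1} yields $|C(S_i \cap G)| < k = \tfrac{2}{3}\comprep(G,\Sigma)$ for each $i$. Combining everything,
\begin{equation*}
\sum_{e \in E(\Gamma)} |C(e \cap G)| \;=\; \tfrac{1}{2} \sum_{i=1}^{3} |C(S_i \cap G)| \;<\; \tfrac{1}{2} \cdot 3 \cdot \tfrac{2}{3} \comprep(G,\Sigma) \;=\; \comprep(G,\Sigma).
\end{equation*}
The only subtlety is justifying that $G$ is disjoint from $\gamma$; the rest is a direct bookkeeping consequence of the ratio $\tfrac{3}{2} \cdot \tfrac{2}{3} = 1$ that motivates the choice of $\tfrac{2}{3}$ in the definition of the compression bubble tangle.
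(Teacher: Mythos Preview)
Your proof is correct and follows essentially the same double-counting argument as the paper: you organize it via the three disks $D_i$ of the double bubble (each appearing in exactly two spheres $S_j$), whereas the paper phrases the same identity by noting that each edge of $\Gamma$ bounds exactly two faces. You are in fact more careful than the paper in justifying that $G$ is disjoint from the shared circle $\gamma$; the paper leaves this implicit in its equation~(\ref{eq_prop_contrad_torus_disjoint_1}).
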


\begin{proof}
Since each edge of $\Gamma$ bounds exactly two faces of $\Gamma$, i.e, disks of $\Sigma$; and $\Gamma = DB \cap \Sigma$ we get the following equality:

\begin{equation}\label{eq_prop_contrad_torus_disjoint_1}
|C(S_1 \cap G)| + |C(S_3 \cap G)| + |C(S_3 \cap G)| = 2\oper{\sum}{e \in E(\Gamma)}{}{|C(e \cap G)|}
\end{equation}

By definition of $\cT$, each ball $B_i$ satisfies $|C(S_i \cap G)| < \frac{2}{3} \comprep (G,\Sigma)$ so that:

\begin{equation}\label{eq_prop_contrad_torus_disjoint_2}
|C(S_1 \cap G)| + |C(S_3 \cap G)| + |C(S_3 \cap G)| < 3 \cdot \frac{2}{3} \comprep (G,\Sigma) = 2 \comprep (G,\Sigma).
\end{equation}

Combining (\ref{eq_prop_contrad_torus_disjoint_1}) and (\ref{eq_prop_contrad_torus_disjoint_2}) concludes the proof: $2\oper{\sum}{e \in E(\Gamma)}{}{|C(e \cap G)|} < 2 \comprep (G,\Sigma)$.
\end{proof}

Hence, if $\Gamma$ contained a simple closed curve that is compressible, we would obtain the contradiction that we are looking for. The rest of the proof almost consists of finding such a compressible curve, leading to the following proposition.

\begin{proposition}\label{prop_1cur_compress}
There exists a set of edges $X$ on $\Gamma$ such that:
\begin{equation*}
\sum_{e \in X} |C(e \cap G)| \geq \comprep(G,\Sigma).
\end{equation*}
\end{proposition}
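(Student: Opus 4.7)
The plan is to construct a compressible simple closed curve $\gamma$ on $\Sigma$ made entirely of edges of $\Gamma$. Setting $X$ to be the set of these edges, and assuming after a small generic perturbation that the vertices of $\Gamma$ avoid $G$, we obtain
\[
\sum_{e\in X}|C(e\cap G)| \;=\; |C(\gamma\cap G)| \;\geq\; \comprep(G,\Sigma),
\]
the equality because the $e\cap G$ are pairwise disjoint in that case, and the inequality by the definition of compression-representativity as the infimum of weights of compressible curves. Combined with Lemma~\ref{lem_contrad_compress}, this produces the desired contradiction and completes the proofs of Proposition~\ref{prop_bubble_tangle} and Theorem~\ref{T:representativity}.

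To construct $\gamma$, I would use the double bubble itself to supply candidate compression disks. Consider the three disks $D_1, D_2, D_3$ sharing the common circle $c$. Each intersection $D_i\cap \Sigma$ consists of arcs, whose endpoints are the vertices of $\Gamma$ on $c$, and possibly some circles. A non-contractible circle component of $D_i\cap\Sigma$ would directly yield a compressible curve: an innermost such circle on $D_i$ bounds a sub-disk of $D_i$ disjoint from $\Sigma$, which is a compression disk. Contractible circle components can be removed by the cut-and-paste move used in the proof of Lemma~\ref{lem_sphere_compressible}, so I may assume $D_i\cap\Sigma$ consists only of arcs, which are precisely the edges of $\Gamma$.

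The arcs then divide each $D_i$ into sub-disks, and the boundary of each sub-disk $\delta$ alternates between edges of $\Gamma$ and arcs of $c$. Since $c$ is disjoint from $G$ after the generic perturbation, I can isotope each $c$-arc of $\partial\delta$ onto a parallel arc on $\Sigma$ in a small neighborhood of $c$, without creating any new intersection with $G$. This produces a simple closed curve $\alpha_\delta$ on $\Sigma$ that bounds the isotoped $\delta$ in $\mathbb{S}^3\smallsetminus\Sigma$ and whose intersections with $G$ are precisely those of its $\Gamma$-edges. If any $\alpha_\delta$ is non-contractible on $\Sigma$, it is a compressible curve and we can take $X$ to be its $\Gamma$-edges, finishing the proof.

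The main obstacle, as I see it, is to guarantee that at least one $\alpha_\delta$ must be non-contractible. I would argue this by contradiction: suppose every $\alpha_\delta$ is contractible on $\Sigma$. Then each $\alpha_\delta$ bounds a disk on $\Sigma$, and gluing $\delta$ to this disk yields an embedded $2$-sphere in $\mathbb{S}^3$, which bounds a $3$-ball by the PL Schoenflies theorem. Using these balls systematically, combined with an Euler-characteristic count leveraging the fact that, by Lemma~\ref{lem_Sigma_disks}, each $B_i\cap\Sigma$ is a disjoint union of disks and the three $U_i = B_i \cap \Sigma$ cover $\Sigma$ with a nerve-like combinatorial structure, one can deduce that $\Sigma$ would have to be simply connected, contradicting the standing assumption that $\Sigma$ has positive genus. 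This delicate counting is the technical heart of the proof and is precisely where the factor $2/3$ in Theorem~\ref{T:representativity} arises, improving on the factor $1/2$ implicit in Pardon's approach.
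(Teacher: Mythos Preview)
Your proposal has a genuine gap at the exact point you identify as ``the technical heart of the proof'': you assert that if every $\alpha_\delta$ were contractible then an Euler-characteristic argument would force $\Sigma$ to be a sphere, but you give no such argument. This is not a routine omission. The sub-disks $\delta$ of a single $D_i$ and their pushed boundary curves $\alpha_\delta$ do not obviously assemble into a cell decomposition of $\Sigma$, and there is no evident reason why contractibility of each $\alpha_\delta$ separately should constrain the genus of $\Sigma$. Indeed, the paper explicitly flags (just before stating the proposition) that one cannot in general hope to find a single simple compressible curve in $\Gamma$, and that ``we circumvent this issue by finding a fractional version of such a curve instead.'' Your approach attempts exactly what the paper says may fail.

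The paper's actual proof is structurally different. It singles out one membrane, say $D=B_1\cap B_2$, forms the dual \emph{membrane tree} $MT(D)$ of the arcs $D\cap\Sigma$, and runs a \emph{merging process}: starting from a small tube, it grows $B_1\cup B_2$ along an increasing subtree $T\subset MT(D)$, tracking the filled surface $\obullet{B_{1,2}(T)\cap\Sigma}$. At a minimal $T$ where this is no longer a union of disks, the new piece is an annulus, and its two boundary curves $b,b'$ are homotopic non-contractible curves bounding sub-disks of a sphere, hence both compressible by Lemma~\ref{lem_sphere_compressible}. Because the merging duplicates edges of $\Gamma$, each original edge is used by $b\cup b'$ at most twice, so taking $X$ to be the edges used gives
\[
\sum_{e\in X}|C(e\cap G)|\;\ge\;\tfrac12\sum_{e\in b}|C(e\cap G)|+\tfrac12\sum_{e\in b'}|C(e\cap G)|\;\ge\;\comprep(G,\Sigma).
\]
Two further remarks: first, your $\alpha_\delta$ is not a curve in $\Gamma$ (it contains pushed $c$-arcs), so even if one were non-contractible you would need to argue carefully that its $\Gamma$-edges alone suffice; second, the $2/3$ factor does not arise in this proposition at all---it is already fully accounted for in Lemma~\ref{lem_contrad_compress}, and Proposition~\ref{prop_1cur_compress} only needs the clean bound $\ge\comprep(G,\Sigma)$.
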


The proof of Proposition~\ref{prop_1cur_compress} is the technical crux of Theorem~\ref{T:representativity}. It consists in defining a merging process, which gradually merges two balls of a double bubble, and proving that at some point in this merging process, one ball will intersect $\Sigma$ in a non-trivial way, and thus yield a compressible curve via Lemma~\ref{lem_sphere_compressible}. An additional difficulty is that this curve might be non-simple in $\Gamma$;  we circumvent this issue by finding a fractional version of such a curve instead, which will be strong enough to prove Proposition~\ref{prop_1cur_compress}.

\begin{proof}[Proof of Proposition~\ref{prop_1cur_compress}] In order to prove Proposition~\ref{prop_1cur_compress}, we consider three balls $B_1,B_2,B_3$ inducing a double bubble $DB$ on their boundaries, and define a merging process that gradually merges two of these balls in a controlled way. This is illustrated in Figure~\ref{pic_merging_process} with a double bubble intersecting a torus.\footnote{In this figure, the double bubble does not induce a cellularly embedded graph for clarity purposes, since even on a torus the resulting picture would be too intricate to describe the merging process (compare with pictures in Appendix~\ref{A:torus}).} The point of this merging process is to yield a family of balls, one of which will have a non-trivial intersection with $\Sigma$ and thus allow us to (almost) find a compressible curve.

The merging process depends on the shape of one specific disk $D$ of the double bubble, say the one between $B_1$ and $B_2$, which we will call a \bfdex{membrane} to avoid confusion with the many disks that we deal with. Each connected component of $D \cap \Sigma$ is a separating curve of $D$ with both ends on $\partial D$ since intersections between $D$ and $\Sigma$ are edges of $\Gamma$. Considering these arcs as the embeddings of edges of a graph on $D$, we consider the dual of this graph, which we call the \bfdex{membrane tree} and denote $MT(D)$. Its set of vertices are connected components of $D \smallsetminus \Sigma$, $V(MT(D)) = C(D \smallsetminus \Sigma)$, and there is an edge between $f,f' \in V(MT(D))^2$ if $\bar f  \cap \bar{f'} \neq \varnothing$. In the following we will consider $E(MT(D)) = C(D \cap MT(D))$ as there is a natural one-to-one correspondence between the sets (see Figure~\ref{pic_membrane_tree}). Note that $MT(D)$ is indeed a tree since it is the (weak) dual of an outerplanar graph. 

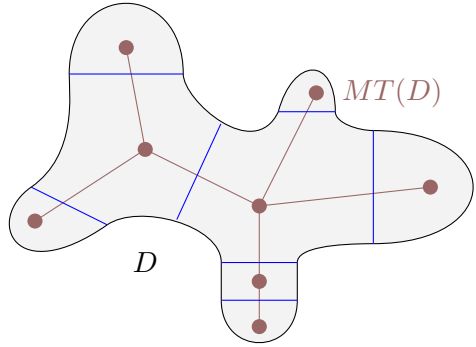
\begin{figure}[H]
\begin{center}
%\tikzsetnextfilename{membrane_tree_example}
\begin{tikzpicture}[scale = 1]
%disk
\coordinate (d1) at (1.5,1);
\coordinate (d2) at (0,1);
\coordinate (d3) at (-0.5,-0.5);
\coordinate (d4) at (0.5,-1);
\coordinate (d5) at (2,-1.5);
\coordinate (d6) at (3,-1.5);
\coordinate (d7) at (2,-2);
\coordinate (d8) at (3,-2);
\coordinate (d9) at (4,-1.25);
\coordinate (d10) at (4,0.25);
\coordinate (d11) at (3.5,0.5);
\coordinate (d12) at (2.75,0.5);
\draw (d1) .. controls +(90:1.25) and \control{(d2)}{(90:1.25)};
\draw (d2) .. controls +(-90:0.25) and \control{(d3)}{(35:0.75)};
\draw (d3) .. controls +(-145:0.75) and \control{(d4)}{(-145:1.5)};
\draw (d4) .. controls +(35:0.5) and \control{(d5)}{(90:0.5)} node [midway, inner sep = 0] (d13) {};
\draw (d5) -- (d7);
\draw (d7) .. controls +(-90:0.75) and \control{(d8)}{(-90:0.75)};
\draw (d8) -- (d6);
\draw (d6) .. controls +(90:0.2) and \control{(d9)}{(180:0.5)};
\draw (d9) .. controls +(0:1.75) and \control{(d10)}{(0:1.75)};
\draw (d10) .. controls +(180:0.20) and \control{(d11)}{(-90:0.15)};
\draw (d11) .. controls +(90:0.75) and \control{(d12)}{(75:0.75)};
\draw (d12) .. controls +(-115:0.75) and \control{(d1)}{(-90:0.35)} node [midway, inner sep = 0] (d14) {};
\fill [opacity = 0.05] (d1) .. controls +(90:1.25) and \control{(d2)}{(90:1.25)}  .. controls +(-90:0.25) and \control{(d3)}{(35:0.75)} .. controls +(-145:0.75) and \control{(d4)}{(-145:1.5)} .. controls +(35:0.5) and \control{(d5)}{(90:0.5)} -- (d7) .. controls +(-90:0.75) and \control{(d8)}{(-90:0.75)} -- (d6) .. controls +(90:0.2) and \control{(d9)}{(180:0.5)} .. controls +(0:1.75) and \control{(d10)}{(0:1.75)}.. controls +(180:0.20) and \control{(d11)}{(-90:0.15)} .. controls +(90:0.75) and \control{(d12)}{(75:0.75)} .. controls +(-115:0.75) and \control{(d1)}{(-90:0.35)};

%vertices
\coordinate (v1) at (0.75,1.35);
\coordinate (v2) at (1,0);
\coordinate (v3) at (-0.45,-0.95);
\coordinate (v4) at (2.5,-0.75);
\coordinate (v5) at (2.5,-1.75);
\coordinate (v6) at (2.5,-2.35);
\coordinate (v7) at (4.75,-0.5);
\coordinate (v8) at (3.25,0.75);
\foreach \i in {1,2,...,8}
{
	\node at (v\i) [fill, circle, inner sep = 2pt, brown!80!blue] {};
}
%tree
\begin{scope}[brown!80!blue]
\draw (v1) -- (v2) -- (v3);
\draw (v2) -- (v4) -- (v5) -- (v6);
\draw (v8) -- (v4) -- (v7);
\end{scope}

%Intersection disk torus
\begin{scope}[blue]
\draw (d1) -- (d2);
\draw (d3) -- (d4);
\draw (d5) -- (d6);
\draw (d7) -- (d8);
\draw (d9) -- (d10);
\draw (d11) -- (d12);
\draw (d13) -- (d14);
\end{scope}

%labels
\node at (1, -1.5) {$D$};
\node at (4.25, 0.75) [brown!80!blue] {$MT(D)$};
\end{tikzpicture}
\end{center}
\caption{The membrane tree (brown) of a membrane $D$ whose edges of $\Gamma$ stem from $D \cap \Sigma$ (blue).}
\label{pic_membrane_tree}
\end{figure}

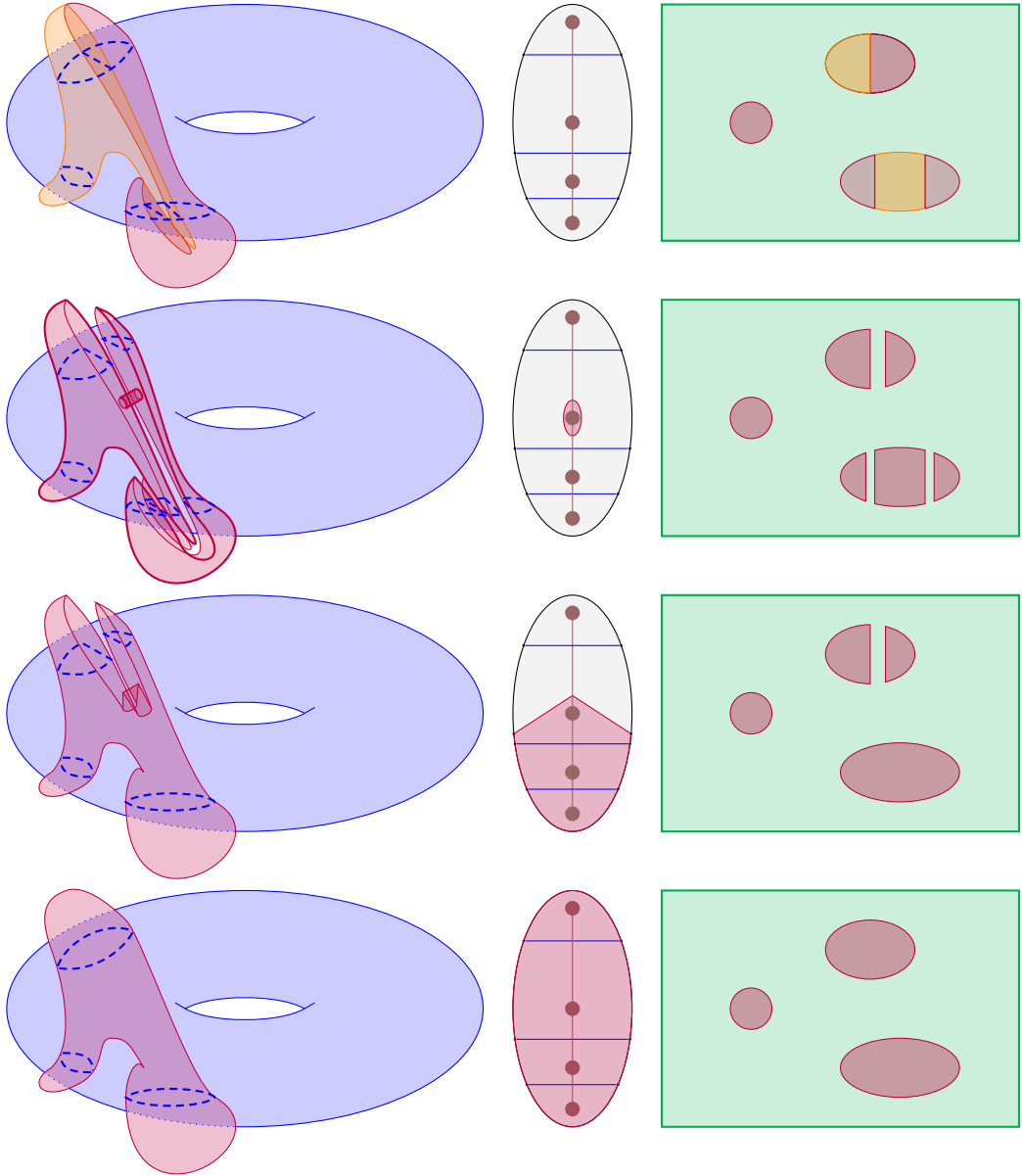
\begin{figure}[h!]
\begin{center}
%\tikzsetnextfilename{merging_process}
\begin{tikzpicture}[scale =0.8]
\begin{scope}[xshift = 4cm, yshift = 2.5cm]
%Torus filling
\fill [blue, opacity = 0.2] (0,0) circle (4 and 2);
\fill [white](1,0) .. controls +(150:0.5) and \control{(-1,0)}{(30:0.5)} .. controls +(-30:0.5) and \control{(1,0)}{(210:0.5)}; 

%Torus drawing and coordinate definition
\path [name path = elip, blue] (4,0) arc (0:360:4 and 2);
\draw [name path = highc, blue] (1,0) .. controls +(150:0.5) and \control{(-1,0)}{(30:0.5)} .. controls +(-30:0.5) and \control{(1,0)}{(210:0.5)};
\coordinate (aa) at ($(1,0) + (30:0.2)$);
\coordinate (bb) at ($(-1,0) + (150:0.2)$);
\draw [blue] (1,0) -- (aa);
\draw [blue] (-1,0) -- (bb);

%sphere 
\coordinate (g0) at (-3,2);
\coordinate (g1) at (-3.25,1);
\coordinate (g2) at (-3,0);
\coordinate (g3) at (-3.25,-1);
\coordinate (g4) at (-2.75,-1.25);
\coordinate (g5) at (-2.25,-0.5);
\coordinate (g6) at (-1.7,-1);
\coordinate (g7) at (-1,-1.75);
\coordinate (r1) at (-2,1.5);
\coordinate (r2) at (-1.25,-0.5);
\coordinate (r3) at (-0.5,-1.5);
\coordinate (r4) at (-2,-1.5);
\coordinate (d1) at (-1.2,-1.75);

\path (r1) .. controls +(-45:0.4) and  \control{(r2)}{(110:0.5)} node (n2) [pos = 0.15, inner sep = 0, circle] {};
\path (g1) .. controls +(-70:0.4) and  \control{(g3)}{(45:0.65)} node (n1) [pos = 0.2, inner sep = 0, circle] {}  node (n4) [pos = 0.85, inner sep = 0, circle] {};
\path (g4) .. controls +(30:0.5) and \control{(g5)}{(180:0.25)} node (n3) [pos = 0.2, inner sep = 0, circle] {};

\filldraw [fill opacity = 0.25, purple] (g0) .. controls +(-160:0.3) and \control{(g1)}{(110:0.75)} .. controls +(-70:0.4) and  \control{(g3)}{(45:0.65)} .. controls +(210:0.5) and \control{(g4)}{(-150:0.75)} .. controls +(30:0.5) and \control{(g5)}{(180:0.25)} .. controls +(0:0.25) and \control{(g6)}{(120:0.5)} .. controls +(120:0.25) and \control{(r4)}{(90:0.35)} .. controls +(-90:2.5) and \control{(r3)}{(-30:1.5)} .. controls +(150:0.5) and \control{(r1)}{(-45:0.4)} .. controls +(135:0.45) and \control{(g0)}{(20:0.3)};

%intersection tore
\draw [blue, thick, densely dashed] (n4) .. controls +(0:0.3) and \control{(n3)}{(90:0.3)} .. controls +(180:0.3) and \control{(n4)}{(-90:0.3)};
\draw [name path = disk2, blue, thick, densely dashed] (n1) .. controls +(70:0.5) and \control{(n2)}{(160:0.5)} .. controls +(-120:0.5) and \control{(n1)}{(-10:0.5)};
\draw [name path = disk3, blue, thick, densely dashed] (r3) .. controls +(140:0.3) and \control{(r4)}{(40:0.3)} .. controls +(-40:0.3) and \control{(r3)}{(-140:0.3)};

%Superposition avec la sphère
\path [name path = garc] (g0) .. controls +(-160:0.3) and \control{(g1)}{(110:0.75)};
\path [name path = darc] (g0) .. controls +(20:0.3) and \control{(r1)}{(135:0.45)};
\path [name intersections={of=darc and elip,total=\tot}]
\foreach \s in {1,...,\tot}{coordinate (uarc-1) at (intersection-\s)};
\path [name intersections={of=garc and elip,total=\tot}]
\foreach \s in {1,...,\tot}{coordinate (uarc-2) at (intersection-\s)};
\path [name path = arc] (g3) .. controls +(210:0.5) and \control{(g4)}{(-150:0.75)};
\path [name intersections={of=arc and elip,total=\tot}]
\foreach \s in {1,...,\tot}{coordinate (arc-\s) at (intersection-\s)};
\path [name path = larc] (r3) .. controls +(-30:1.5) and \control{(r4)}{(-90:2.5)};
\path [name intersections={of=larc and elip,total=\tot}]
\foreach \s in {1,...,\tot}{coordinate (larc-\s) at (intersection-\s)};
\begin{scope}
\clip (uarc-1) rectangle (uarc-2);
\draw [blue, dotted] (4,0) arc (0:360:4 and 2);
\end{scope}
\begin{scope}
\clip (arc-1) rectangle (arc-2);
\draw [blue, dotted] (4,0) arc (0:360:4 and 2);
\end{scope}
\begin{scope}
\clip (larc-1) rectangle (larc-2);
\draw [blue, dotted] (4,0) arc (0:360:4 and 2);
\end{scope}
\begin{scope}
\clip (uarc-1) -- ++(0,0.6) -- (4.1,2.1) -- (4.1,-2.1) -- ($(larc-1)+(0,-1)$) -- ($(larc-1)+(0,1)$) -- ($(larc-2)+(0,1)$) -- ($(larc-2)+(0,-1)$) -- ($(arc-2)+(0,-1)$) -- ($(arc-2)+(0,0.5)$) -- ($(arc-1)+(0,1)$) -- ($(arc-1)+(0,-1)$) -- (-4.1,-1.3) -- (-4.1,2.1) -- ($(uarc-2)+(0,1)$) -- ($(uarc-2)+(0,-1)$) -- cycle;
\draw [blue] (4,0) arc (0:360:4 and 2);
\end{scope}
\end{scope}

\begin{scope}[xshift = 4cm, yshift = 7.5cm]
%Torus filling
\fill [blue, opacity = 0.2] (0,0) circle (4 and 2);
\fill [white](1,0) .. controls +(150:0.5) and \control{(-1,0)}{(30:0.5)} .. controls +(-30:0.5) and \control{(1,0)}{(210:0.5)}; 

%Torus drawing and coordinate definition
\path [name path = elip, blue] (4,0) arc (0:360:4 and 2);
\draw [name path = highc, blue] (1,0) .. controls +(150:0.5) and \control{(-1,0)}{(30:0.5)} .. controls +(-30:0.5) and \control{(1,0)}{(210:0.5)};
\coordinate (aa) at ($(1,0) + (30:0.2)$);
\coordinate (bb) at ($(-1,0) + (150:0.2)$);
\draw [blue] (1,0) -- (aa);
\draw [blue] (-1,0) -- (bb);

%sphere 
\coordinate (g0) at (-3,2);
\coordinate (g1) at (-3.25,1);
\coordinate (g2) at (-3,0);
\coordinate (g3) at (-3.25,-1);
\coordinate (g4) at (-2.75,-1.25);
\coordinate (g5) at (-2.25,-0.5);
\coordinate (g6) at (-1.7,-1);
\coordinate (g7) at (-1,-1.75);
\coordinate (r1) at (-2,1.5);
\coordinate (r2) at (-1.25,-0.5);
\coordinate (r3) at (-0.5,-1.5);
\coordinate (r4) at (-2,-1.5);
\coordinate (d1) at (-1.2,-1.75);
\coordinate (o0) at (-2.5,1.875);
\coordinate (o1) at (-1.85,-1);
\coordinate (o2) at (-0.75,-1.75);
\coordinate (c1) at (-2.05,0.35);
\coordinate (c2) at ($(-2.05,0.35)+(30:0.3)$);
\path [name path =md_1] (g7) .. controls +(120:1) and \control{(g0)}{(-40:1)} node [circle, inner sep= 0, pos = 0.45] (f1') {};
\coordinate (f1) at (f1');
\path [name path = boundr] (o2) .. controls +(130:1) and \control{(o0)}{(-40:1)} node [inner sep= 0, pos = 0.495] (f2') {};
\coordinate (f2) at (f2');
\path (g0) .. controls +(-120:0.5) and \control{(d1)}{(110:2.5)} node [circle, inner sep= 0, pos = 0.688] (f3') {};
\coordinate (f3) at (f3');
\path [name path = u4] (o0) .. controls +(-120:0.3) and \control{($(o2)+(-0.13,0)$)}{(115:2.5)} node [circle, inner sep= 0, pos = 0.66] (f4') {};
\coordinate (f4) at (f4');
\path (g1) .. controls +(-70:0.4) and  \control{(g3)}{(45:0.65)} node (n1) [pos = 0.2, inner sep = 0, circle] {}  node (n4) [pos = 0.85, inner sep = 0, circle] {};
\path (g4) .. controls +(30:0.5) and \control{(g5)}{(180:0.25)} node (n3) [pos = 0.2, inner sep = 0, circle] {};
\path [name path = md_1] (g7) .. controls +(120:1) and \control{(g0)}{(-40:1)};

\filldraw [fill opacity = 0.25, purple] (g0) .. controls +(-160:0.3) and \control{(g1)}{(110:0.75)} .. controls +(-70:0.4) and  \control{(g3)}{(45:0.65)} .. controls +(210:0.5) and \control{(g4)}{(-150:0.75)} .. controls +(30:0.5) and \control{(g5)}{(180:0.25)} .. controls +(0:0.25) and \control{(g6)}{(120:0.5)} .. controls +(120:0.25) and \control{(r4)}{(90:0.35)} .. controls +(-90:2.5) and \control{(r3)}{(-30:1.5)} .. controls +(150:0.5) and \control{(r1)}{(-45:0.4)} .. controls +(135:0.25) and \control{(o0)}{(-25:0.3)} .. controls +(-40:0.25) and \control{(f2)}{(110:1)} to [out = -70, in = -80] (f1) .. controls +(115:1.1) and \control{(g0)}{(-45:0.75)};
\draw [purple, name path = boundl, thin] (o0) .. controls +(-120:0.15) and \control{(f4)}{(120:0.5)} to [out = -70, in = -80] (f3) .. controls +(120:0.5) and \control{(g0)}{(-120:0.5)};
\begin{scope}
\path [name intersections={of=u4 and boundl,total=\tot}]
\foreach \s in {1,...,\tot}{coordinate (i) at (intersection-\s)};
\clip (i) -- ($(i) + (0,-0.3)$) -- ++ (0.75,0) -- ($(o0)+(1,0)$) -- ($(o0)+(-0.29,0)$) -- cycle ;
\fill [fill opacity = 0.25, purple] (o0) .. controls +(-120:0.15) and \control{(f4)}{(120:0.5)} to [out = -70, in = -80] (f3) .. controls +(120:0.5) and \control{(g0)}{(-120:0.5)} .. controls +(-45:0.75) and \control{(f1)}{(115:1.1)} to [out = -80, in = -70] (f2) .. controls +(110:1) and \control{(o0)}{(-40:0.25)};
\end{scope}
\begin{scope}[purple]
\fill [fill opacity = 0.15] (f1) -- (c1) -- (c2) -- (f2) to [out = -70, in = -80] (f1) -- cycle;
\fill [fill opacity = 0.15] (f3) -- (c1) -- (c2) -- (f4) to [out = -70, in = -80] (f3) -- cycle;
\draw [thin] (f1) -- (c1) -- (c2) -- (f2);
\draw [thin] (f3) -- (c1);
\draw [thin] (c2) -- (f4);
\end{scope}

%intersection tore
\path (r1) .. controls +(-45:0.4) and  \control{(r2)}{(110:0.5)} node (n2) [pos = 0.15, inner sep = 0, circle] {};
\draw [blue, thick, densely dashed] (n4) .. controls +(0:0.3) and \control{(n3)}{(90:0.3)} .. controls +(180:0.3) and \control{(n4)}{(-90:0.3)};

\path [name path = disk2] (n1) .. controls +(70:0.5) and \control{(n2)}{(160:0.5)} .. controls +(-120:0.5) and \control{(n1)}{(-10:0.5)};
\path [name path = mg_1] (g0) .. controls +(-120:0.5) and \control{(d1)}{(110:2.5)};
\path [name path = md_1] (g7) .. controls +(120:1) and \control{(g0)}{(-40:1)};
\path [name intersections={of=disk2 and mg_1,total=\tot}]
\foreach \s in {1,...,\tot}{coordinate (disk2g-\s) at (intersection-\s)};
\path [name intersections={of=disk2 and md_1,total=\tot}]
\foreach \s in {1,...,\tot}{coordinate (disk2d-\s) at (intersection-\s)};
\draw [blue, thick, densely dashed] (n1) .. controls +(70:0.3) and \control{(disk2g-1)}{(-155:0.2)} -- (disk2d-2) .. controls +(-135:0.3) and \control{(n1)}{(-10:0.5)};
\path [name intersections={of=disk2 and boundl,total=\tot}]
\foreach \s in {1,...,\tot}{coordinate (disk2bg-\s) at (intersection-\s)};
\path [name intersections={of=disk2 and boundr,total=\tot}]
\foreach \s in {1,...,\tot}{coordinate (disk2bd-\s) at (intersection-\s)};
\draw [blue, thick, densely dashed] (n2) .. controls +(160:0.15) and \control{(disk2bg-1)}{(30:0.15)} -- (disk2bd-2) .. controls +(30:0.1) and \control{(n2)}{(-120:0.1)};

\draw [blue, thick, densely dashed] (r3) .. controls +(140:0.3) and \control{(r4)}{(40:0.3)} .. controls +(-40:0.3) and \control{(r3)}{(-140:0.3)};

%Superposition avec la sphère
\path [name path = garc] (g0) .. controls +(-160:0.3) and \control{(g1)}{(110:0.75)};
\path [name path = darc] (g0) .. controls +(20:0.3) and \control{(r1)}{(135:0.45)};
\path [name intersections={of=darc and elip,total=\tot}]
\foreach \s in {1,...,\tot}{coordinate (uarc-1) at (intersection-\s)};
\path [name intersections={of=garc and elip,total=\tot}]
\foreach \s in {1,...,\tot}{coordinate (uarc-2) at (intersection-\s)};
\path [name path = arc] (g3) .. controls +(210:0.5) and \control{(g4)}{(-150:0.75)};
\path [name intersections={of=arc and elip,total=\tot}]
\foreach \s in {1,...,\tot}{coordinate (arc-\s) at (intersection-\s)};
\path [name path = larc] (r3) .. controls +(-30:1.5) and \control{(r4)}{(-90:2.5)};
\path [name intersections={of=larc and elip,total=\tot}]
\foreach \s in {1,...,\tot}{coordinate (larc-\s) at (intersection-\s)};
\path [name intersections={of=boundl and elip,total=\tot}]
\foreach \s in {1,...,\tot}{coordinate (llarc-\s) at (intersection-\s)};
\path [name intersections={of=md_1 and elip,total=\tot}]
\foreach \s in {1,...,\tot}{coordinate (lrarc-1) at (intersection-\s)};
\begin{scope}
\clip (uarc-1) rectangle (uarc-2);
\draw [blue, dotted] (4,0) arc (0:360:4 and 2);
\end{scope}
\begin{scope}
\clip (arc-1) rectangle (arc-2);
\draw [blue, dotted] (4,0) arc (0:360:4 and 2);
\end{scope}
\begin{scope}
\clip (larc-1) rectangle (larc-2);
\draw [blue, dotted] (4,0) arc (0:360:4 and 2);
\end{scope}
\begin{scope}
\clip (uarc-1) -- ++(0,0.6) -- (4.1,2.1) -- (4.1,-2.1) -- ($(larc-1)+(0,-1)$) -- ($(larc-1)+(0,1)$) -- ($(larc-2)+(0,1)$) -- ($(larc-2)+(0,-1)$) -- ($(arc-2)+(0,-1)$) -- ($(arc-2)+(0,0.5)$) -- ($(arc-1)+(0,1)$) -- ($(arc-1)+(0,-1)$) -- (-4.1,-1.3) -- (-4.1,2.1) -- ($(uarc-2)+(0,1)$) -- ($(uarc-2)+(0,-1)$) -- ($(lrarc-1)+(0,-0.5)$) -- ($(lrarc-1)+(0,0.5)$) -- ($(llarc-1)+(0,0.5)$) -- ($(llarc-1)+(0,-0.5)$) --cycle;
\draw [blue] (4,0) arc (0:360:4 and 2);
\end{scope}
\end{scope}

\begin{scope}[xshift = 4cm, yshift = 12.5cm]
%Torus filling
\fill [blue, opacity = 0.2] (0,0) circle (4 and 2);
\fill [white](1,0) .. controls +(150:0.5) and \control{(-1,0)}{(30:0.5)} .. controls +(-30:0.5) and \control{(1,0)}{(210:0.5)}; 

%Torus drawing and coordinate definition
\path [name path = elip, blue] (4,0) arc (0:360:4 and 2);
\draw [name path = highc, blue] (1,0) .. controls +(150:0.5) and \control{(-1,0)}{(30:0.5)} .. controls +(-30:0.5) and \control{(1,0)}{(210:0.5)};
\coordinate (aa) at ($(1,0) + (30:0.2)$);
\coordinate (bb) at ($(-1,0) + (150:0.2)$);
\draw [blue] (1,0) -- (aa);
\draw [blue] (-1,0) -- (bb);

%sphere 
\coordinate (g0) at (-3,2);
\coordinate (g1) at (-3.25,1);
\coordinate (g2) at (-3,0);
\coordinate (g3) at (-3.25,-1);
\coordinate (g4) at (-2.75,-1.25);
\coordinate (g5) at (-2.25,-0.5);
\coordinate (g6) at (-1.7,-1);
\coordinate (g7) at (-1,-1.75);
\coordinate (r1) at (-2,1.5);
\coordinate (r2) at (-1.25,-0.5);
\coordinate (r3) at (-0.5,-1.5);
\coordinate (r4) at (-2,-1.5);
\coordinate (d1) at (-1.2,-1.75);

\begin{scope}[purple, thick]
\draw (g0) .. controls +(-160:0.3) and \control{(g1)}{(110:0.75)};
\draw (g1) .. controls +(-70:0.4) and  \control{(g3)}{(45:0.65)} node (n1) [pos = 0.2, inner sep = 0, circle] {}  node (n4) [pos = 0.85, inner sep = 0, circle] {};
\draw (g3) .. controls +(210:0.5) and \control{(g4)}{(-150:0.75)};
\draw (g4) .. controls +(30:0.5) and \control{(g5)}{(180:0.25)} node (n3) [pos = 0.2, inner sep = 0, circle] {};
\draw (g5) .. controls +(0:0.25) and \control{(g6)}{(120:0.5)};
\draw (g6) .. controls +(-60:1) and \control{(g7)}{(-60:1)};
\draw [name path = md_1] (g7) .. controls +(120:1) and \control{(g0)}{(-40:1)};
\end{scope}
\begin{scope}[purple, thin]
\draw (g0) .. controls +(-120:0.5) and \control{(d1)}{(110:2.5)};
\draw (d1) .. controls +(-50:1.5) and \control{(g6)}{(-110:0.75)};
\end{scope}
\fill [purple, opacity = 0.25] (g0) .. controls +(-160:0.3) and \control{(g1)}{(110:0.75)} .. controls +(-70:0.4) and  \control{(g3)}{(45:0.65)} .. controls +(210:0.5) and \control{(g4)}{(-150:0.75)} .. controls +(30:0.5) and \control{(g5)}{(180:0.25)} .. controls +(0:0.25) and \control{(g6)}{(120:0.5)} .. controls +(-110:0.75) and \control{(g7)}{(-60:1.25)} .. controls +(120:1) and \control{(g0)}{(-40:1)};

\coordinate (o0) at (-2.5,1.875);
\coordinate (o1) at (-1.85,-1);
\coordinate (o2) at (-0.75,-1.75);
\draw [purple, thick, name path = boundr] (o0) .. controls +(-25:0.3) and \control{(r1)}{(135:0.45)} .. controls +(-45:0.4) and  \control{(r2)}{(110:0.5)} .. controls +(-70:0.5) and \control{(r3)}{(150:0.5)} .. controls +(-30:1.5) and \control{(r4)}{(-90:2.5)} .. controls +(90:0.35) and \control{(o1)}{(-135:0.1)}.. controls +(-45:0.2) and \control{($(d1)+(0.1,-0.35)$)}{(110:0.2)} .. controls +(-70:0.5) and \control{(o2)}{(-50:1)} .. controls +(130:1) and \control{(o0)}{(-40:1)};
\draw [purple, name path = boundl, thin] (o0) .. controls +(-120:0.3) and \control{($(o2)+(-0.13,0)$)}{(115:2.5)} .. controls +(-65:1.5) and \control{(o1)}{(-110:0.75)};
\fill [purple, opacity = 0.25] (o0) .. controls +(-25:0.3) and \control{(r1)}{(135:0.45)} .. controls +(-45:0.4) and  \control{(r2)}{(110:0.5)} .. controls +(-70:0.5) and \control{(r3)}{(150:0.5)} .. controls +(-30:1.5) and \control{(r4)}{(-90:2.5)} .. controls +(90:0.35) and \control{(o1)}{(-135:0.1)}.. controls +(-45:0.2) and \control{($(d1)+(0.1,-0.35)$)}{(110:0.2)} .. controls +(-70:0.5) and \control{(o2)}{(-50:1)} .. controls +(130:1) and \control{(o0)}{(-40:1)};
\path [name intersections={of=boundr and boundl,total=\tot}]
\foreach \s in {1,...,\tot}{coordinate (i) at (intersection-\s)};
\begin{scope}
\clip ($(o0)+(-0.5,0)$) -- ($(o0)+(0.5,0)$) -- (0,-2.6) -- ($(i)+(0,-0.5)$) -- ($(i)+(0,0.1)$) -- cycle;
\fill [purple, opacity = 0.25] (o0) .. controls +(-120:0.3) and \control{($(o2)+(-0.13,0)$)}{(115:2.5)} .. controls +(-65:1.5) and \control{(o1)}{(-110:0.75)} .. controls +(-45:0.2) and \control{($(d1)+(0.1,-0.35)$)}{(110:0.2)} .. controls +(-70:0.5) and \control{(o2)}{(-50:1)} .. controls +(130:1) and \control{(o0)}{(-40:1)};
\end{scope}

%cylinder
\begin{scope}[purple, xshift = -2.05cm, yshift = 0.25cm, rotate = 30]
\draw [thick] (0,0.1) arc (90 :-90: 0.05 and 0.1) -- (0.3,-0.1) arc (-90:90: 0.05 and 0.1) -- cycle;
\draw (0,-0.1) arc (-90 :-270: 0.05 and 0.1);
\draw (0.3,-0.1) arc (-90 :-270: 0.05 and 0.1);
\fill [opacity = 0.25] (0,0.1) -- (0.3,0.1) arc (90 :-90: 0.05 and 0.1) -- (0,-0.1) arc (-90 :-270: 0.05 and 0.1);
\fill [opacity = 0.25] (0,0.1) -- (0.3,0.1) arc (90 :270: 0.05 and 0.1) -- (0,-0.1) arc (-90 :90: 0.05 and 0.1);
\end{scope}

%intersection tore
\path (r1) .. controls +(-45:0.4) and  \control{(r2)}{(110:0.5)} node (n2) [pos = 0.15, inner sep = 0, circle] {};
\draw [blue, thick, densely dashed] (n4) .. controls +(0:0.3) and \control{(n3)}{(90:0.3)} .. controls +(180:0.3) and \control{(n4)}{(-90:0.3)};

\path [name path = disk2] (n1) .. controls +(70:0.5) and \control{(n2)}{(160:0.5)} .. controls +(-120:0.5) and \control{(n1)}{(-10:0.5)};
\path [name path = mg_1] (g0) .. controls +(-120:0.5) and \control{(d1)}{(110:2.5)};
\path [name path = md_1] (g7) .. controls +(120:1) and \control{(g0)}{(-40:1)};
\path [name intersections={of=disk2 and mg_1,total=\tot}]
\foreach \s in {1,...,\tot}{coordinate (disk2g-\s) at (intersection-\s)};
\path [name intersections={of=disk2 and md_1,total=\tot}]
\foreach \s in {1,...,\tot}{coordinate (disk2d-\s) at (intersection-\s)};
\draw [blue, thick, densely dashed] (n1) .. controls +(70:0.3) and \control{(disk2g-1)}{(-155:0.2)} -- (disk2d-2) .. controls +(-135:0.3) and \control{(n1)}{(-10:0.5)};
\path [name intersections={of=disk2 and boundl,total=\tot}]
\foreach \s in {1,...,\tot}{coordinate (disk2bg-\s) at (intersection-\s)};
\path [name intersections={of=disk2 and boundr,total=\tot}]
\foreach \s in {1,...,\tot}{coordinate (disk2bd-\s) at (intersection-\s)};
\draw [blue, thick, densely dashed] (n2) .. controls +(160:0.15) and \control{(disk2bg-1)}{(30:0.15)} -- (disk2bd-2) .. controls +(30:0.1) and \control{(n2)}{(-120:0.1)};

\path [name path = disk3] (r3) .. controls +(140:0.3) and \control{(r4)}{(40:0.3)} .. controls +(-40:0.3) and \control{(r3)}{(-140:0.3)};
\path [name path = md_2] (g6) .. controls +(-60:1) and \control{(g7)}{(-60:1)};
\path [name path = mg_2] (d1) .. controls +(-50:1.5) and \control{(g6)}{(-110:0.75)};
\path [name intersections={of=disk3 and mg_2,total=\tot}]
\foreach \s in {1,...,\tot}{coordinate (disk3g-\s) at (intersection-\s)};
\path [name intersections={of=disk3 and md_2,total=\tot}]
\foreach \s in {1,...,\tot}{coordinate (disk3d-\s) at (intersection-\s)};
\path [name intersections={of=disk3 and mg_1,total=\tot}]
\foreach \s in {1,...,\tot}{coordinate (disk3bg-\s) at (intersection-\s)};
\path [name intersections={of=disk3 and md_1,total=\tot}]
\foreach \s in {1,...,\tot}{coordinate (disk3bd-\s) at (intersection-\s)};
\draw [blue, thick, densely dashed] (disk3bg-1) -- (disk3bd-2) -- (disk3d-2)-- (disk3g-1) -- cycle;
\path [name intersections={of=disk3 and boundl,total=\tot}]
\foreach \s in {1,...,\tot}{coordinate (disk3ll-\s) at (intersection-\s)};
\path [name intersections={of=disk3 and boundr,total=\tot}]
\foreach \s in {1,...,\tot}{coordinate (disk3lr-\s) at (intersection-\s)};
\draw [blue, thick, densely dashed] (r3) .. controls +(140:0.3) and  \control{(disk3ll-1)}{(0:0.1)} -- (disk3lr-7) .. controls +(0:0.15) and \control{(r3)}{(-140:0.1)};
\draw [blue, thick, densely dashed] (r4) .. controls +(40:0.1) and  \control{(disk3ll-2)}{(180:0.1)} -- (disk3lr-6) .. controls +(180:0.1) and \control{(r4)}{(-40:0.15)};

%Superposition avec la sphère
\path [name path = garc] (g0) .. controls +(-160:0.3) and \control{(g1)}{(110:0.75)};
\path [name path = darc] (g0) .. controls +(20:0.3) and \control{(r1)}{(135:0.45)};
\path [name intersections={of=darc and elip,total=\tot}]
\foreach \s in {1,...,\tot}{coordinate (uarc-1) at (intersection-\s)};
\path [name intersections={of=garc and elip,total=\tot}]
\foreach \s in {1,...,\tot}{coordinate (uarc-2) at (intersection-\s)};
\path [name path = arc] (g3) .. controls +(210:0.5) and \control{(g4)}{(-150:0.75)};
\path [name intersections={of=arc and elip,total=\tot}]
\foreach \s in {1,...,\tot}{coordinate (arc-\s) at (intersection-\s)};
\path [name path = larc] (r3) .. controls +(-30:1.5) and \control{(r4)}{(-90:2.5)};
\path [name intersections={of=larc and elip,total=\tot}]
\foreach \s in {1,...,\tot}{coordinate (larc-\s) at (intersection-\s)};
\path [name intersections={of=boundl and elip,total=\tot}]
\foreach \s in {1,...,\tot}{coordinate (llarc-\s) at (intersection-\s)};
\path [name intersections={of=md_1 and elip,total=\tot}]
\foreach \s in {1,...,\tot}{coordinate (lrarc-1) at (intersection-\s)};
\path [name intersections={of=md_2 and elip,total=\tot}]
\foreach \s in {1,...,\tot}{coordinate (lrarc-2) at (intersection-\s)};
\begin{scope}
\clip (uarc-1) rectangle (uarc-2);
\draw [blue, dotted] (4,0) arc (0:360:4 and 2);
\end{scope}
\begin{scope}
\clip (arc-1) rectangle (arc-2);
\draw [blue, dotted] (4,0) arc (0:360:4 and 2);
\end{scope}
\begin{scope}
\clip (larc-1) rectangle (larc-2);
\draw [blue, dotted] (4,0) arc (0:360:4 and 2);
\end{scope}
\begin{scope}
\clip (uarc-1) -- ++(0,0.6) -- (4.1,2.1) -- (4.1,-2.1) -- ($(larc-1)+(0,-1)$) -- ($(larc-1)+(0,1)$) -- ($(llarc-2)+(0,1)$) -- ($(llarc-2)+(0,-1)$) -- ($(lrarc-2)+(0,-1)$) -- ($(lrarc-2)+(0,1)$) -- ($(larc-2)+(0,1)$) -- ($(larc-2)+(0,-1)$) -- ($(arc-2)+(0,-1)$) -- ($(arc-2)+(0,0.5)$) -- ($(arc-1)+(0,1)$) -- ($(arc-1)+(0,-1)$) -- (-4.1,-1.3) -- (-4.1,2.1) -- ($(uarc-2)+(0,1)$) -- ($(uarc-2)+(0,-1)$) -- ($(lrarc-1)+(0,-0.5)$) -- ($(lrarc-1)+(0,0.5)$) -- ($(llarc-1)+(0,0.5)$) -- ($(llarc-1)+(0,-0.5)$) --cycle;
\draw [blue] (4,0) arc (0:360:4 and 2);
\end{scope}
\end{scope}

\begin{scope}[xshift = 4cm, yshift = 17.5cm]
%Torus filling
\fill [blue, opacity = 0.2] (0,0) circle (4 and 2);
\fill [white](1,0) .. controls +(150:0.5) and \control{(-1,0)}{(30:0.5)} .. controls +(-30:0.5) and \control{(1,0)}{(210:0.5)}; 

%Torus drawing and coordinate definition
\path [name path = elip, blue] (4,0) arc (0:360:4 and 2);
\draw [name path = highc, blue] (1,0) .. controls +(150:0.5) and \control{(-1,0)}{(30:0.5)} .. controls +(-30:0.5) and \control{(1,0)}{(210:0.5)};
\coordinate (aa) at ($(1,0) + (30:0.2)$);
\coordinate (bb) at ($(-1,0) + (150:0.2)$);
\draw [blue] (1,0) -- (aa);
\draw [blue] (-1,0) -- (bb);

%sphere 
\coordinate (g0) at (-3,2);
\coordinate (g1) at (-3.25,1);
\coordinate (g2) at (-3,0);
\coordinate (g3) at (-3.25,-1);
\coordinate (g4) at (-2.75,-1.25);
\coordinate (g5) at (-2.25,-0.5);
\coordinate (g6) at (-1.7,-1);
\coordinate (g7) at (-1,-1.75);
\coordinate (r1) at (-2,1.5);
\coordinate (r2) at (-1.25,-0.5);
\coordinate (r3) at (-0.5,-1.5);
\coordinate (r4) at (-2,-1.5);
\coordinate (d1) at (-1.2,-1.75);

\begin{scope}[orange]
\draw (g0) .. controls +(-160:0.3) and \control{(g1)}{(110:0.75)};
\draw (g1) .. controls +(-70:0.4) and  \control{(g3)}{(45:0.65)} node (n1) [pos = 0.2, inner sep = 0, circle] {}  node (n4) [pos = 0.85, inner sep = 0, circle] {};
\draw (g3) .. controls +(210:0.5) and \control{(g4)}{(-150:0.75)};
\draw (g4) .. controls +(30:0.5) and \control{(g5)}{(180:0.25)} node (n3) [pos = 0.2, inner sep = 0, circle] {};
\draw (g5) .. controls +(0:0.25) and \control{(g6)}{(120:0.5)};
\draw (g6) .. controls +(-60:1) and \control{(g7)}{(-60:1)};
\draw [name path = md_1] (g7) .. controls +(120:1) and \control{(g0)}{(-40:1)};
\end{scope}
\begin{scope}[purple]
\draw (g0) .. controls +(20:0.3) and \control{(r1)}{(135:0.45)};
\draw (r1) .. controls +(-45:0.4) and  \control{(r2)}{(110:0.5)} node (n2) [pos = 0.15, inner sep = 0, circle] {};
\draw (r2) .. controls +(-70:0.5) and \control{(r3)}{(150:0.5)};
\draw (r3) .. controls +(-30:1.5) and \control{(r4)}{(-90:2.5)};
\draw (r4) .. controls +(90:0.35) and \control{(g6)}{(120:0.25)};
\end{scope}
\begin{scope}[purple!50!orange]
\draw (g0) .. controls +(-120:0.5) and \control{(d1)}{(110:2.5)};
\draw (d1) .. controls +(-50:1.5) and \control{(g6)}{(-110:0.75)};
\end{scope}
\fill [orange, opacity = 0.25] (g0) .. controls +(-160:0.3) and \control{(g1)}{(110:0.75)} .. controls +(-70:0.4) and  \control{(g3)}{(45:0.65)} .. controls +(210:0.5) and \control{(g4)}{(-150:0.75)} .. controls +(30:0.5) and \control{(g5)}{(180:0.25)} .. controls +(0:0.25) and \control{(g6)}{(120:0.5)} .. controls +(-60:1) and \control{(g7)}{(-60:1)} .. controls +(120:1) and \control{(g0)}{(-40:1)};
\fill [purple, opacity = 0.25] (g0) .. controls +(20:0.3) and \control{(r1)}{(135:0.45)} .. controls +(-45:0.4) and  \control{(r2)}{(110:0.5)} .. controls +(-70:0.5) and \control{(r3)}{(150:0.5)} .. controls +(-30:1.5) and \control{(r4)}{(-90:2.5)} .. controls +(90:0.35) and \control{(g6)}{(120:0.25)}.. controls +(-60:1) and \control{(g7)}{(-60:1)} .. controls +(120:1) and \control{(g0)}{(-40:1)};
\fill [purple!50!orange, opacity = 0.25] (g0) .. controls +(-120:0.5) and \control{(d1)}{(110:2.5)} .. controls +(-50:1.5) and \control{(g6)}{(-110:0.75)} .. controls +(-60:1) and \control{(g7)}{(-60:1)} .. controls +(120:1) and \control{(g0)}{(-40:1)};

%intersection tore
\draw [blue, thick, densely dashed] (n4) .. controls +(0:0.3) and \control{(n3)}{(90:0.3)} .. controls +(180:0.3) and \control{(n4)}{(-90:0.3)};
\draw [name path = disk2, blue, thick, densely dashed] (n1) .. controls +(70:0.5) and \control{(n2)}{(160:0.5)} .. controls +(-120:0.5) and \control{(n1)}{(-10:0.5)};
\path [name path = mg_1] (g0) .. controls +(-120:0.5) and \control{(d1)}{(110:2.5)};
\path [name path = md_1] (g7) .. controls +(120:1) and \control{(g0)}{(-40:1)};
\path [name intersections={of=disk2 and mg_1,total=\tot}]
\foreach \s in {1,...,\tot}{coordinate (disk2g-\s) at (intersection-\s)};
\path [name intersections={of=disk2 and md_1,total=\tot}]
\foreach \s in {1,...,\tot}{coordinate (disk2d-\s) at (intersection-\s)};
\draw [blue, thick, densely dashed] (disk2g-1) -- (disk2d-2);

\draw [name path = disk3, blue, thick, densely dashed] (r3) .. controls +(140:0.3) and \control{(r4)}{(40:0.3)} .. controls +(-40:0.3) and \control{(r3)}{(-140:0.3)};
\path [name path = md_2] (g6) .. controls +(-60:1) and \control{(g7)}{(-60:1)};
\path [name path = mg_2] (d1) .. controls +(-50:1.5) and \control{(g6)}{(-110:0.75)};
\path [name intersections={of=disk3 and mg_2,total=\tot}]
\foreach \s in {1,...,\tot}{coordinate (disk3g-\s) at (intersection-\s)};
\path [name intersections={of=disk3 and md_2,total=\tot}]
\foreach \s in {1,...,\tot}{coordinate (disk3d-\s) at (intersection-\s)};
\draw [blue, thick, densely dashed] (disk3g-1) -- (disk3d-2);
\path [name intersections={of=disk3 and mg_1,total=\tot}]
\foreach \s in {1,...,\tot}{coordinate (disk3bg-\s) at (intersection-\s)};
\path [name intersections={of=disk3 and md_1,total=\tot}]
\foreach \s in {1,...,\tot}{coordinate (disk3bd-\s) at (intersection-\s)};
\draw [blue, thick, densely dashed] (disk3bg-1) -- (disk3bd-2);

%Superposition avec la sphère
\path [name path = garc] (g0) .. controls +(-160:0.3) and \control{(g1)}{(110:0.75)};
\path [name path = darc] (g0) .. controls +(20:0.3) and \control{(r1)}{(135:0.45)};
\path [name intersections={of=darc and elip,total=\tot}]
\foreach \s in {1,...,\tot}{coordinate (uarc-1) at (intersection-\s)};
\path [name intersections={of=garc and elip,total=\tot}]
\foreach \s in {1,...,\tot}{coordinate (uarc-2) at (intersection-\s)};
\path [name path = arc] (g3) .. controls +(210:0.5) and \control{(g4)}{(-150:0.75)};
\path [name intersections={of=arc and elip,total=\tot}]
\foreach \s in {1,...,\tot}{coordinate (arc-\s) at (intersection-\s)};
\path [name path = larc] (r3) .. controls +(-30:1.5) and \control{(r4)}{(-90:2.5)};
\path [name intersections={of=larc and elip,total=\tot}]
\foreach \s in {1,...,\tot}{coordinate (larc-\s) at (intersection-\s)};
\begin{scope}
\clip (uarc-1) rectangle (uarc-2);
\draw [blue, dotted] (4,0) arc (0:360:4 and 2);
\end{scope}
\begin{scope}
\clip (arc-1) rectangle (arc-2);
\draw [blue, dotted] (4,0) arc (0:360:4 and 2);
\end{scope}
\begin{scope}
\clip (larc-1) rectangle (larc-2);
\draw [blue, dotted] (4,0) arc (0:360:4 and 2);
\end{scope}
\begin{scope}
\clip (uarc-1) -- ++(0,0.6) -- (4.1,2.1) -- (4.1,-2.1) -- ($(larc-1)+(0,-1)$) -- ($(larc-1)+(0,1)$) -- ($(larc-2)+(0,1)$) -- ($(larc-2)+(0,-1)$) -- ($(arc-2)+(0,-1)$) -- ($(arc-2)+(0,0.5)$) -- ($(arc-1)+(0,1)$) -- ($(arc-1)+(0,-1)$) -- (-4.1,-1.3) -- (-4.1,2.1) -- ($(uarc-2)+(0,1)$) -- ($(uarc-2)+(0,-1)$) -- cycle;
\draw [blue] (4,0) arc (0:360:4 and 2);
\end{scope}
\end{scope}

\begin{scope}[xshift = 9.5cm, yshift = 17.5cm]
\path (0,2) arc (90:145:1 and 2) node (n1) [fill, circle, inner sep = 0] {};
\path (0,2) arc (90:35:1 and 2) node (n2) [fill, circle, inner sep = 0] {};
\path (0,2) arc (90:-15:1 and 2) node (n3) [fill, circle, inner sep = 0] {}; 
\path (0,2) arc (90:195:1 and 2) node (n4) [fill, circle, inner sep = 0] {}; 
\path (0,2) arc (90:-40:1 and 2) node (n5) [fill, circle, inner sep = 0] {}; 
\path (0,2) arc (90:220:1 and 2) node (n6) [fill, circle, inner sep = 0] {}; 
\filldraw [fill opacity = 0.05] (0,0) circle (1 and 2);
\draw [blue] (n1) -- (n2);
\draw [blue] (n3) -- (n4);
\draw [blue] (n5) -- (n6);

\node at (0,1.7) [fill, circle, brown!80!blue, inner sep = 2] {};
\node at (0,0) [fill, circle, brown!80!blue, inner sep = 2] {};
\node at (0,-1) [fill, circle, brown!80!blue, inner sep = 2] {};
\node at (0,-1.7) [fill, circle, brown!80!blue, inner sep = 2] {};
\draw [brown!80!blue] (0,1.7) -- (0,-1.7);
\end{scope}

\begin{scope}[xshift = 9.5cm, yshift = 12.5cm]
\path (0,2) arc (90:145:1 and 2) node (n1) [fill, circle, inner sep = 0] {};
\path (0,2) arc (90:35:1 and 2) node (n2) [fill, circle, inner sep = 0] {};
\path (0,2) arc (90:-15:1 and 2) node (n3) [fill, circle, inner sep = 0] {}; 
\path (0,2) arc (90:195:1 and 2) node (n4) [fill, circle, inner sep = 0] {}; 
\path (0,2) arc (90:-40:1 and 2) node (n5) [fill, circle, inner sep = 0] {}; 
\path (0,2) arc (90:220:1 and 2) node (n6) [fill, circle, inner sep = 0] {}; 
\filldraw [fill opacity = 0.05] (0,0) circle (1 and 2);
\draw [blue] (n1) -- (n2);
\draw [blue] (n3) -- (n4);
\draw [blue] (n5) -- (n6);

\filldraw [fill = purple, fill opacity = 0.25, draw = purple] (0.0,0.0) circle (0.15 and 0.3);
\node at (0,1.7) [fill, circle, brown!80!blue, inner sep = 2] {};
\node at (0,0) [fill, circle, brown!80!blue, inner sep = 2] {};
\node at (0,-1) [fill, circle, brown!80!blue, inner sep = 2] {};
\node at (0,-1.7) [fill, circle, brown!80!blue, inner sep = 2] {};
\draw [brown!80!blue] (0,1.7) -- (0,-1.7);

\end{scope}

\begin{scope}[xshift = 9.5cm, yshift = 7.5cm]
\path (0,2) arc (90:145:1 and 2) node (n1) [fill, circle, inner sep = 0] {};
\path (0,2) arc (90:35:1 and 2) node (n2) [fill, circle, inner sep = 0] {};
\path (0,2) arc (90:-15:1 and 2) node (n3) [fill, circle, inner sep = 0] {}; 
\path (0,2) arc (90:195:1 and 2) node (n4) [fill, circle, inner sep = 0] {}; 
\path (0,2) arc (90:-40:1 and 2) node (n5) [fill, circle, inner sep = 0] {}; 
\path (0,2) arc (90:220:1 and 2) node (n6) [fill, circle, inner sep = 0] {};
\path (0,2) arc (90:190:1 and 2) node (n7') [fill, circle, inner sep = 0] {}; 
\path (0,2) arc (90:-10:1 and 2) node (n8') [fill, circle, inner sep = 0] {}; 
\filldraw [fill opacity = 0.05] (0,0) circle (1 and 2);

\coordinate (t) at (0,0.3);
\coordinate (n7) at (n7');
\coordinate (n8) at (n8');
\filldraw [fill = purple, fill opacity = 0.25, draw = purple] (n7) -- (t) -- (n8) arc (-10:-170: 1 and 2) -- cycle;
\node at (0,1.7) [fill, circle, brown!80!blue, inner sep = 2] {};
\node at (0,0) [fill, circle, brown!80!blue, inner sep = 2] {};
\node at (0,-1) [fill, circle, brown!80!blue, inner sep = 2] {};
\node at (0,-1.7) [fill, circle, brown!80!blue, inner sep = 2] {};
\draw [brown!80!blue] (0,1.7) -- (0,-1.7);
\draw [blue] (n1) -- (n2);
\draw [blue] (n3) -- (n4);
\draw [blue] (n5) -- (n6);
\end{scope}

\begin{scope}[xshift = 9.5cm, yshift = 2.5cm]
\path (0,2) arc (90:145:1 and 2) node (n1) [fill, circle, inner sep = 0] {};
\path (0,2) arc (90:35:1 and 2) node (n2) [fill, circle, inner sep = 0] {};
\path (0,2) arc (90:-15:1 and 2) node (n3) [fill, circle, inner sep = 0] {}; 
\path (0,2) arc (90:195:1 and 2) node (n4) [fill, circle, inner sep = 0] {}; 
\path (0,2) arc (90:-40:1 and 2) node (n5) [fill, circle, inner sep = 0] {}; 
\path (0,2) arc (90:220:1 and 2) node (n6) [fill, circle, inner sep = 0] {};
\filldraw [fill opacity = 0.05] (0,0) circle (1 and 2);
\draw [blue] (n1) -- (n2);
\draw [blue] (n3) -- (n4);
\draw [blue] (n5) -- (n6);

\node at (0,1.7) [fill, circle, brown!80!blue, inner sep = 2] {};
\node at (0,0) [fill, circle, brown!80!blue, inner sep = 2] {};
\node at (0,-1) [fill, circle, brown!80!blue, inner sep = 2] {};
\node at (0,-1.7) [fill, circle, brown!80!blue, inner sep = 2] {};
\draw [brown!80!blue] (0,1.7) -- (0,-1.7);
\filldraw [purple, fill opacity = 0.25] (0,0) circle (1 and 2);
\end{scope}

\begin{scope}[xshift = 14cm, yshift = 17.5cm]
\filldraw [fill opacity = 0.2, green!70!blue, thick] (-3,-2) rectangle (3,2);
\path (2,-1) arc (0:65:1 and 0.5) coordinate (n1);
\path (2,-1) arc (0:-65:1 and 0.5) coordinate (n2);
\path (2,-1) arc (0:115:1 and 0.5) coordinate (n3);
\path (2,-1) arc (0:-115:1 and 0.5) coordinate (n4);

\filldraw [fill opacity = 0.35, orange] (n1) arc (65:115:1 and 0.5) -- (n4) arc (-115:-65:1 and 0.5) -- cycle;
\filldraw [purple, fill opacity = 0.25, fill = purple] (n2) arc (-65:65:1 and 0.5) -- (n2);
\filldraw [purple, fill opacity = 0.25, fill = purple] (n4) arc (-115:-245:1 and 0.5) -- (n4);

\draw (0.5, 1) circle (0.75 and 0.5);
\filldraw [fill opacity = 0.35, purple] (0.5,1.5) arc (90:-90:0.75 and 0.5) -- cycle;
\filldraw [fill opacity = 0.35, orange] (0.5,1.5) arc (90:270:0.75 and 0.5) -- cycle;
\filldraw [fill opacity = 0.35, purple] (-1.5,0) circle (0.35);
\end{scope}

\begin{scope}[xshift = 14cm, yshift = 12.5cm]
\filldraw [fill opacity = 0.2, green!70!blue, thick] (-3,-2) rectangle (3,2);
\path (2,-1) arc (0:65:1 and 0.5) coordinate (n1);
\path (2,-1) arc (0:-55:1 and 0.5) coordinate (n2');
\path (2,-1) arc (0:-115:1 and 0.5) coordinate (n4);
\path (2,-1) arc (0:-125:1 and 0.5) coordinate (n4');

\filldraw [fill opacity = 0.35, purple] (n1) arc (65:115:1 and 0.5) -- (n4) arc (-115:-65:1 and 0.5) -- cycle;
\filldraw [fill opacity = 0.35, purple] (n2') arc (-55:55:1 and 0.5) -- cycle;
\filldraw [fill opacity = 0.35, purple] (n4') arc (-125:-235:1 and 0.5) -- cycle;

\path (1.25,1) arc (0:70:0.75 and 0.5) coordinate (n5);
\filldraw [fill opacity = 0.35, purple] (n5) arc (70:-70:0.75 and 0.5) -- cycle;
\filldraw [fill opacity = 0.35, purple] (0.5,1.5) arc (90:270:0.75 and 0.5) -- cycle;
\filldraw [fill opacity = 0.35, purple] (-1.5,0) circle (0.35);
\end{scope}

\begin{scope}[xshift = 14cm, yshift = 7.5cm]
\filldraw [fill opacity = 0.2, green!70!blue, thick] (-3,-2) rectangle (3,2);

\filldraw [fill opacity = 0.35, purple] (1,-1) circle (1 and 0.5);
\path (1.25,1) arc (0:70:0.75 and 0.5) coordinate (n5);
\filldraw [fill opacity = 0.35, purple] (n5) arc (70:-70:0.75 and 0.5) -- cycle;
\filldraw [fill opacity = 0.35, purple] (0.5,1.5) arc (90:270:0.75 and 0.5) -- cycle;
\filldraw [fill opacity = 0.35, purple] (-1.5,0) circle (0.35);
\end{scope}

\begin{scope}[xshift = 14cm, yshift = 2.5cm]
\filldraw [fill opacity = 0.2, green!70!blue, thick] (-3,-2) rectangle (3,2);
\filldraw [fill opacity = 0.35, purple] (-1.5,0) circle (0.35);
\filldraw [fill opacity = 0.35, purple] (1,-1) circle (1 and 0.5);
\filldraw [fill opacity = 0.35, purple] (0.5,1) circle (0.75 and 0.5);
\end{scope} 
\end{tikzpicture}
\end{center}
\vspace{-4em}
\caption{Three views of the merging process. Left: a 3D view of a double bubble (purple and orange) intersecting $\Sigma$ (blue). Middle: the membrane tree, which is progressively covered by a subtree. Right: The unfolded torus and the graph $\Gamma$ evolving throughout the merging process. It is first covered by 3 balls (orange, purple, green), then 2 (purple and green) after the cylinder merges two of them. The green ball is not shown on the left view for clarity purpose.}
\label{pic_merging_process}
\end{figure}  

The merging process starts by shrinking a bit $B_2$ and surrounding what remains by $B_3$ so that $B_2$ is now a closed ball within what was previously $B_2$ while $B_3$ is homeomorphic to $\mathbb{S}^2 \times \mathbb{S}^1$. 
On $\Sigma$, the edges that came from $D = B_1 \cap B_2$ are thus now replaced by thin bands belonging to the transformed $B_3$ with two copies of each edge as boundaries of the bands. We assume the shrinking to be small enough so that the copied edges intersect $G$ the same way the original edges do: this is always possible because the double bubble is transverse to $G$ and the vertices of $G$ have been thickened. It follows that one face of $\Gamma$ is now $B_3 \cap \Sigma$, which is $\Sigma$ with as many boundaries as there are connected components in $(B_2 \cup B_1) \cap \Sigma$ (see the unfolded torus on the right of the second row compared to the one on the right of the first row in Figure~\ref{pic_merging_process}). All the other faces of $\Gamma$ are now disks whose boundaries consist of edges with $B_3$.

We now pick an arbitrary vertex $v$ of $MT(D)$ and connect $B_1$ with $B_2$ by a small solid cylinder disjoint from $\Sigma$ whose ends are glued on $B_1$ and $B_2$ in the face indicated by $v$. This new ball is denoted by $B_{1,2}$. This step can be seen on the second row of Figure~\ref{pic_merging_process}.

For $e$ an edge of $MT(D)$ adjacent to $v$, we then grow this cylinder so that it entirely covers $e$. This amounts to connecting the surfaces of $\Sigma \cap B_{1,2}$ (intuitively, the bands between the copies of edges disappear, see the right column of Figure~\ref{pic_merging_process}). Indeed, when the edge $e$ of $MT(D)$ is covered by the growing cylinder, the two surfaces containing the initial disks of $B_1$ and $B_2$ incident to $e$ are glued along $e$. This process is then iterated by adding more edges of $MT(D)$, thus growing a subtree of $MT(D)$, until that subtree has fully grown and is $MT(D)$. This is illustrated in the third and four rows of Figure~\ref{pic_merging_process}. Eventually, the whole membrane is covered so that $B_{1,2} = B_1 \cup B_2$

We track the evolution of the ball $B_{1,2}$ during the merging process, which is parameterized by the subtree $T$ of $MT(D)$ that has been merged. This leads to the following \bfdex{merging function} $m$:

\begin{equation*}
\begin{array}{cccl}
m : & MT(D)_{\subset} & \rightarrow & \mathcal{P}(\Sigma)\\
& T & \mapsto & B_{1,2} (T) \cap \Sigma
\end{array}
\end{equation*} 
where $MT(D)_{\subset}$ is the set of subtrees of $MT(D)$ and $B_{1,2}(T)$ is the closed ball of the merging process parameterized by $T$. By definition, we have that $B_{1,2}(T)$ is contained in $B_1 \cup B_2$, transverse to $\Sigma$, and $B_{1,2}(T) \cap \Sigma = E(T)$.

It will be convenient in the following to define what kind of topology of $\Sigma$ is captured by a closed connected surface $W$ embedded on $\Sigma$. For example, an annulus can be embedded on the torus in two ways: either it contains a non-contractible curve of the torus, or it is $\pi_1$-trivial and contains irrelevant holes.
The following definition aims to ``fill'' these irrelevant holes. Each contractible boundary component $b$ of $W$ bounds a disk $D_b$ on $\Sigma$. We define $\obullet{W} = W \cup \oper{\bigcup}{b \in C(\partial W) \\ b \text{ contractible}}{}{D_b}$, the \bfdex{filled surface} induced by $W$. 

It directly follows from the definition that $\partial \obullet{W} \subset \partial W$. It is also worth noticing that if $W$ is $\pi_1$-trivial, then $\obullet{W} = D_b$ for one of the $b$.

 We can now finally use all these objects to find a fractional packing of curves in $\Gamma$ with enough weight. Let $T$ be a minimal tree of $MT(D)_{\subset}$ such that $\obullet{m(T)}$ is not a union of disks. Note that such a tree exists, since for $T=MT(D)$, we have $B_{1,2}(T)=B_1\cup B_2$ and thus $\obullet{m(T)}=\Sigma$. Necessarily, the surface $\obullet{m(T)}$ contains exactly one component that is not a disk. Indeed, let $e$ be an edge incident to a leaf of $T$. By minimality of $T$, $\obullet{m(T \smallsetminus e)}$ is a union of disjoint disks. By definition of the merging process, $m(T)$ is $m(T \smallsetminus e)$, where two copies of an edge have been merged back. If these copies were on two different $\pi_1$-trivial surfaces, their merging would have been $\pi_1$-trivial too. Hence the two copies of an edge were on the same disk of $\obullet{m(T \smallsetminus e)}$, so that the merging of $e$ glues together two distinct segments on the boundary of that disk. This operation yields either a Möbius band, which is impossible since $\Sigma$ is orientable, or an annulus $a$.

By definition of $\obullet{m(T)}$, the boundaries of $a$ are non-contractible curves of $\Sigma$. 
Lemma~\ref{lem_sphere_compressible} ensures that one of these two curves is compressible. Since they bound an annulus on $\Sigma$, they are homotopic on $\Sigma$ and thus are both compressible. Let us denote them by $b$ and $b'$.

At any stage of the merging process, some edges that were originally in $\Gamma$ are now duplicated. Therefore, while the curves $b$ and $b'$ are simple and disjoint, they might be using some of those duplicated edges of $\Gamma$, but each such edge is used at most twice. Therefore we have, denoting by $X$ the set of edges of $\Gamma$ used by $b$ and $b'$.

\begin{equation*}
\sum_{e \in X} |C(e \cap \Gamma)| \geq  \frac{1}{2} \sum_{e \in b} |C(e \cap \Gamma)| + \frac{1}{2} \sum_{e \in b'} |C(e \cap \Gamma)| \geq 2\cdot \frac{1}{2} \comprep(G,\Sigma),
\end{equation*}
where the last inequality comes from the fact that $b$ and $b'$ are compressible and the definition of $\comprep(G,\Sigma)$. This last inequality concludes the proof.
\end{proof}

This proposition directly implies Proposition~\ref{prop_bubble_tangle}, and thus Theorem~\ref{T:representativity}:

\begin{proof}[Proof of Proposition~\ref{prop_bubble_tangle}]
  A compression bubble tangle immediately satisfies the bubble tangle axioms \ref{def_T1} and \ref{def_T2} by definition, and \ref{def_T4} by Lemma~\ref{lem_T4}. For the axiom \ref{def_T3}, assume by contradiction that there exist three closed balls $B_1,B_2,B_3 \in \cT$ covering $\Sp^3$ and inducing a double bubble transverse to $\Sigma$. By Lemma~\ref{lem_Sigma_disks}, we can assume the graph $\Gamma$ induced by the intersection of the double bubble with $\Sigma$ is cellularly embedded. Then by Proposition \ref{prop_1cur_compress}, the total weight of $\Gamma$ is at least $\comprep(G,\Sigma)$. This is a contradiction with Lemma~\ref{lem_contrad_compress}.
\end{proof}

\section{Examples}\label{S:examples}

A torus knot $T_{p,q}$ is a knot embedded on a surface of an unknotted torus $\To$ in $\mathbb{S}^3$, for example a standard torus of revolution. It winds $p$ times around the revolution axis, and $q$ times around the core of the torus. We refer to Figure~\ref{F:torusknot} for an illustration of $T_{6,5}$. We can now combine our results to lower bound the treewidth of any diagram of a torus knot, thus proving Corollary~\ref{C:torusknots}.

\begin{proof}[Proof of Corollary~\ref{C:torusknots}]
Combining Proposition~\ref{P:swtw}, Theorem~\ref{T:duality} and Theorem~\ref{T:representativity}, it suffices to lower bound the compression-representativity of $T_{p,q}$ by $\min(p,q)$. The torus $\To$ has two curves which obviously bound compression disks, we prove that there are no other compressible curves. In $\mathbb{S}^3$, the torus $\To$ bounds one solid torus $\mathbb{D}^2 \times \mathbb{S}^1$ on each side, and in a solid torus $S$, exactly one of the homotopy classes of simple closed curves on the boundary is compressible, namely the one in the kernel of the inclusion map $i_*: \pi_1(\To)=\mathbb{Z}^2 \rightarrow \pi_1(S)=\mathbb{Z}$. The result follows by observing that $T_{p,q}$ intersects $p$ times the first of these homotopy classes and $q$ times the other one, and thus has compression representativity $\min(p,q)$.
\end{proof}

More generally, the same argument can be applied to lower bound the treewidth of the $(p,q)$-cabling~\cite[Section~5.2]{knotbook} of any nontrivial knot. We refer to Ozawa~\cite[Theorem~6]{ozawa} for examples of spatial embeddings of any graph with high compression representativity, and thus high spherewidth.

We conclude by observing that the proof of Theorem~\ref{T:representativity} offers more flexibility than what the theorem states and can also be applied in some settings where the compression-representativity is low. For example, a connected sum of two knots $K_1 \# K_2$ has compression-representativity two (see~\cite[Corollary~9]{ozawa}), but if one these two knots, say $K_1$, has high compression-representativity separately, then we can still define a bubble-tangle of high order by considering as big sides the balls containing the surface that $K_1$ is embedded on.

\subparagraph*{Acknowledgements.} We would like to thank Pierre Dehornoy and Saul Schleimer for helpful discussions, and the anonymous reviewers for their feedback which allowed us to significantly improve the paper.

\bibliographystyle{plainurl}
\bibliography{biblio}

\appendix

\section{Covering a torus with three disks from a double bubble}\label{A:torus}

\begin{figure}[ht]
\begin{center}
%\tikzsetnextfilename{torus_3_disks}
\begin{tikzpicture}[scale = 0.8]
\filldraw [fill = purple!90! , draw = black] (-4,-3) rectangle (4,3);
\foreach \i in {0,1,...,5}
{
	\coordinate (n\i) at (60*\i: 2);
}
\filldraw [fill = green!60!purple , draw = black] (n0) -- (n1) -- (n2) -- (n3) -- (n4) -- (n5) -- cycle;

\filldraw [fill = blue!60! , draw = black] (n0) -- (4,0) -- (4,-2) -- (n5) -- cycle;
\filldraw [fill = blue!60! , draw = black] (n3) -- (-4,0) -- (-4,-2) .. controls +(1.5,-0.25) and \control{(-2,-3)}{(-0.5,0.5)} -- (-0.5,-3) -- (n4) -- cycle;
\filldraw [fill = blue!60! , draw = black] (n2) -- (-2,3) -- (-0.5,3) -- (n1) -- cycle;

\end{tikzpicture}
\end{center}
\caption{Covering a torus with three disks from a double bubble.}
\label{pic_torus_3_disks}
\end{figure}
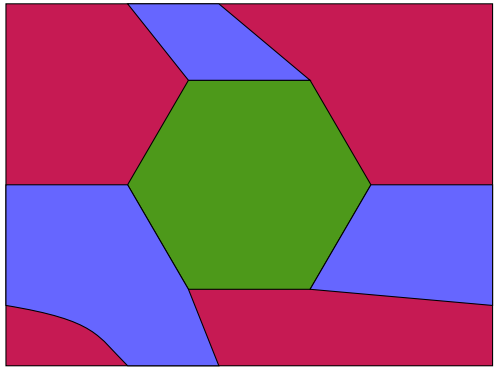

A torus can be covered by a double bubble which induces three disks on the torus, as pictured in Figure~\ref{pic_torus_3_disks}. This double bubble is described in Figures~\ref{pic_double_bubble_example_1}, \ref{pic_double_bubble_example_2}, \ref{pic_double_bubble_example_3}, \ref{pic_double_bubble_example_4} and one can prove that this is up to homeomorphism the only double bubble realizing such a cover. 

\begin{figure}[ht]
\begin{center}
%\tikzsetnextfilename{double_bubble_example_1}
\begin{tikzpicture}[scale = 0.75]
\fill [fill opacity = 0.2, blue] (0,0) -- (6,-1) -- (6,1) -- (0,2) --cycle;
\filldraw [fill opacity = 0.2, blue] (6,-1) -- (7.5,0) -- (7.5,2) -- (6,1);
\filldraw [fill opacity = 0.2, blue] ($(7.5,0)+(0.5,0.333)$) -- (9,1) -- (9,3) -- ($(7.5,2)+(0.5,0.333)$);
\fill [opacity = 0.2, blue] (0,0) -- (6,-1) -- (7.5,0) -- ++(-1.25,0.208) -- ++(0.5,0.333) -- ++(1.25,-0.208) -- (9,1) -- (3,2) --cycle;
\fill [blue, opacity = 0.2] (7.5,0) -- ++(-1.25,0.208) -- ++(0,2) -- ++(1.25,-0.208) -- cycle;
\fill [blue, opacity = 0.2, xshift =0.5cm, yshift=0.333cm] (7.5,0) -- ++(-1.25,0.208) -- ++(0,2) -- ++(1.25,-0.208) -- cycle;
\fill [blue, opacity = 0.2] (7.5,0)++(-1.25,0.208) -- ++(0.5,0.333) -- ++(0,2) -- ++(-0.5,-0.333) --cycle;
\draw [blue] (7.5,0) -- ++(-1.25,0.208) -- ++(0.5,0.333) -- ++(1.25,-0.208);
\draw [blue] (7.5,0) ++(-1.25,0.208) -- ++(0,2);
\draw [blue, xshift =0.5cm, yshift=0.333cm] (7.5,0) ++(-1.25,0.208) -- ++(0,2);
\draw [blue, xshift =0.5cm, yshift=0.333cm] (7.5,0) -- ++(0,2);
\draw [blue] (0,0) -- (6,-1) -- (6,1);
\draw [blue] (0,0) -- (0,2);

\fill [fill opacity = 0.2, blue] (0,0) -- (3,2) -- (3,4) -- (0,2) --cycle;
\fill [fill opacity = 0.2, blue] (3,2) -- (9,1) -- (9,3) -- (3,4) --cycle;
\filldraw [fill opacity = 0.2, blue, yshift = 2cm] (0,0) -- (6,-1) -- (7.5,0) -- ++(-1.25,0.208) -- ++(0.5,0.333) -- ++(1.25,-0.208) -- (9,1) -- (3,2) --cycle;
\draw [blue, dotted] (0,0) -- (3,2) -- (3,4);
\draw [dotted, blue] (3,2) --(9,1);

\begin{scope}[xshift = 9.5cm]
\begin{scope}[red, opacity = 0.2]
\fill (7.5,0.75) ++(-1, 0.166) ++(0,0.75) -- ++(0.5,0.333) -- ++(1,-0.166) -- ++(-0.5,-0.333) -- cycle;
\fill (7.5,0.75) ++(-1, 0.166) -- ++(0.5,0.333) -- ++(0,0.75) -- ++(-0.5,-0.333) -- cycle;
\fill (7.5,0.75) ++(-1, 0.166) -- ++(1,-0.166) -- ++(0.5,0.333) -- ++(-1,0.166) -- cycle;
\filldraw (7.5, 0.75) ++(0.5, 0.333) ++(0, 0.75) -- ($(6,1) +(2,1.333)$) -- ++(-0.5, -0.333) -- ++(0,-0.2) -- (6,0.8) -- (6,0.5) --cycle;
\filldraw (7.5, 0.75) ++(0,0.75) -- ++(0,0.3) -- (6,0.8) -- (6,0.5) --cycle;
\filldraw [red, opacity = 0.4] (6,0.5) -- (6,0.8) -- (0,1.8) -- (0,1.5) --cycle;
\filldraw [red, opacity = 0.4] (0,1.5) -- (3,3.5) -- (3,3.8) -- (0,1.8) --cycle;
\filldraw [red, opacity = 0.4] (3,3.5) -- (9, 2.5) -- (9,2.8) -- (3,3.8) --cycle;
\filldraw [red, opacity = 0.4] (9, 2.5) -- (8,1.3) -- ++(0,0.3) -- (9,2.8) --cycle;
\filldraw (7.5,1.5) -- ++(-1,0.166) -- ++(0,-0.75) -- (7.5,0.75) -- (7.5,0) -- ++(-1.25,0.208) -- ($(7.5,2)+(-1.25,0.208)$) --(7.5,2) --cycle;
\filldraw [xshift = 0.5cm, yshift = 0.333 cm] (7.5,1.5) -- ++(-1,0.166) -- ++(0,-0.75) -- (7.5,0.75) -- (7.5,0) -- ++(-1.25,0.208) -- ($(7.5,2)+(-1.25,0.208)$) --(7.5,2) --cycle;
\filldraw (7.5,0) -- ++(0.5,0.333) -- ++(-1.25,0.208) -- ++(0,2) -- ++(-0.5,-0.333) -- ++(0,-2) -- ++(1.25,-0.208) -- cycle; 
\filldraw (7.5,2) -- ++(-1.25,0.208) -- ++(0.5,0.333) -- ++(1.25,-0.208) -- cycle;
\filldraw (7.5,0) -- ++(0.5,0.333) -- ($(7.5,0.75) +(0.5,0.333)$) -- (7.5,0.75) -- cycle;
\end{scope}
\end{scope}
\end{tikzpicture}
\caption{The spheres $S_1$ (blue) and $S_2$ (red) up to thickening the ribbon part.}
\label{pic_double_bubble_example_1}
\end{center}
\end{figure}
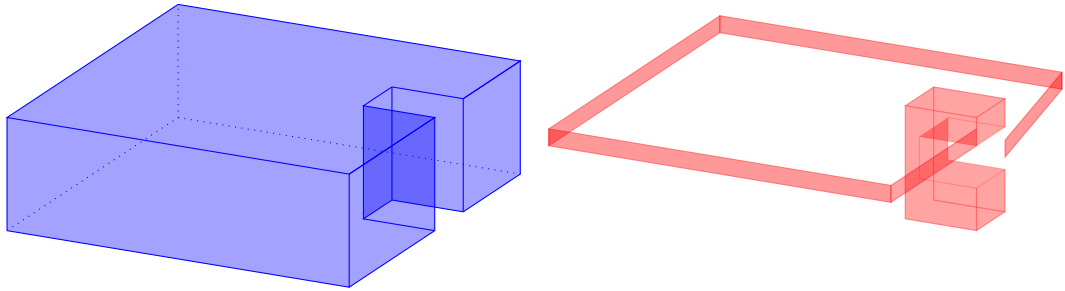

\begin{figure}[ht]
\begin{center}
%\tikzsetnextfilename{double_bubble_example_2}
\begin{tikzpicture}[scale = 0.75]
\filldraw [fill opacity = 0.2, blue!50!red!50!] (0,0) -- (6,-1) -- (6,1) -- (0,2) --cycle;
\filldraw [fill opacity = 0.2, blue!50!red!50!] (6,-1) -- (9,1) -- (9,3) -- (6,1);
\fill [fill opacity = 0.2, blue!50!red!50!] (0,0) -- (6,-1) -- (9,1) -- (3,2) --cycle;
\fill [fill opacity = 0.2, blue!50!red!50!] (0,0) -- (3,2) -- (3,4) -- (0,2) --cycle;
\fill [fill opacity = 0.2, blue!50!red!50!] (3,2) -- (9,1) -- (9,3) -- (3,4) --cycle;
\filldraw [fill opacity = 0.2, blue!50!red!50!] (0,2) -- (6,1) -- (9,3) -- (3,4) --cycle;
\draw [blue!50!red!50!, dotted] (0,0) -- (3,2) -- (3,4);
\draw [dotted, blue!50!red!50!] (3,2) --(9,1);

%Encoche
\fill [fill = white, opacity = 0.05] (7.5, 0.75) -- ++(0.5, 0.333) -- ++(0, 0.75) -- ++(-0.5, -0.333) -- cycle;
\fill [blue!50!red!50!, opacity = 0.2] (7.5,0.75) -- ++(-1, 0.166) -- ++(0,0.75) -- ++(1,-0.166) -- cycle;
\fill [blue!50!red!50!, opacity = 0.2] (7.5,0.75) ++(-1, 0.166) ++(0,0.75) -- ++(0.5,0.333) -- ++(1,-0.166) -- ++(-0.5,-0.333) -- cycle;
\fill [blue!50!red!50!, opacity = 0.2] (7.5,0.75) ++(-1, 0.166) -- ++(0.5,0.333) -- ++(0,0.75) -- ++(-0.5,-0.333) -- cycle;
\fill [blue!50!red!50!, opacity = 0.2] (7.5,0.75) ++(-1, 0.166) -- ++(1,-0.166) -- ++(0.5,0.333) -- ++(-1,0.166) -- cycle;
\fill [blue!50!red!50!, opacity = 0.2] (7.5,0.75) ++(0.5, 0.333) -- ++(-1, 0.166) -- ++(0,0.75) -- ++(1,-0.166) -- cycle;

\draw [blue!50!red!50!] (7.5, 0.75) -- ++(0.5, 0.333) -- ++(0, 0.75) -- ++(-0.5, -0.333) -- cycle;
\draw [blue!50!red!50!, dotted, thick] (7.5,0.75) -- ++(-1, 0.166) -- ++(0,0.75) -- ++(1,-0.166);
\draw [blue!50!red!50!, dotted, thick] (7.5,0.75) ++(-1, 0.166) ++(0,0.75) -- ++(0.5,0.333) -- ++(1,-0.166);
\draw [blue!50!red!50!, dotted, thick] (7.5,0.75) ++(-1, 0.166) -- ++(0.5,0.333) -- ++(1,-0.166);
\draw [blue!50!red!50!, dotted, thick] (7.5,0.75) ++(-1, 0.166) ++(0.5,0.333) -- ++(0,0.75);

\begin{scope}[xshift = 9.5cm]
\filldraw [fill opacity = 0.2, blue] (0,0) -- (6,-1) -- (6,1) -- (0,2) --cycle;
\filldraw [fill opacity = 0.2, blue] (6,-1) -- (9,1) -- (9,3) -- (6,1);
\fill [fill opacity = 0.2, blue] (0,0) -- (6,-1) -- (9,1) -- (3,2) --cycle;
\fill [fill opacity = 0.2, blue] (0,0) -- (3,2) -- (3,4) -- (0,2) --cycle;
\fill [fill opacity = 0.2, blue] (3,2) -- (9,1) -- (9,3) -- (3,4) --cycle;
\filldraw [fill opacity = 0.2, blue] (0,2) -- (6,1) -- (9,3) -- (3,4) --cycle;
\draw [blue, dotted] (0,0) -- (3,2) -- (3,4);
\draw [dotted, blue] (3,2) --(9,1);

%Encoche
\fill [fill = white, opacity = 0.05] (7.5, 0.75) -- ++(0.5, 0.333) -- ++(0, 0.75) -- ++(-0.5, -0.333) -- cycle;
\fill [blue, opacity = 0.2] (7.5,0.75) -- ++(-1, 0.166) -- ++(0,0.75) -- ++(1,-0.166) -- cycle;
\fill [blue!50!red!50!, opacity = 0.2] (7.5,0.75) ++(-1, 0.166) ++(0,0.75) -- ++(0.5,0.333) -- ++(1,-0.166) -- ++(-0.5,-0.333) -- cycle;
\fill [blue!50!red!50!, opacity = 0.2] (7.5,0.75) ++(-1, 0.166) -- ++(0.5,0.333) -- ++(0,0.75) -- ++(-0.5,-0.333) -- cycle;
\fill [blue!50!red!50!, opacity = 0.2] (7.5,0.75) ++(-1, 0.166) -- ++(1,-0.166) -- ++(0.5,0.333) -- ++(-1,0.166) -- cycle;
\fill [blue, opacity = 0.2] (7.5,0.75) ++(0.5, 0.333) -- ++(-1, 0.166) -- ++(0,0.75) -- ++(1,-0.166) -- cycle;

\draw [blue] (7.5, 0.75) -- ++(0.5, 0.333) -- ++(0, 0.75) -- ++(-0.5, -0.333) -- cycle;
\draw [blue, dotted, thick] (7.5,0.75) -- ++(-1, 0.166) -- ++(0,0.75) -- ++(1,-0.166);
\draw [blue, dotted, thick] (7.5,0.75) ++(-1, 0.166) ++(0,0.75) -- ++(0.5,0.333) -- ++(1,-0.166);
\draw [blue, dotted, thick] (7.5,0.75) ++(-1, 0.166) -- ++(0.5,0.333) -- ++(1,-0.166);
\draw [blue, dotted, thick] (7.5,0.75) ++(-1, 0.166) ++(0.5,0.333) -- ++(0,0.75);

\begin{scope}[red, opacity = 0.2]
\fill (7.5,0.75) ++(-1, 0.166) ++(0,0.75) -- ++(0.5,0.333) -- ++(1,-0.166) -- ++(-0.5,-0.333) -- cycle;
\fill (7.5,0.75) ++(-1, 0.166) -- ++(0.5,0.333) -- ++(0,0.75) -- ++(-0.5,-0.333) -- cycle;
\fill (7.5,0.75) ++(-1, 0.166) -- ++(1,-0.166) -- ++(0.5,0.333) -- ++(-1,0.166) -- cycle;
\filldraw (7.5, 0.75) ++(0.5, 0.333) ++(0, 0.75) -- ($(6,1) +(2,1.333)$) -- ++(-0.5, -0.333) -- ++(0,-0.2) -- (6,0.8) -- (6,0.5) --cycle;
\filldraw (7.5, 0.75) ++(0,0.75) -- ++(0,0.3) -- (6,0.8) -- (6,0.5) --cycle;
\filldraw [red, opacity = 0.4] (6,0.5) -- (6,0.8) -- (0,1.8) -- (0,1.5) --cycle;
\filldraw [red, opacity = 0.4] (0,1.5) -- (3,3.5) -- (3,3.8) -- (0,1.8) --cycle;
\filldraw [red, opacity = 0.4] (3,3.5) -- (9, 2.5) -- (9,2.8) -- (3,3.8) --cycle;
\filldraw [red, opacity = 0.4] (9, 2.5) -- (8,1.3) -- ++(0,0.3) -- (9,2.8) --cycle;
\filldraw (7.5,1.5) -- ++(-1,0.166) -- ++(0,-0.75) -- (7.5,0.75) -- (7.5,0) -- ++(-1.25,0.208) -- ($(7.5,2)+(-1.25,0.208)$) --(7.5,2) --cycle;
\filldraw [xshift = 0.5cm, yshift = 0.333 cm] (7.5,1.5) -- ++(-1,0.166) -- ++(0,-0.75) -- (7.5,0.75) -- (7.5,0) -- ++(-1.25,0.208) -- ($(7.5,2)+(-1.25,0.208)$) --(7.5,2) --cycle;
\filldraw (7.5,0) -- ++(0.5,0.333) -- ++(-1.25,0.208) -- ++(0,2) -- ++(-0.5,-0.333) -- ++(0,-2) -- ++(1.25,-0.208) -- cycle; 
\filldraw (7.5,2) -- ++(-1.25,0.208) -- ++(0.5,0.333) -- ++(1.25,-0.208) -- cycle;
\filldraw (7.5,0) -- ++(0.5,0.333) -- ($(7.5,0.75) +(0.5,0.333)$) -- (7.5,0.75) -- cycle;
\end{scope}
\end{scope}
\end{tikzpicture}
\caption{The ball $B_1 \cup B_2$ in purple, $B_3$ is the complementary.}
\label{pic_double_bubble_example_2}
\end{center}
\end{figure}
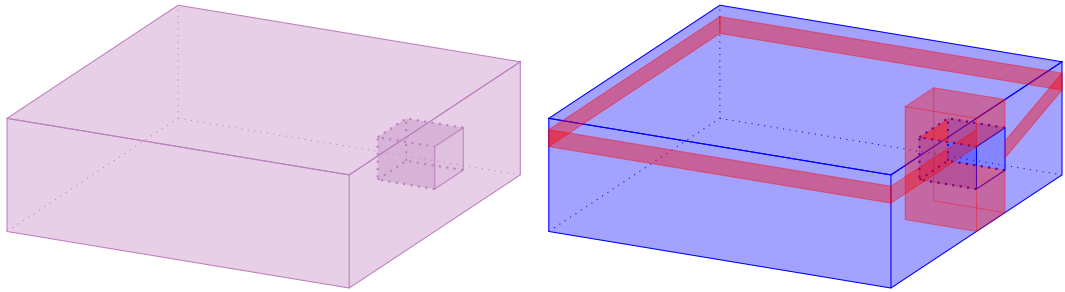

\begin{figure}[ht]
\begin{center}
%\tikzsetnextfilename{double_bubble_example_3}
\begin{tikzpicture}[scale = 0.75]
\filldraw [fill opacity = 0.2, blue] (0,0) -- (6,-1) -- (6,1) -- (0,2) --cycle;
\filldraw [fill opacity = 0.2, blue] (6,-1) -- (9,1) -- (9,3) -- (6,1);
\fill [fill opacity = 0.2, blue] (0,0) -- (6,-1) -- (9,1) -- (3,2) --cycle;
\fill [fill opacity = 0.2, blue] (0,0) -- (3,2) -- (3,4) -- (0,2) --cycle;
\fill [fill opacity = 0.2, blue] (3,2) -- (9,1) -- (9,3) -- (3,4) --cycle;
\filldraw [fill opacity = 0.2, blue] (0,2) -- (6,1) -- (9,3) -- (3,4) --cycle;
\draw [blue, dotted] (0,0) -- (3,2) -- (3,4);
\draw [dotted, blue] (3,2) --(9,1);

%Encoche
\fill [fill = white, opacity = 0.05] (7.5, 0.75) -- ++(0.5, 0.333) -- ++(0, 0.75) -- ++(-0.5, -0.333) -- cycle;
\fill [blue, opacity = 0.2] (7.5,0.75) -- ++(-1, 0.166) -- ++(0,0.75) -- ++(1,-0.166) -- cycle;
\fill [blue!50!red!50!, opacity = 0.2] (7.5,0.75) ++(-1, 0.166) ++(0,0.75) -- ++(0.5,0.333) -- ++(1,-0.166) -- ++(-0.5,-0.333) -- cycle;
\fill [blue!50!red!50!, opacity = 0.2] (7.5,0.75) ++(-1, 0.166) -- ++(0.5,0.333) -- ++(0,0.75) -- ++(-0.5,-0.333) -- cycle;
\fill [blue!50!red!50!, opacity = 0.2] (7.5,0.75) ++(-1, 0.166) -- ++(1,-0.166) -- ++(0.5,0.333) -- ++(-1,0.166) -- cycle;
\fill [blue, opacity = 0.2] (7.5,0.75) ++(0.5, 0.333) -- ++(-1, 0.166) -- ++(0,0.75) -- ++(1,-0.166) -- cycle;

\draw [blue] (7.5, 0.75) -- ++(0.5, 0.333) -- ++(0, 0.75) -- ++(-0.5, -0.333) -- cycle;
\draw [blue, dotted, thick] (7.5,0.75) -- ++(-1, 0.166) -- ++(0,0.75) -- ++(1,-0.166);
\draw [blue, dotted, thick] (7.5,0.75) ++(-1, 0.166) ++(0,0.75) -- ++(0.5,0.333) -- ++(1,-0.166);
\draw [blue, dotted, thick] (7.5,0.75) ++(-1, 0.166) -- ++(0.5,0.333) -- ++(1,-0.166);
\draw [blue, dotted, thick] (7.5,0.75) ++(-1, 0.166) ++(0.5,0.333) -- ++(0,0.75);

%Torus

\def\eps{0.1}
\begin{scope}[thick, yshift = 0cm]
\draw ($(0,0) +(\eps,\eps)$) -- ($(6,-1) +(-\eps, \eps)$) -- ($(6,1) +(-\eps, -\eps)$) -- ($(0,2) +(\eps,-\eps)$) --cycle;
\draw ($(6,-1) +(-\eps, \eps)$) -- ($(9,1) +(-\eps, \eps)$) -- ($(9,3) +(-\eps, -\eps)$) -- ($(6,1) +(-\eps, -\eps)$);
\draw ($(0,2) +(\eps, -\eps)$) -- ($(3,4) +(\eps, -\eps)$) --  ($(9,3) +(-\eps, -\eps)$);
\draw ($(0,0) +(\eps, \eps)$) -- ($(3,2) +(\eps, \eps)$) -- ($(3,4) +(\eps, -\eps)$);
\draw ($(3,2) +(\eps, \eps)$) -- ($(9,1) +(-\eps, \eps)$);
\end{scope}

\begin{scope}[thick, yshift = 0cm]
\draw ($(1,-0.166) +(\eps,\eps) +(0.5,0.333)$) -- ($(5,-0.833) +(-\eps, \eps) +(0.5,0.333)$) -- ($(5,1.166) +(-\eps, -\eps) +(0.5,0.333)$) -- ($(1,1.833) +(\eps,-\eps) +(0.5,0.333)$) --cycle;
\draw ($(5,-0.833) +(-\eps, \eps) +(0.5,0.333)$) -- ($(8,1.166) +(-\eps, \eps) +(-0.5,-0.333)$) -- ($(8,3.166) +(-\eps, -\eps) +(-0.5,-0.333)$) -- ($(5,1.166) +(0.5,0.333) +(-\eps, -\eps)$);
\draw ($(1,1.833) +(\eps, -\eps) +(0.5,0.333)$) -- ($(4,3.833) +(\eps, -\eps) +(-0.5,-0.333)$) --  ($(8,3.166) +(-\eps, -\eps) +(-0.5,-0.333)$);
\draw ($(1,-0.166) +(0.5,0.333) +(\eps, \eps)$) -- ($(4,1.833) +(-0.5,-0.333) +(\eps, \eps)$) -- ($(4,3.833) +(-0.5,-0.333) +(\eps, -\eps)$);
\draw ($(4,1.833) +(-0.5,-0.333) +(\eps, \eps)$) -- ($(8,1.166) +(-0.5,-0.333) +(-\eps, \eps)$);
\end{scope}

%Torus

\def\eps{0.1}
\begin{scope}[thick, xshift=9.5cm]
\draw ($(0,0) +(\eps,\eps)$) -- ($(6,-1) +(-\eps, \eps)$) -- ($(6,1) +(-\eps, -\eps)$) -- ($(0,2) +(\eps,-\eps)$) --cycle;
\draw ($(6,-1) +(-\eps, \eps)$) -- ($(9,1) +(-\eps, \eps)$) -- ($(9,3) +(-\eps, -\eps)$) -- ($(6,1) +(-\eps, -\eps)$);
\draw ($(0,2) +(\eps, -\eps)$) -- ($(3,4) +(\eps, -\eps)$) --  ($(9,3) +(-\eps, -\eps)$);
\draw ($(0,0) +(\eps, \eps)$) -- ($(3,2) +(\eps, \eps)$) -- ($(3,4) +(\eps, -\eps)$);
\draw ($(3,2) +(\eps, \eps)$) -- ($(9,1) +(-\eps, \eps)$);

\draw ($(1,-0.166) +(\eps,\eps) +(0.5,0.333)$) -- ($(5,-0.833) +(-\eps, \eps) +(0.5,0.333)$) -- ($(5,1.166) +(-\eps, -\eps) +(0.5,0.333)$) -- ($(1,1.833) +(\eps,-\eps) +(0.5,0.333)$) --cycle;
\draw ($(5,-0.833) +(-\eps, \eps) +(0.5,0.333)$) -- ($(8,1.166) +(-\eps, \eps) +(-0.5,-0.333)$) -- ($(8,3.166) +(-\eps, -\eps) +(-0.5,-0.333)$) -- ($(5,1.166) +(0.5,0.333) +(-\eps, -\eps)$);
\draw ($(1,1.833) +(\eps, -\eps) +(0.5,0.333)$) -- ($(4,3.833) +(\eps, -\eps) +(-0.5,-0.333)$) --  ($(8,3.166) +(-\eps, -\eps) +(-0.5,-0.333)$);
\draw ($(1,-0.166) +(0.5,0.333) +(\eps, \eps)$) -- ($(4,1.833) +(-0.5,-0.333) +(\eps, \eps)$) -- ($(4,3.833) +(-0.5,-0.333) +(\eps, -\eps)$);
\draw ($(4,1.833) +(-0.5,-0.333) +(\eps, \eps)$) -- ($(8,1.166) +(-0.5,-0.333) +(-\eps, \eps)$);
\end{scope}

\begin{scope}[red, opacity = 0.2]
\fill (7.5,0.75) ++(-1, 0.166) ++(0,0.75) -- ++(0.5,0.333) -- ++(1,-0.166) -- ++(-0.5,-0.333) -- cycle;
\fill (7.5,0.75) ++(-1, 0.166) -- ++(0.5,0.333) -- ++(0,0.75) -- ++(-0.5,-0.333) -- cycle;
\fill (7.5,0.75) ++(-1, 0.166) -- ++(1,-0.166) -- ++(0.5,0.333) -- ++(-1,0.166) -- cycle;
\filldraw (7.5, 0.75) ++(0.5, 0.333) ++(0, 0.75) -- ($(6,1) +(2,1.333)$) -- ++(-0.5, -0.333) -- ++(0,-0.2) -- (6,0.8) -- (6,0.5) --cycle;
\filldraw (7.5, 0.75) ++(0,0.75) -- ++(0,0.3) -- (6,0.8) -- (6,0.5) --cycle;
\filldraw [red, opacity = 0.4] (6,0.5) -- (6,0.8) -- (0,1.8) -- (0,1.5) --cycle;
\filldraw [red, opacity = 0.4] (0,1.5) -- (3,3.5) -- (3,3.8) -- (0,1.8) --cycle;
\filldraw [red, opacity = 0.4] (3,3.5) -- (9, 2.5) -- (9,2.8) -- (3,3.8) --cycle;
\filldraw [red, opacity = 0.4] (9, 2.5) -- (8,1.3) -- ++(0,0.3) -- (9,2.8) --cycle;
\filldraw (7.5,1.5) -- ++(-1,0.166) -- ++(0,-0.75) -- (7.5,0.75) -- (7.5,0) -- ++(-1.25,0.208) -- ($(7.5,2)+(-1.25,0.208)$) --(7.5,2) --cycle;
\filldraw [xshift = 0.5cm, yshift = 0.333 cm] (7.5,1.5) -- ++(-1,0.166) -- ++(0,-0.75) -- (7.5,0.75) -- (7.5,0) -- ++(-1.25,0.208) -- ($(7.5,2)+(-1.25,0.208)$) --(7.5,2) --cycle;
\filldraw (7.5,0) -- ++(0.5,0.333) -- ++(-1.25,0.208) -- ++(0,2) -- ++(-0.5,-0.333) -- ++(0,-2) -- ++(1.25,-0.208) -- cycle; 
\filldraw (7.5,2) -- ++(-1.25,0.208) -- ++(0.5,0.333) -- ++(1.25,-0.208) -- cycle;
\filldraw (7.5,0) -- ++(0.5,0.333) -- ($(7.5,0.75) +(0.5,0.333)$) -- (7.5,0.75) -- cycle;
\end{scope}

\begin{scope}[xshift = 5.5cm, yshift = -5.5cm]
\fill [purple!75!] (0,0) rectangle (6,4);
\fill[blue!60] (0,3) rectangle (4.5,4);
\fill[blue!60] (0,0) rectangle (4.5,2.5);
\fill[blue!60] (5,0) -- (5,1.5) -- (5.5,2.5) --(6,2.5) -- (6,0) -- cycle;
\fill[blue!60] (5,2) -- (5,4) -- (6,4) --(6,3) -- (5.5,3) -- cycle;
\draw (0,0) rectangle (6,4);
\fill (4.5,2.5) rectangle (5,1);
\begin{scope}[thick]
\draw  (0,2.5) -- (4.5,2.5) -- (4.5,0);
\draw (0,3) --  (4.5,3) -- (4.5,4);
\draw (5,0) -- (5,1.5) -- (5.5,2.5) --(6,2.5);
\draw (6,3) -- (5.5,3)--(5,2) -- (5,4);
\end{scope}
\end{scope}

\end{tikzpicture}
\caption{The double bubble and a torus inside, covered by three disks (up to thickening the ribbon part of $S_2$ by pushing it into $S_1$).}
\label{pic_double_bubble_example_3}
\end{center}
\end{figure}
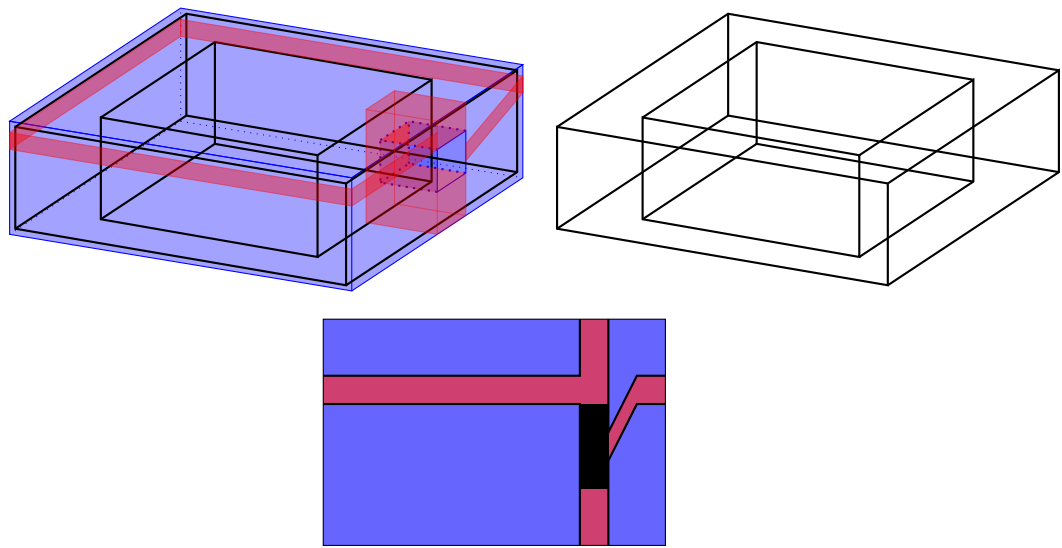

\begin{figure}[ht]
\begin{center}
%\tikzsetnextfilename{double_bubble_example_4}
\begin{tikzpicture}[scale = 0.75]
\filldraw [red, opacity = 0.2] (7.5, 0.75) ++(0,0.75) -- ++(0,0.3) -- (6,0.8) -- (6,0.5) --cycle;
\filldraw [red, opacity = 0.2] (6,0.5) -- (6,0.8) -- (0,1.8) -- (0,1.5) --cycle;
\filldraw [red, opacity = 0.2] (0,1.5) -- (3,3.5) -- (3,3.8) -- (0,1.8) --cycle;
\filldraw [red, opacity = 0.2] (3,3.5) -- (9, 2.5) -- (9,2.8) -- (3,3.8) --cycle;
\filldraw [red, opacity = 0.2] (9, 2.5) -- (8,1.3) -- ++(0,0.3) -- (9,2.8) --cycle;
\filldraw [red, opacity = 0.2] (7.5,1.5) -- ++(-1,0.166) -- ++(0,-0.75) -- (7.5,0.75) -- (7.5,0) -- ++(-1.25,0.208) -- ($(7.5,2)+(-1.25,0.208)$) --(7.5,2) --cycle;
\filldraw [xshift = 0.5cm, yshift = 0.333 cm, red, opacity = 0.2] (7.5,1.5) -- ++(-1,0.166) -- ++(0,-0.75) -- (7.5,0.75) -- (7.5,0) -- ++(-1.25,0.208) -- ($(7.5,2)+(-1.25,0.208)$) --(7.5,2) --cycle;
\filldraw [red, opacity = 0.2] ($(7.5,0)+(0.5,0.333)+(-1.25,0.208)$) -- ++(0,2) -- ++(-0.5,-0.333) -- ++(0,-2) -- cycle;
\draw [purple, thick] (7.5,2) -- ++(-1.25,0.208) -- ++(0.5,0.33) -- ++(1.25,-0.208) -- ++ (0,-0.5) -- ++(-1,0.166) -- ++ (0,-0.75) -- ++(1,-0.166) -- ++(0,-0.75) --++(-1.25,0.208) -- ++(-0.5,-0.33) -- ++(1.25,-0.208) -- ++(0,0.75) -- ++(-1,0.166) -- ++(0,0.75) -- ++(1,-0.166) -- (6,0.5) -- (0,1.5) -- (3,3.5) -- (9,2.5) -- (8,1.3) -- (8,1.6) -- (9,2.8) -- (3,3.8) -- (0,1.8) -- (6,0.8) -- (7.5,1.8) -- cycle;

\begin{scope}[xshift = 9.5cm]
\fill [red, opacity = 0.2] (7.5,0.75) ++(-1, 0.166) ++(0,0.75) -- ++(0.5,0.333) -- ++(1,-0.166) -- ++(-0.5,-0.333) -- cycle;
\fill [red, opacity = 0.2] (7.5,0.75) ++(-1, 0.166) -- ++(0.5,0.333) -- ++(0,0.75) -- ++(-0.5,-0.333) -- cycle;
\fill [red, opacity = 0.2] (7.5,0.75) ++(-1, 0.166) -- ++(1,-0.166) -- ++(0.5,0.333) -- ++(-1,0.166) -- cycle;
\filldraw [red, opacity = 0.2] (7.5, 0.75) ++(0.5, 0.333) ++(0, 0.75) -- ($(6,1) +(2,1.333)$) -- ++(-0.5, -0.333) -- ++(0,-0.2) -- (6,0.8) -- (6,0.5) --cycle;

\filldraw [red, opacity = 0.2] (6,0.5) -- (6,0.8) -- (0,1.8) -- (0,1.5) --cycle;
\filldraw [red, opacity = 0.2] (0,1.5) -- (3,3.5) -- (3,3.8) -- (0,1.8) --cycle;
\filldraw [red, opacity = 0.2] (3,3.5) -- (9, 2.5) -- (9,2.8) -- (3,3.8) --cycle;
\filldraw [red, opacity = 0.2] (9, 2.5) -- (8,1.3) -- ++(0,0.3) -- (9,2.8) --cycle;
\filldraw [red, opacity = 0.2] (7.5,0) -- ++(0.5,0.333) -- ++(-1.25,0.208) -- ++(-0.5,-0.333) -- cycle;
\filldraw [red, opacity = 0.2] (7.5,0) -- ++(0,0.75) -- ++(0.5,0.33) -- ++(0,-0.75) -- cycle;
\filldraw [red, opacity = 0.2] (7.5,2) -- ++(0.5,0.333) -- ++(-1.25,0.208) -- ++(-0.5,-0.333) -- cycle;
\draw [purple, thick] (7.5,2) -- ++(-1.25,0.208) -- ++(0.5,0.33) -- ++(1.25,-0.208) -- ++ (0,-0.5) -- ++(-1,0.166) -- ++ (0,-0.75) -- ++(1,-0.166) -- ++(0,-0.75) --++(-1.25,0.208) -- ++(-0.5,-0.33) -- ++(1.25,-0.208) -- ++(0,0.75) -- ++(-1,0.166) -- ++(0,0.75) -- ++(1,-0.166) -- (6,0.5) -- (0,1.5) -- (3,3.5) -- (9,2.5) -- (8,1.3) -- (8,1.6) -- (9,2.8) -- (3,3.8) -- (0,1.8) -- (6,0.8) -- (7.5,1.8) -- cycle;
\end{scope}
\end{tikzpicture}
\caption{The disks $S_1 \cap S_2$ and $S_3 \cap S_2$. }
\label{pic_double_bubble_example_4}
\end{center}
\end{figure}
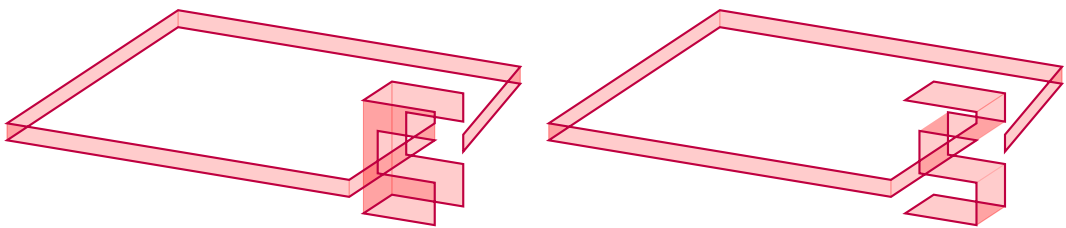

\end{document}